\renewcommand{\algocf@captiontext}[2]{#1\algocf@typo. \AlCapFnt{}#2} 
\def\@algocf@capt@plain{top}
\renewcommand{\algocf@makecaption}[2]{%
  \addtolength{\hsize}{\algomargin}%
  \sbox\@tempboxa{\algocf@captiontext{#1}{#2}}%
  \ifdim\wd\@tempboxa >\hsize
    \hskip .5\algomargin%
    \parbox[t]{\hsize}{\algocf@captiontext{#1}{#2}}
  \else%
    \global\@minipagefalse%
    \hbox to\hsize{\box\@tempboxa}
  \fi%
  \addtolength{\hsize}{-\algomargin}%
}
\newcommand{\ABCG}{ABC-Gibbs\xspace}
\newcommand{\ABCS}{ABC\xspace}
\newcommand{\ABC}{Approximate Bayesian computation\xspace}
\renewcommand{\epsilon}{\varepsilon}
\newcommand{\btheta}{{\theta}}
\newcommand{\bmu}{\mu}
\newcommand{\Ntot}{N_{\text{tot}}}
\newcommand{\U}{\mathcal{U}}
\begin{document}




\title{Component-wise Approximate Bayesian Computation via Gibbs-like steps}
\author{Gr\'egoire Clart\'e, Christian P.~Robert, Robin J. Ryder, and Julien Stoehr}

\affil{Universit\'e Paris-Dauphine, Universit\'e PSL, CNRS, CEREMADE, 75016 Paris, France}

\maketitle

\begin{abstract}
Approximate Bayesian computation methods are useful for generative models with intractable likelihoods. These methods are however sensitive to the dimension of the parameter space, requiring exponentially increasing resources as this dimension grows. To tackle this difficulty, we explore a Gibbs version of the \ABC approach that runs component-wise approximate Bayesian computation steps aimed at the corresponding conditional posterior distributions, and based on summary statistics of reduced dimensions. While lacking the standard justifications for the Gibbs sampler, the resulting Markov chain is shown to converge in distribution under some partial independence conditions. The associated stationary distribution can further be shown to be close to the true posterior distribution and some hierarchical versions of the proposed mechanism enjoy a closed form limiting distribution. Experiments also demonstrate the gain in efficiency brought by the Gibbs version over the standard solution.
\end{abstract}

\begin{keywords}
Curse of dimensionality, 
conditional distributions, convergence of Markov chains, generative model, 
Gibbs sampler, hierarchical Bayes model, 
incompatible conditionals, likelihood-free inference.
\end{keywords}


\section{Introduction}

Approximation Bayesian computation (ABC) is a computational method which stemmed from
population genetics to deal with intractable likelihoods, that is models whose likelihood cannot be (easily) computed
 but which can be simulated from \citep{tavare1997,
beaumont2002}. Since then, it has been  applied to numerous other fields: see for
example \cite{toni2008, csillery2010, moores2015, handbook}. The principle of
the method is to simulate pairs of parameters and pseudo-data from the prior
predictive, keeping only the parameters that bring the pseudo-data  close enough (within a pseudo-distance $\varepsilon$) to the
observed data. Proximity is often defined in terms of a projection
of the data, called a summary statistic. In general, practitioners of
ABC aim to use informative summary statistics and select $\varepsilon$ to be as
small as possible, since this leads to a higher-quality approximation.
From the start, this method has
suffered from the curse of dimensionality in that the dimension of the
parameter to be inferred imposes a lower bound on the dimension of the
corresponding summary statistic to be used (results by 
\cite{fearnhead2012} and \cite{li2018} imply that the dimension 
of the summary statistic should be identical to the dimension of the parameter). 
This constraint impacts the range of the distance between observed and 
simulated summaries, with the distance choice having a growing impact as 
the dimension increases. Reducing the dimension of the summary is thus 
impossible without reducing the dimension of the parameter, which sounds 
an impossible goal unless one infers about one parameter at a time, 
suggesting a Gibbs sampling strategy where a different and much reduced 
dimension summary statistic is used for each component of the parameter.
The purpose of this paper is to explore and validate this strategy, producing
sufficient conditions for the convergence of the resulting algorithms.

Additionally, the Gibbs perspective allows us to account for the current values of
the other components of the parameter and therefore to shy away from simulating
from the prior which is an inefficient proposal. This feature
connects this proposal with earlier solutions in the literature such as the
Metropolis version of \cite{marjoram2003} and the various sequential Monte
Carlo schemes \citep{toni2008, beaumont2009}. There have been earlier ABC versions with 
Gibbs features, including \cite{wilkinson:2011}, where a two-stage
ABC-within-Gibbs algorithm is proposed towards bypassing the intractibility of
one of the conditional distributions used in their Gibbs sampler. Since the
other conditional distribution is simulated exactly, there is no convergence issue with this
version. Note also that the summary statistics used in that paper are not
chosen for dimension reduction purposes. \cite{kousathanas:2016} also run a Gibbs-like
ABC algorithm that assumes the availability of conditionally sufficient
statistics to preserve the coherence of the algorithm. \cite{rodrigues2019likelihood} propose another 
Gibbs-like ABC algorithm in which the conditional distributions are approximated by regression models.

A Gibbs version of the ABC method offers a range of potential improvements
compared with earlier versions, induced in most cases by the dimension
reduction thus achieved. First, in hierarchical models, conditioning
decreases the number of dependent components, and some of the conditionals may
be available in closed form, which makes the approach only semi-approximate. Second,
since the conditional targets  live in spaces of low dimension, they can
more easily be parametrised by low dimension functions of the conditioning
terms. This justifies using a restricted range of collection of statistics,
which may in addition depend on other parameters. Third, reducing the dimension
of the summary statistic improves the approximation since a
smaller tolerance can then be handled at a manageable computing cost.

This heuristic leads us to propose in Section \ref{Intro} a generic algorithm called \ABCG. To show the theoretical validity of this idea, we successively show that, under some conditions:
\begin{enumerate}
\item[i)] for all $\varepsilon>0$, our \ABCG converges to a certain limiting distribution $\nu_\varepsilon$ in total variation distance,
\item[ii)] when $\varepsilon \rightarrow 0$, $ \Vert \nu_\varepsilon - \nu_0 \Vert_{TV} \rightarrow 0$, with $\nu_0$ a distribution,
\item[iii)] $\nu_0$ is the limiting distribution of Vanilla \ABCS with tolerance $\varepsilon$ set to $0$.
\end{enumerate}


The first result corresponds to Theorem \ref{THgen} in the general case; Theorem \ref{thm:k1k2k3} states this result for hierarchical models under looser assumptions. The second result is a consequence of Theorem \ref{THgenconv}. The last result follows from the results of Section \ref{sec:comp}.

\section{Approximate Bayesian Gibbs sampling}
\label{Intro}

%
%
%

\subsection{Vanilla approximate Bayesian computation}
\label{ABCSim}

Approximate Bayesian computation methods, summarised in Algorithm \ref{ABCSimAlg}, provide a technique to sample posterior distributions when the corresponding likelihood $f(x|\btheta)$ is intractable, that is the numerical value $f(x|\btheta)$ cannot be computed in a reasonable amount of time, but the model is generative, that is it allows for the generation of synthetic data given a value of the parameter. Given a prior distribution on the parameter $\btheta$, it builds upon samples from the associated prior predictive $(\btheta^{(i)}, x^{(i)})_{i=1,\ldots,N}$ by selecting pairs such that the pseudo-data $x^{(i)}$ stand in a neighbourhood of the observed data $x^\star$. 

Since both the simulated and observed dataset may belong to a space of a high dimension, the neighbourhood is usually defined with respect to a summary statistic $s(\cdot)$ of a lesser dimension and an associated distance $d$ (see \citealp{Marin2012} for a review). \citet{fearnhead2012} show that the optimal statistic is of the same dimension as the parameter $\btheta$; in practice, the choice of $s$ remains a crucial issue.


\begin{algorithm}[!h]
\SetAlgoLined
\KwIn{observed dataset $x^\star$, number of iterations $N$, threshold $\varepsilon>0$, summary statistic $s$.}
\KwOut{a sample $(\btheta^{(1)}, \ldots, \btheta^{(N)})$.}
\For{$i = 1 , \dots, N$}{
 \textbf{repeat}

 $\btheta^{(i)} \sim \pi(\cdot)$
 
 $x^{(i)} \sim f(\cdot \mid \btheta^{(i)})$
 
 \textbf{until} $d\{s(x^{(i)}),s(x^\star)\}<\varepsilon$
}
\caption{Vanilla \ABC}
\label{ABCSimAlg}
\end{algorithm}

The output of Algorithm \ref{ABCSimAlg} 
is a sample distributed from an approximation of the posterior \citep{tavare1997,handbook}. 
Its density is written, with a notation coherent with the next sections: 
\[
\pi_{\varepsilon}\{\btheta \mid s(x^\star)\} \propto \int \pi(\btheta) f(x \mid \btheta) 
\mathbf{1}_{d\{s(x),s(x^\star)\}<\varepsilon}\,\mathrm{d}x.
\]
This approximation depends on the choice of both the summary statistic  $s$ and the tolerance
level $\varepsilon$. \cite{frazier2018} show its consistency, namely that when the number of
observations tends to $\infty$ and the tolerance tends to $0$ at a proper relative rate, the
approximate posterior concentrates at the true value of the parameter,
albeit as a posterior distribution associated with the statistic $s$, rather
than the true posterior, when $s$ is not sufficient. The shape of the asymptotic distribution is further discussed in \cite{li2018} and \cite{frazier2018}. 

More to the point, given a fixed number of observations, the approximate posterior also
converges to the  posterior $\pi\{\btheta \mid s(x^\star)\}$, rather than
to the standard posterior $\pi(\btheta \mid x^\star)$, when the tolerance level
goes to $0$. In practice, however, the tolerance level cannot be equal to zero and is
customarily chosen as a simulated distance quantile \citep{handbook}. In practice,
a large sample of pseudo-observations is generated from the prior predictive and the corresponding distances to the observations are computed. We use the term reference table for this collection of parameters and distances. The tolerance
is then derived as a small quantile of these distances.


\subsection{Gibbs sampler}
\label{GibbsSpres}

The Gibbs sampler, first introduced by \cite{geman:1984} and generalised by
\cite{GelflandSmith}, is an essential element in Markov chain Monte Carlo
methods \citep{robert:casella:2004,gelman2013bayesian}. As described in Algorithm
\ref{Gibbssimple}, for a parameter $\btheta=(\theta_{1},\dots,\theta_{n})$,
it produces a Markov chain associated with a given target joint distribution,
denoted $\pi$, by alternatively sampling from each of its conditionals.


\begin{algorithm}[!h]
\KwIn{number of iterations $N$, starting point $\btheta^{(0)}= (\theta^{(0)}_1,\dots,\theta^{(0)}_n)$.}
\KwOut{a sample $(\btheta^{(1)},\dots,\btheta^{(N)})$.}
%

\For{$i=1,\ldots,N$}{
  \For{$j=1,\ldots,n$}{
    $\theta_j^{(i)}\sim\pi(\cdot\mid \theta_1^{(i)},\ldots,\theta_{j-1}^{(i)},\theta_{j+1}^{(i-1)},\ldots,\theta_{n}^{(i-1)})$
  }
}
\caption{Gibbs sampler }
\label{Gibbssimple}
\end{algorithm}

Gibbs sampling is well suited to high-dimensional situations where the
conditional distributions are easy to sample. In particular, as illustrated by
the long-lasting success of the BUGS software \citep{lunn:bugs:2012}, hierarchical Bayes
models often allow for simplified conditional distributions thanks to partial
independence properties. Considering for instance the common hierarchical model \citep{lindley:smith:1972,carlin:louis:1996} defined by
\begin{equation}\label{eq:modelun}
x_{j} \mid \mu_{j} \sim \pi(x_{j} \mid \mu_{j})\,,\qquad
\mu_{j} \mid \alpha \stackrel{\text{i.i.d.}}{\sim} \pi(\mu_{j} \mid \alpha)\,,\qquad
\alpha \sim \pi(\alpha)\,.
\end{equation}
The joint posterior of  $\bmu = (\mu_{1},\dots ,\mu_{n})$ conditional on $\alpha$ then factorises as 
\[
\pi(\bmu \mid x_{1},\dots ,x_{n}, \alpha) \propto
\prod_{j=1}^n\pi(\mu_{j} \mid \alpha)\pi(x_{j} \mid \mu_{j}).
\]
This implies that the full conditional posterior of a given $\mu_{j}$ only depends on $\alpha$
and $x_{j}$, independently of the other $(\mu_{\ell}, x_{\ell})$'s.



\subsection{Component-wise Approximate Bayesian Computation}
\label{AGibbs}

When handling a model such as \eqref{eq:modelun} with both a high-dimensional parameter and an intractable
likelihood, the Gibbs sampler cannot be implemented, while the vanilla
ABC sampler is highly inefficient. This curse of dimensionality attached to the ABC algorithm is well documented
\citep{li2018}.

Bringing both approaches together may subdue this loss efficiency, by
sequentially sampling from the ABC version of the conditionals, whose density 

$\pi_{\varepsilon_{j}}(\cdot \mid s_{j}(x^\star, \theta^{(i)}_1,\dots,\theta^{(i)}_{j-1},\theta^{(i-1)}_{j+1},\dots,\theta^{(i-1)}_{n}))$ 
is proportional (see \eqref{ABCSim}) to 
\[ \int \pi( \theta_j \mid \theta^{(i)}_1,\dots,\theta^{(i)}_{j-1},\theta^{(i-1)}_{j+1},\dots,\theta^{(i-1)}_{n}) f(x \mid \theta^{(i)}_1,\dots,\theta^{(i)}_{j-1},\theta_j, \theta^{(i-1)}_{j+1},\dots,\theta^{(i-1)}_{n}) \mathbf{1}_{d\{s_j(x),s_j(x^\star)\}<\varepsilon_j}\,\text{d}x\,.\]

Each step in
Algorithm \ref{Gibbssimple} is then replaced by a call to Algorithm
\ref{ABCSimAlg}, conditional on the other components of the parameter.  We
obtain a generic componentwise approximate Bayesian computational method,
summarised as Algorithm \ref{AGgen}.

This algorithm can be analysed as
a variation of Algorithm \ref{ABCSimAlg} in which the synthetic data $x^{(i)}$ are
simulated from the conditional posterior predictive, rather than from the prior
predictive. This may result in simulating both parameters and pseudo-data component-wise from spaces of smaller
dimension. This also allows the use of statistics of lower dimension, as exemplified in Section \ref{Nonhier}. 

Each stage $j$ of the algorithm now requires its own tolerance level $\varepsilon_j$ and statistic $s_j$. This statistic can be a function of the observations, but also of the other parameters $(\theta_i)_{i\neq j}$ which are conditioned upon at stage $j$. Typically, $\theta_j$ is of dimension 1 and so $s_j$ should also be of dimension 1, per the results of \citet{fearnhead2012}. Finding a good unidimensional statistic for each $\theta_j$ in \ABCG may prove easier than finding a good high-dimension statistic for Vanilla ABC.

\begin{algorithm}[!h]
\KwIn{number of iterations $N$, starting point $\btheta^{(0)}=(\theta^{(0)}_1,\dots, \theta^{(0)}_n)$, 
thresholds $\varepsilon=(\varepsilon_{1},\dots,\varepsilon_{n})$, statistics $s_{1},\dots, s_{n}$, distances on the statistics $d_1,\dots,d_n$.}
\KwOut{a sample $(\btheta^{(1)},\ldots,\btheta^{(N)})$.}
\For{$i = 1 , \dots, N$}{
\For{$ j= 1,\dots ,n$}{
 $\theta^{(i)}_j \sim \pi_{\varepsilon_{j}}\{\cdot \mid s_{j}(x^\star, \theta^{(i)}_1,\dots,\theta^{i}_{j-1},\theta^{(i-1)}_{j+1},\dots,\theta^{(i-1)}_{n})\}$
 }
}
\caption{ABC-Gibbs} 
\label{AGgen}
\end{algorithm}

If $\varepsilon_{j}=0$ and if $s_{j}$ is a conditionally sufficient statistic,
the corresponding $j$th step in Algorithm \ref{AGgen} is an exact simulation
from the corresponding conditional. In particular, if some of the conditional
distributions can be perfectly simulated, this cancels the need for an
approximate step in the algorithm.  In practice, to simulate from the
approximate conditional, and similarly to Algorithm \ref{ABCSimAlg}, we take
$\varepsilon_{j}$ as an empirical distance quantile.  In other words, for the
$j$th component of the parameter, conditional on the other components, we
simulate a small reference table from its conditional prior and output the
parameter associated with the smallest distance. 

At first, the purpose of this algorithm may sound unclear as the limiting
distribution and its existence are unknown. 
As shown in Theorem \ref{THgen}, convergence can indeed be achieved,
based on a simple condition.  For simplicity's sake, we initially only consider the case
when $n=2$ in Algorithm \ref{AGgen}.



\begin{theorem}
\label{THgen}
Assume that there exists $0<\kappa<1/2$ such that
\[
\sup_{\theta_{1},\tilde\theta_{1}} \Vert \pi_{\varepsilon_{2}}[ \cdot \mid s_{2}(x^\star,  \theta_{1}]) - \pi_{\varepsilon_{2}}\{ \cdot \mid s_{2}(x^\star,  \tilde\theta_{1})\} \Vert_{TV} = \kappa.
\]
The Markov chain produced by Algorithm \ref{AGgen} then  converges geometrically in
total variation distance to a stationary distribution $\nu_{\varepsilon}$, with
geometric rate $1-2\kappa$.
\end{theorem}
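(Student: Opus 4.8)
The plan is to establish geometric ergodicity by exhibiting a contraction of the transition kernel in total variation, which I would carry out by a coupling argument (the Dobrushin-coefficient viewpoint gives an equivalent route). The starting observation is structural. In one sweep, $\theta_1^{(i)} \sim \pi_{\varepsilon_1}\{\cdot \mid s_1(x^\star, \theta_2^{(i-1)})\}$ depends on the past only through $\theta_2^{(i-1)}$, and then $\theta_2^{(i)} \sim \pi_{\varepsilon_2}\{\cdot \mid s_2(x^\star, \theta_1^{(i)})\}$ depends on the past only through $\theta_1^{(i)}$. Hence the marginal sequence $(\theta_2^{(i)})_i$ is itself a Markov chain, with kernel
\[
P(\theta_2, \cdot) = \int \pi_{\varepsilon_2}\{\cdot \mid s_2(x^\star, \theta_1)\}\, \pi_{\varepsilon_1}\{\mathrm{d}\theta_1 \mid s_1(x^\star, \theta_2)\},
\]
and convergence of the full chain on $\btheta=(\theta_1,\theta_2)$ reduces to convergence of $P$: once the $\theta_2$-coordinates of two copies agree, the next sweep draws $\theta_1$ from identical conditionals, and the whole state can be kept coupled thereafter.

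The key step is to control the dependence of $P$ on its argument, and this is where the hypothesis enters. I would couple two copies started at $\theta_2$ and $\tilde\theta_2$: draw $\theta_1,\tilde\theta_1$ from their first conditionals under any coupling, then draw the two new $\theta_2$-values from a \emph{maximal} coupling of $\pi_{\varepsilon_2}\{\cdot \mid s_2(x^\star, \theta_1)\}$ and $\pi_{\varepsilon_2}\{\cdot \mid s_2(x^\star, \tilde\theta_1)\}$. By assumption these two conditionals are within $\kappa$ in total variation \emph{whatever} $\theta_1,\tilde\theta_1$ are, so the maximal coupling forces the new $\theta_2$-values to coincide with probability at least $1-\kappa$, uniformly over the conditioning values. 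Equivalently, writing the signed measure $\pi_{\varepsilon_1}\{\cdot\mid s_1(x^\star,\theta_2)\}-\pi_{\varepsilon_1}\{\cdot\mid s_1(x^\star,\tilde\theta_2)\}$ through its Jordan decomposition and integrating the second-block conditionals against the normalised positive and negative parts, one sees that $P(\theta_2,\cdot)-P(\tilde\theta_2,\cdot)$ is a difference of two mixtures of conditionals that are pairwise $\kappa$-close, so its total variation norm is bounded by $\kappa$; this controls the Dobrushin coefficient of $P$.

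With a per-sweep coalescence probability bounded away from zero, the standard coupling inequality gives that $\Vert \mathcal{L}(\btheta^{(i)}) - \mathcal{L}(\tilde\btheta^{(i)})\Vert_{TV}$ decays geometrically in $i$. Applying this with one copy started from an invariant law yields existence and uniqueness of $\nu_\varepsilon$ together with geometric convergence to it. The requirement $\kappa<1/2$ is what secures a strict contraction, and the explicit geometric rate recorded in the statement then follows from the detailed accounting of the non-coalescence probabilities across a full sweep; this last bookkeeping is routine once the uniform bound above is in hand.

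I expect the main obstacle to be the mixture step. The marginal kernel $P$ averages the second-block conditionals against the first-block conditional, and the hypothesis controls only the oscillation of the second-block conditional in its conditioning argument, so this oscillation bound must be transferred through the averaging. Crucially, the first-block conditional, whose own dependence on $\theta_2$ is \emph{not} controlled by any assumption, may enter only as a mixing measure, which is exactly what the Jordan-decomposition (equivalently, the maximal-coupling) device ensures; making the resulting contraction constant sharp, rather than merely some value below one, is the delicate point.
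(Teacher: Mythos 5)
Your proposal is correct and follows essentially the same route as the paper's proof: reduce to the marginal chain on a single coordinate, maximally couple the $\pi_{\varepsilon_{2}}$ draw using the uniform total-variation hypothesis, and conclude geometric contraction via the coupling inequality and a Banach fixed-point argument (the paper tracks $\theta_{1}$ rather than $\theta_{2}$, which changes nothing of substance). The only point to reconcile is the factor of $2$ in the rate: the paper defines $\Vert \nu - \tilde{\nu}\Vert_{TV} = \tfrac{1}{2}\inf_{\gamma}\mathrm{pr}(x \neq y)$, so under its convention the maximal coupling coalesces with probability $1-2\kappa$ rather than your $1-\kappa$, which is exactly where the condition $\kappa<1/2$ and the stated rate $1-2\kappa$ come from.
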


The proof of Theorem \ref{THgen} is provided in the Supplementary
Material, Section \ref{proof:THgen} and is based on a coupling argument.

The above assumption is satisfied in particular when the parameter space is compact.
Possible relaxations are not covered in this paper.
This theorem suffers from its generality, as the most practical situation in which the conditions are satisfied is obtained if all the parameters live in a compact space. However we can refine the previous result for many  graphical models; such refinements are explored in  the next sections.

We can extend the convergence result of Theorem \ref{THgen} to the general case $n>2$:

\begin{theorem}
\label{THgengen}
Assume that for all $\ell\leq n$ 
\[ \kappa_\ell =\sup_{\btheta_{>\ell}, \tilde{\btheta}_{>\ell}}\sup_{\btheta_{<\ell}} \Vert \pi_{\varepsilon_\ell}\{ \cdot \mid s_\ell(x^\star, \btheta_{<\ell} ,\btheta_{>\ell})\} -  \pi_{\varepsilon_\ell}\{ \cdot \mid s_\ell (x^\star,\btheta_{<\ell}, \tilde{\btheta}_{>\ell})\}\Vert_{TV} < 1/2 \]
with $\btheta_{>\ell} = (\theta_{\ell+1},\theta_{\ell +2}, \dots, \theta_n)$, and $\btheta_{<\ell} = (\theta_1,\theta_{2}, \dots, \theta_{\ell -1})$.
Then, the Markov chain produced by Algorithm \ref{AGgen}  converges geometrically in
total variation distance to a stationary distribution $\nu_{\varepsilon}$, with
geometric rate $1-\prod_\ell2\kappa_\ell$.
\end{theorem}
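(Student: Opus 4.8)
The plan is to lift the coupling argument behind Theorem \ref{THgen} from a single conditional to an entire Gibbs sweep, organising the $n$ updates into one nested coupling. I would run two copies of the chain, $\btheta$ and $\tilde\btheta$, from arbitrary initial states and drive them through one sweep of Algorithm \ref{AGgen}, refreshing the coordinates $1,\dots,n$ in order and using at each step a maximal coupling of the two approximate conditionals.

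First, I treat the sweep stage by stage and argue inductively on the coordinate index $\ell$. At stage $\ell$ the two chains draw $\theta_\ell$ and $\tilde\theta_\ell$ from $\pi_{\varepsilon_\ell}\{\cdot\mid s_\ell(x^\star,\btheta_{<\ell},\btheta_{>\ell})\}$ and $\pi_{\varepsilon_\ell}\{\cdot\mid s_\ell(x^\star,\tilde\btheta_{<\ell},\tilde\btheta_{>\ell})\}$, where $\btheta_{<\ell}$ collects the coordinates already refreshed in the current sweep and $\btheta_{>\ell}$ those still carrying their previous-sweep values. The crucial point is that, conditionally on the two chains having already agreed on the earlier block, $\btheta_{<\ell}=\tilde\btheta_{<\ell}$, the two conditionals differ only through the later block $\btheta_{>\ell}$, and the hypothesis bounds their total variation distance by $\kappa_\ell$ uniformly in the shared value $\btheta_{<\ell}$. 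A maximal coupling then forces $\theta_\ell=\tilde\theta_\ell$ with a probability controlled by $\kappa_\ell$; at $\ell=n$ the later block is empty, the two conditionals coincide, and agreement is certain. Chaining these events over the sweep lower-bounds the probability that the updated states satisfy $\btheta=\tilde\btheta$ by a product of the per-stage coupling probabilities, which is the mechanism producing the product structure of the rate, each factor being controlled through the same total variation estimate as in Theorem \ref{THgen}.

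Second, I observe that coupling is absorbing: once $\btheta=\tilde\btheta$ at the end of a sweep, feeding both chains the same randomness thereafter keeps them identical, since every conditional of the following sweep is then shared. Writing $p=\prod_\ell(\cdots)$ for the sweep-level meeting probability just obtained, the two chains have met within $m$ sweeps with probability at least $1-(1-p)^m$, and the coupling inequality gives $\Vert \mu P^m-\nu_\varepsilon\Vert_{TV}\le (1-p)^m$ for every initial law $\mu$, where $P$ denotes the sweep kernel. The same estimate shows that $P$ acts as a strict contraction on probability measures in total variation; existence and uniqueness of the stationary law $\nu_\varepsilon$ follow from the resulting Cauchy property, and the geometric rate is the announced one.

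The main obstacle is the sequential dependence baked into the coupling. The assumption only controls the sensitivity of the $\ell$-th conditional to its later coordinates $\btheta_{>\ell}$, and does so uniformly in the earlier coordinates $\btheta_{<\ell}$; to invoke the $\kappa_\ell$ bound at stage $\ell$ I must already have forced agreement on $\btheta_{<\ell}$ within the same sweep, so the couplings cannot be carried out independently but must be built up in order, and the uniform supremum over $\btheta_{<\ell}$ in the definition of $\kappa_\ell$ is exactly what validates the inductive step whatever common value the earlier block has reached. One must also check that a failure at an intermediate coordinate does not corrupt the bound, which is handled by lower-bounding the sweep-level meeting probability by the probability that every stage succeeds, and that the factor extracted at each stage coincides with the total variation estimate used for $n=2$, so that the claimed product rate is recovered.
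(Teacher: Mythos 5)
Your argument is exactly the one the paper intends: the paper's own ``proof'' of this theorem is a two-line remark that one reruns the coupling procedure of Theorem \ref{THgen} sweep-wise, coupling each $\theta_\ell$ conditionally on the already-coupled block $\btheta_{<\ell}$ (so that the uniformity in $\btheta_{<\ell}$ in the definition of $\kappa_\ell$ is what makes the induction go through, and the stage $\ell=n$ succeeds automatically), followed by the same absorbing-coupling/contraction/fixed-point conclusion, all of which you spell out correctly. The one caveat is the rate: your chaining of per-stage maximal couplings yields a meeting probability of at least $\prod_\ell(1-2\kappa_\ell)$ and hence a contraction coefficient $1-\prod_\ell(1-2\kappa_\ell)$, which is \emph{not} the quantity $\prod_\ell 2\kappa_\ell$ implicit in the theorem's announced rate $1-\prod_\ell 2\kappa_\ell$ (and cannot be, since $\kappa_n=0$ would make that rate vacuous); you should state the rate your argument actually delivers rather than asserting it matches the announced one, the discrepancy being attributable to the paper's statement rather than to your proof.
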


The proof of this theorem is a straightforward adaptation of the previous
proof, with the same coupling procedure. The condition comes from the fact that
in this procedure we sequentially try to couple each $\theta_\ell$ using the $\btheta_{<\ell}$, already coupled; as a consequence the condition for $\ell=n$ is always satisfied. In the case $n=2$, we recover Theorem \ref{THgen}.

The limiting distribution $\nu_\varepsilon$ is not necessarily a standard
posterior. We can however provide an evaluation of the distance between
$\nu_\varepsilon$ and the limiting distribution $\nu_0$ of Algorithm
\ref{AGgen} with $\varepsilon_{1} = \varepsilon_{2} = 0$. In a compact parameter
space, $\nu_0$ always exists, but it may differ from the joint distribution
associated with a vanilla ABC sampler, because the conditionals may be based
on different summary statistics $s_{1}$ and $s_{2}$.  

\begin{theorem}
\label{THgenconv}
Assume that
\begin{align*}
L_0 &= \sup_{\varepsilon_{2}} \sup_{\theta_{1},\tilde\theta_{1}} 
\Vert \pi_{\varepsilon_{2}}\{ \cdot \mid  s_{2}(x^\star,\theta_{1})\} - \pi_{0}\{\cdot \mid s_{2}(x^\star,\tilde\theta_{1})\} \Vert_{TV} < 1/2\,, 
\\
L_1(\varepsilon_{1}) &=\sup_{\theta_{2}} \Vert \pi_{\varepsilon_{1}}\{\cdot \mid s_{1}(x^\star,\theta_{2})\} 
- \pi_{0} \{\cdot \mid s_{1}(x^\star, \theta_{2})\} \Vert_{TV} \xrightarrow[\varepsilon_{1} \to 0]{} 0\,, 
\\
L_2(\varepsilon_{2}) &= \sup_{\theta_{1}} \Vert \pi_{\varepsilon_{2}}\{\cdot \mid s_{2}(x^\star,\theta_{1})\}
- \pi_{0} \{\cdot \mid  s_{2}( x^\star,\theta_{1})\} \Vert_{TV} \xrightarrow[\varepsilon_{2} \to 0]{} 0\,.
\end{align*}


Then 
\[ \Vert \nu_\varepsilon - \nu_0 \Vert_{TV} \leq \frac{L_1(\varepsilon_{1}) + L_2(\varepsilon_{2})}{1 - 2L_0} \xrightarrow[\varepsilon \to 0]{} 0. \]
\end{theorem}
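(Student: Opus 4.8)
The plan is to treat Algorithm \ref{AGgen} (for $n=2$) through its one-sweep transition operator and to compare the two operators $K_\varepsilon$ and $K_0$ obtained with tolerances $(\varepsilon_1,\varepsilon_2)$ and $(0,0)$ by a perturbation (telescoping) argument for Markov operators sharing a common contraction rate. Writing the sweep kernel on the $\theta_2$--marginal as $K_\eta(\theta_2,\cdot)=\int \pi_{\eta_1}\{\mathrm{d}\theta_1\mid s_1(x^\star,\theta_2)\}\,\pi_{\eta_2}\{\cdot\mid s_2(x^\star,\theta_1)\}$ for $\eta\in\{\varepsilon,0\}$, the laws $\nu_\varepsilon$ and $\nu_0$ are the invariant distributions of $K_\varepsilon$ and $K_0$, whose existence follows from the coupling used for Theorem \ref{THgen}. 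The proof then reduces to two ingredients: a contraction estimate valid uniformly over the tolerance, and a bound on the one-sweep discrepancy $\sup_{\theta_2}\Vert K_\varepsilon(\theta_2,\cdot)-K_0(\theta_2,\cdot)\Vert_{TV}$.

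The crux, and the step I expect to be the main obstacle, is to show that every kernel in the family contracts with the same modulus $2L_0$, namely $\Vert \mu K_\eta-\mu' K_\eta\Vert_{TV}\le 2L_0\,\Vert\mu-\mu'\Vert_{TV}$. I would obtain this by coupling the second ($\theta_2$) update: after drawing $\theta_1$ and $\tilde\theta_1$ in the first update and using a maximal coupling in the second, the coalescence failure probability, which upper bounds the Dobrushin coefficient, is at most $\sup_{\theta_1,\tilde\theta_1}\Vert\pi_{\eta_2}\{\cdot\mid s_2(x^\star,\theta_1)\}-\pi_{\eta_2}\{\cdot\mid s_2(x^\star,\tilde\theta_1)\}\Vert_{TV}$. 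Since the hypothesis only controls distances from a $\pi_{\eta_2}$--conditional to a $\pi_0$--conditional, I would insert an intermediate $\pi_0$--conditional and bound each of the two resulting terms by $L_0$, giving $2L_0$. It is precisely here that the design of $L_0$ is used: the supremum over $\varepsilon_2$ makes the estimate uniform in the tolerance, so that $K_0$ is covered as well, and the supremum over \emph{distinct} arguments $\theta_1,\tilde\theta_1$ is exactly what licenses the triangle inequality through $\pi_0$. The assumption $L_0<1/2$ then plays a double role: $2L_0<1$ guarantees that each $K_\eta$ is a genuine contraction, hence that $\nu_\varepsilon$ and $\nu_0$ exist and are unique, and it makes the denominator $1-2L_0$ positive.

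For the one-sweep discrepancy I would interpolate through the hybrid kernel $K'(\theta_2,\cdot)=\int \pi_{\varepsilon_1}\{\mathrm{d}\theta_1\mid s_1(x^\star,\theta_2)\}\,\pi_0\{\cdot\mid s_2(x^\star,\theta_1)\}$, which keeps the first update at tolerance $\varepsilon_1$ but sets the second to $0$. Changing only the second update, $\Vert K_\varepsilon(\theta_2,\cdot)-K'(\theta_2,\cdot)\Vert_{TV}$ is bounded by an average over $\theta_1$ of $\Vert\pi_{\varepsilon_2}\{\cdot\mid s_2(x^\star,\theta_1)\}-\pi_0\{\cdot\mid s_2(x^\star,\theta_1)\}\Vert_{TV}$, hence by $L_2(\varepsilon_2)$. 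Changing only the first update, $\Vert K'(\theta_2,\cdot)-K_0(\theta_2,\cdot)\Vert_{TV}$ is the distance between two mixtures of the common kernel $\pi_0\{\cdot\mid s_2(x^\star,\cdot)\}$ against the $\theta_1$--laws $\pi_{\varepsilon_1}\{\cdot\mid s_1(x^\star,\theta_2)\}$ and $\pi_0\{\cdot\mid s_1(x^\star,\theta_2)\}$; since applying a Markov kernel does not increase total variation, this is at most $\Vert\pi_{\varepsilon_1}\{\cdot\mid s_1(x^\star,\theta_2)\}-\pi_0\{\cdot\mid s_1(x^\star,\theta_2)\}\Vert_{TV}\le L_1(\varepsilon_1)$. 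The triangle inequality then gives $\sup_{\theta_2}\Vert K_\varepsilon(\theta_2,\cdot)-K_0(\theta_2,\cdot)\Vert_{TV}\le L_1(\varepsilon_1)+L_2(\varepsilon_2)$.

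Finally I would assemble the two pieces through the identity $\nu_\varepsilon-\nu_0=(\nu_\varepsilon-\nu_0)K_\varepsilon+\nu_0(K_\varepsilon-K_0)$, which uses stationarity of both laws. Taking total variation norms, applying the uniform contraction of $K_\varepsilon$ to the first term and the one-sweep bound to the second yields $\Vert\nu_\varepsilon-\nu_0\Vert_{TV}\le 2L_0\,\Vert\nu_\varepsilon-\nu_0\Vert_{TV}+L_1(\varepsilon_1)+L_2(\varepsilon_2)$, and rearranging produces the stated bound $\{L_1(\varepsilon_1)+L_2(\varepsilon_2)\}/(1-2L_0)$. Convergence as $\varepsilon\to0$ is then immediate from the hypotheses $L_1(\varepsilon_1)\to0$ and $L_2(\varepsilon_2)\to0$. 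A last bookkeeping point is that these estimates are phrased for the $\theta_2$--marginal sweep; the corresponding statement for the full chain on $(\theta_1,\theta_2)$ follows because one additional conditional draw recovers the joint law while total variation is non-increasing under it, so the same bound transfers.
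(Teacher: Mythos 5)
Your overall strategy --- comparing the invariant measures of the sweep kernels $K_\varepsilon$ and $K_0$ through the perturbation identity $\nu_\varepsilon-\nu_0=(\nu_\varepsilon-\nu_0)K_\varepsilon+\nu_0(K_\varepsilon-K_0)$, a contraction estimate uniform in the tolerance, and a one-sweep discrepancy bound --- is a genuinely different route from the paper's, which instead runs the $\varepsilon$-chain and the exact chain jointly from an optimal coupling of $\nu_\varepsilon$ and $\nu_0$ and exploits stationarity via the inequality $\mathrm{pr}(\theta_1\neq\tilde\theta_1)\le \mathrm{pr}(\theta_1'\neq\tilde\theta_1'\mid\theta_1=\tilde\theta_1)\big/\mathrm{pr}(\theta_1'=\tilde\theta_1'\mid\theta_1\neq\tilde\theta_1)$. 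Your hybrid-kernel interpolation giving $\sup_{\theta_2}\Vert K_\varepsilon(\theta_2,\cdot)-K_0(\theta_2,\cdot)\Vert_{TV}\le L_1(\varepsilon_1)+L_2(\varepsilon_2)$ is correct and clean. The problem lies in the contraction step, and it is a real one under the paper's normalisation $\Vert\nu-\tilde\nu\Vert_{TV}=\tfrac12\inf_\gamma\mathrm{pr}(x\neq y)$: with this convention the failure probability of a maximal coupling is $2\Vert\cdot\Vert_{TV}$, so the Dobrushin-type inequality reads $\Vert\mu K-\mu'K\Vert_{TV}\le 2\,\sup_{x,y}\Vert K(x,\cdot)-K(y,\cdot)\Vert_{TV}\,\Vert\mu-\mu'\Vert_{TV}$. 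Since the hypothesis only controls cross-distances between an $\varepsilon_2$-conditional and a $0$-conditional, your triangle inequality through $\pi_0$ gives $\sup_{\theta_1,\tilde\theta_1}\Vert\pi_{\varepsilon_2}\{\cdot\mid s_2(x^\star,\theta_1)\}-\pi_{\varepsilon_2}\{\cdot\mid s_2(x^\star,\tilde\theta_1)\}\Vert_{TV}\le 2L_0$, hence a contraction modulus $4L_0$ for $K_\varepsilon$, not $2L_0$. Your assembled inequality then yields $\Vert\nu_\varepsilon-\nu_0\Vert_{TV}\le\{L_1(\varepsilon_1)+L_2(\varepsilon_2)\}/(1-4L_0)$ and requires $L_0<1/4$; it gives nothing for $L_0\in[1/4,1/2)$, which the theorem allows.

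This is exactly the factor the paper's direct cross-coupling avoids: because $L_0$ is defined as the distance between a $\pi_{\varepsilon_2}$-conditional at $\theta_1$ and a $\pi_0$-conditional at $\tilde\theta_1$, coupling the perturbed chain against the exact chain needs no triangle inequality and costs only $2L_0$. Your argument is repairable without abandoning the perturbation framework: reverse the telescoping to $\nu_\varepsilon-\nu_0=\nu_\varepsilon(K_\varepsilon-K_0)+(\nu_\varepsilon-\nu_0)K_0$, so that the contraction is applied to $K_0$ rather than $K_\varepsilon$. For $K_0$ one has directly $\sup_{\theta_1,\tilde\theta_1}\Vert\pi_0\{\cdot\mid s_2(x^\star,\theta_1)\}-\pi_0\{\cdot\mid s_2(x^\star,\tilde\theta_1)\}\Vert_{TV}\le L_0$ (take $\varepsilon_2=0$ in the supremum defining $L_0$, or let $\varepsilon_2\to0$ using $L_2$), hence a contraction modulus $2L_0$, and the first term is still bounded by $L_1(\varepsilon_1)+L_2(\varepsilon_2)$; this recovers the stated constant $\{L_1(\varepsilon_1)+L_2(\varepsilon_2)\}/(1-2L_0)$ under $L_0<1/2$. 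Your final remark on passing from the marginal sweep to the joint chain is fine and matches what the paper does implicitly.
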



\section{Component-wise approximate Bayesian computation: the hierarchical case}
\label{ABCGhierpart}

\subsection{Algorithm and theory}





In this section, we focus on the two-stage simple hierarchical model given in \eqref{eq:modelun}. This model appears naturally when a hierarchical structure is added to a non-tractable model, see for example \cite{turner2013hierarchical}. Under this model structure,
the conditional distributions greatly simplify as 
$\pi(\mu_j     \mid x^\star, \alpha,
\mu_1,\dots,\mu_{j-1},\mu_{j+1},\dots,\mu_{n} ) = \pi(\mu_{j} \mid
x_j^\star,\alpha) $ and $\pi(\alpha \mid \bmu,x^\star)
= \pi(\alpha \mid \bmu) $.
Algorithm \ref{AGgen} then further simplifies and a detailed version in this particular situation is given in Algorithm \ref{algohier}. In order to simulate from all or part of the approximate conditional distributions, we might
resort to a Metropolis step, using the prior distribution as proposal.


\begin{algorithm}[!h]
\SetAlgoLined
\KwIn{observed dataset $x^\star$, number of iterations $N$, starting points
$\alpha^{(0)}$ and $\bmu^{(0)}=(\mu^{(0)}_1,\ldots,\mu^{(0)}_n)$,
thresholds $\varepsilon_\alpha$ and $\varepsilon_\mu$, summary statistics $s_\alpha$ and $s_\mu$, and distances $d_\alpha$ and $d_\mu$.}
\KwOut{A sample $(\alpha^{(i)}, \bmu^{(i)})_{1\le i\le N}$.}
\For{$i = 1 , \ldots, N$}{
\For{$j = 1, \ldots, n$}{
Sample $\mu_j^c \sim \pi(\mu \mid \alpha^{(i-1)})$ and $x_j^c \sim f(\cdot \mid \mu_j^c)$

\While{$d\{s_\mu(x_j^c),s_\mu(x_j^\star)\}>\varepsilon_\mu$}{
Sample $\mu_j^c \sim \pi(\mu \mid \alpha^{(i-1)})$ and $x_j^c \sim f(x_j \mid \mu_j^c)$
}
$ \mu_j^{(i)} \leftarrow \mu_j^c $ 
\tcp*{thus $\mu_j^{(i)}\sim\pi_{\varepsilon_\mu}\{\cdot \mid s_\mu(x_j^{\star},  \alpha^{(i-1)})\}$}
}

Sample $\alpha^c \sim \pi(\alpha)$ and $\bmu^c \sim \pi(\cdot \mid \alpha^c)$,

\While{$d\{s_\alpha(\bmu^c),s_\alpha(\bmu^{(i)})\}>\varepsilon_\alpha$}{
Sample $\alpha^c \sim \pi(\alpha)$ and $\bmu^c \sim \pi(\cdot \mid \alpha^c)$,
}
$ \alpha^{(i)} \leftarrow \alpha^c $ 
\tcp*{thus $\alpha^{(i)}\sim\pi_{\varepsilon_\alpha}\{\cdot  \mid s_\alpha(\bmu^{(i)})\}$}
}
\caption{\ABCG sampler for hierarchical model \eqref{eq:modelun}}
\label{algohier}
\end{algorithm}


As in Algorithm \ref{AGgen}, Algorithm \ref{algohier} may bypass the approximation of some conditionals.
In particular, if $\pi(\alpha \mid \mu)$ can be simulated from and $\pi(\mu
\mid x^\star,\alpha)$ cannot, we prove in the Supplementary Material, Section \ref{specificproofs}, that the
limiting distribution of our algorithm is the same as the vanilla \ABC~algorithm. On the other hand,
if we can simulate from $\pi(\mu \mid x^\star,\alpha)$ and not from $\pi(\alpha \mid \mu)$,
a version of Theorem \ref{THgengen} (Theorem \ref{thm:k1k2k3}) is established under less stringent conditions in
the Supplementary Material, Section \ref{specificproofs}.

\subsection{Numerical comparison with vanilla \ABCS}
\label{modeltoy}

We now compare the \ABCG, Vanilla ABC and an implementation of the SMC-ABC algorithm (approximate Bayesian computation with sequential Monte Carlo) of \cite{delmoral:doucet:jasra:2012}, with an adaptive proposal and resampling steps,
following \cite{toni2008} in order to avoid degeneracy in the simulation. The example is the toy Normal--Normal model from \cite{gelman2013bayesian}:
\begin{equation}\label{eq:gelman}
\mu_{j} \stackrel{\text{iid}}{\sim} \mathcal{N}(\alpha,\varsigma^2)\,,
\quad x_{j,k} \stackrel{\text{ind}}{\sim} \mathcal{N}(\mu_{j},\sigma^2)\,,
\qquad j =1, \dots, n,\quad  k=1,\ldots, K
\end{equation}
with the variances $\sigma^2$ and $\varsigma^2$ known,
and a hyperprior $\alpha\sim\mathcal{U}[-4,4]$. The assumptions of Theorem \ref{THgengen} hold here, as shown in the supplementary material, Section \ref{assumptioncheck}. This model is not intractable, which allows us to compare the output with the true posterior in Figure \ref{superp}.

Recall that in practice the tolerance is provided by an empirical quantile of the distance distribution at each call of an approximate conditional. This means that at each iteration $N_\alpha$ and $N_\mu$ simulations are produced from the conditional prior predictives on $\alpha$ and $\mu$, respectively, and that only the simulation associated with the smallest distance is kept.  In Section \ref{SM} we explore some further variations on this implementation. The \texttt{R} code used for all simulations can be found at \texttt{https://github.com/GClarte/ABCG}.

%

We strive to provide a fair comparison between \ABCG\ and vanilla \ABCS\ and
hence aim at simulating overall the same number of normal random variables. In
\ABCS, simulating over the hierarchical structures involves $n+nK$ normal
variates; taking the best $N$ out of $N_V$ prior predictive
simulations thus costs
$\Ntot=N_Vn(1+K)$. In \ABCG, each iteration costs $N_\alpha n + N_\mu nK$; if
we take $N_\alpha=N_\mu$ the total cost is $\Ntot=NnN_\alpha (1+K)$. We thus
take $N=N_V/N_\alpha$ to compare both algorithms.

Figure \ref{fig1} illustrates the result of both algorithms, for $\sigma=1$,
$K=10$, $n=20$, by representing the posterior approximation from \ABCG and
Vanilla ABC for the hyperparameter $\alpha$ and the first three parameters,
with comparable computational costs. The statistic used at both parameter and
hyperparameter levels is the corresponding empirical mean and hence is
sufficient. We keep $N$ constant and increase $N_\alpha = N_\mu$.

This toy experiment exhibits a considerable improvement in the parameter
estimator when using \ABCG. This is easily explained by the difficulty for \ABCS \ to
find a suitable value of $\bmu\in\mathbb{R}^{20}$; poor estimation of the
parameter ensues. In fact,  \ABCS \ produces the same output as a
non-hierachical model when the $\mu_{j}$'s are integrated out. 

This figure exhibits that \ABCG scales more efficiently with $N_\mu$, that is
with the reduction of $\varepsilon_\mu$, especially when increasing $N_\alpha$
from $5$ to $30$, that is a mere $6$ time increase in the computational cost.
For the same variation in \ABCS, we do see no noticeable improvement. Hence,
for a given computational cost, ABC-Gibbs achieves a smaller threshold
$\epsilon$ than ABC, leading to better approximations. The experiment further
points out that the choice of the parameters $N_\alpha$ and $N_\mu$ may prove
delicate. Resorting to a larger ABC table for each update is uselessly costly in
that it fails to provide a clear improvement in the result. This is also the case with the
classical ABC approach, as shown by Figure \ref{fig1}.

In practice, the choice of the parameters $N$ and $N_\mu$ may be tricky. For $N$ we would advise to use standard techniques to choose the number of iterations in a Monte Carlo algorithm. For $N_\mu$ and $N_\alpha$, we observe in Figure \ref{fig1} that a moderate value, say $N_\alpha = N_\mu= 30$, seems enough: we expect the optimal value to be problem dependent.


\begin{figure}
\center
\includegraphics[scale=.9]{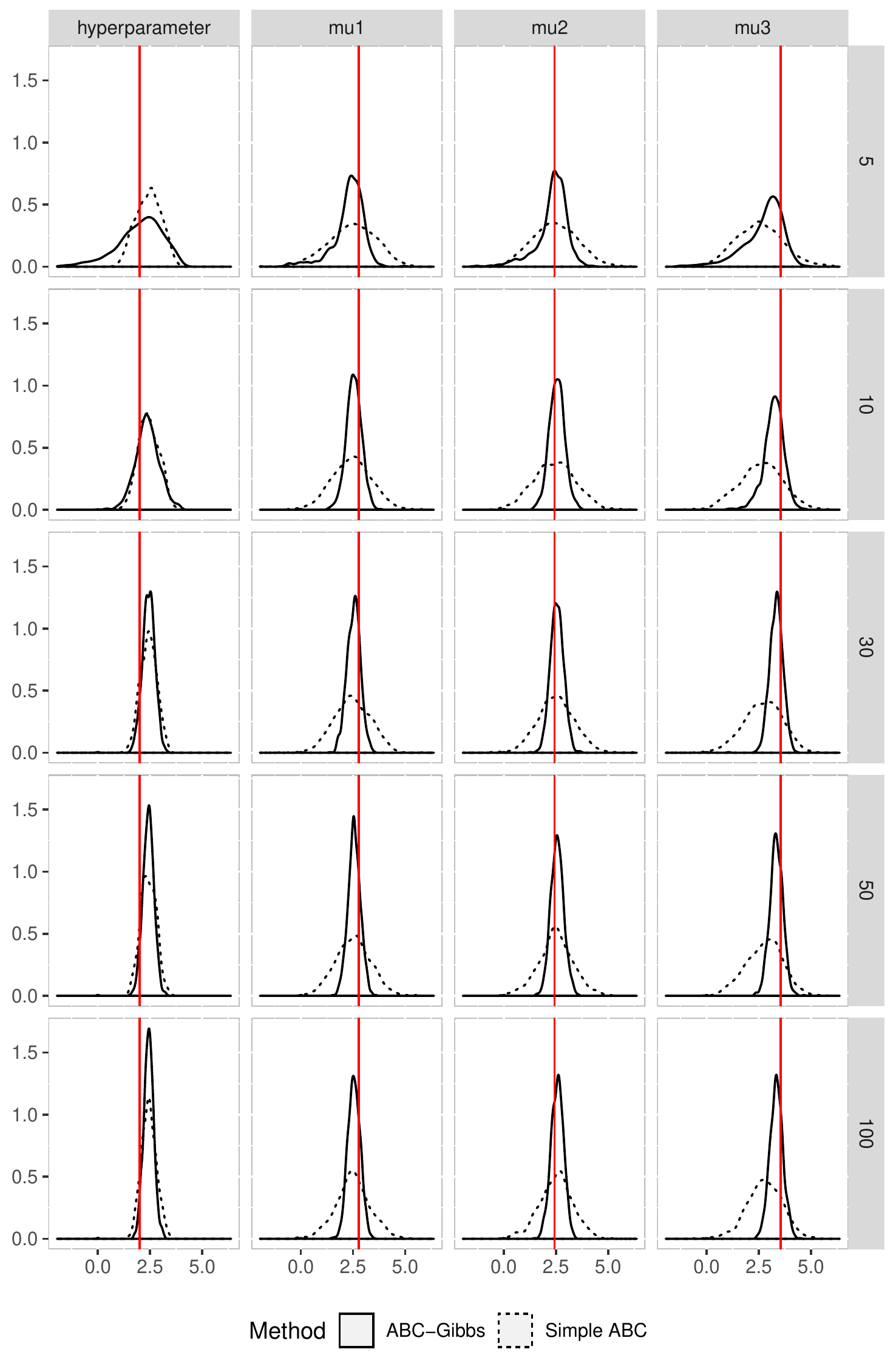} 
\caption{Comparison of the posterior density estimates of the hyperparameter and the first three parameter components of the hierarchical model of Equation \ref{eq:gelman} obtained with ABC and \ABCG, with identical computational cost. For \ABCG, results were computed with $N=1000$ Gibbs iterations;  each row is labeled with the number $N_{\alpha}=N_\mu$ of iterations of the ABC scheme to update one parameter component. Red vertical lines represents the true values used to simulate the data.}
\label{fig1}
\end{figure}

To check the robustness of our method, we represent in Figure \ref{superp}, 10
realisations of the posterior densities, for $N = \lfloor 1000/30 \rfloor$, and
$N_\alpha = N_\mu = 30$ (with the first $5$ points in \ABCG removed to account
for the small burn in). SMC-ABC does not allow for a fixed limit on the
number of simulations, due to the resampling step. We used therefore $10^4$ particles,
with a target of the smallest possible tolerance for a maximum
of $30$ steps. In total, SMC-ABC was alloted roughly 60 times more
simulations than \ABCG and \ABCS. The \ABCG \ density is overdispersed
compared to the true posterior, albeit closer than the \ABCS, especially for
the parameter $\mu_{1}$. On the other hand, SMC-ABC fails for this model:
due to the difficulties resulting from its high dimension, an adaptive version
fails to produce interesting proposals, notwithstanding a consistently larger
computational budget. The distribution approximation on $\alpha$ is however
better than its \ABCS counterpart. This fact is supported by numerical
experiments in lower dimensions where all three methods lead to suitable
approximations, as illustrated in the supplementary material, Section
\ref{graphiques3D}. The improvement brought by \ABCG \ in high dimension occurs consistently 
over simulations. 

This experiment further highlights a striking differenciation between ABC-SMC methods, which require a significant degree of calibration when no package is readily available, and \ABCG, which relies on a straightforward implementation, reproduced in the supplementary material.


\begin{figure}
\center
\includegraphics[scale=.7]{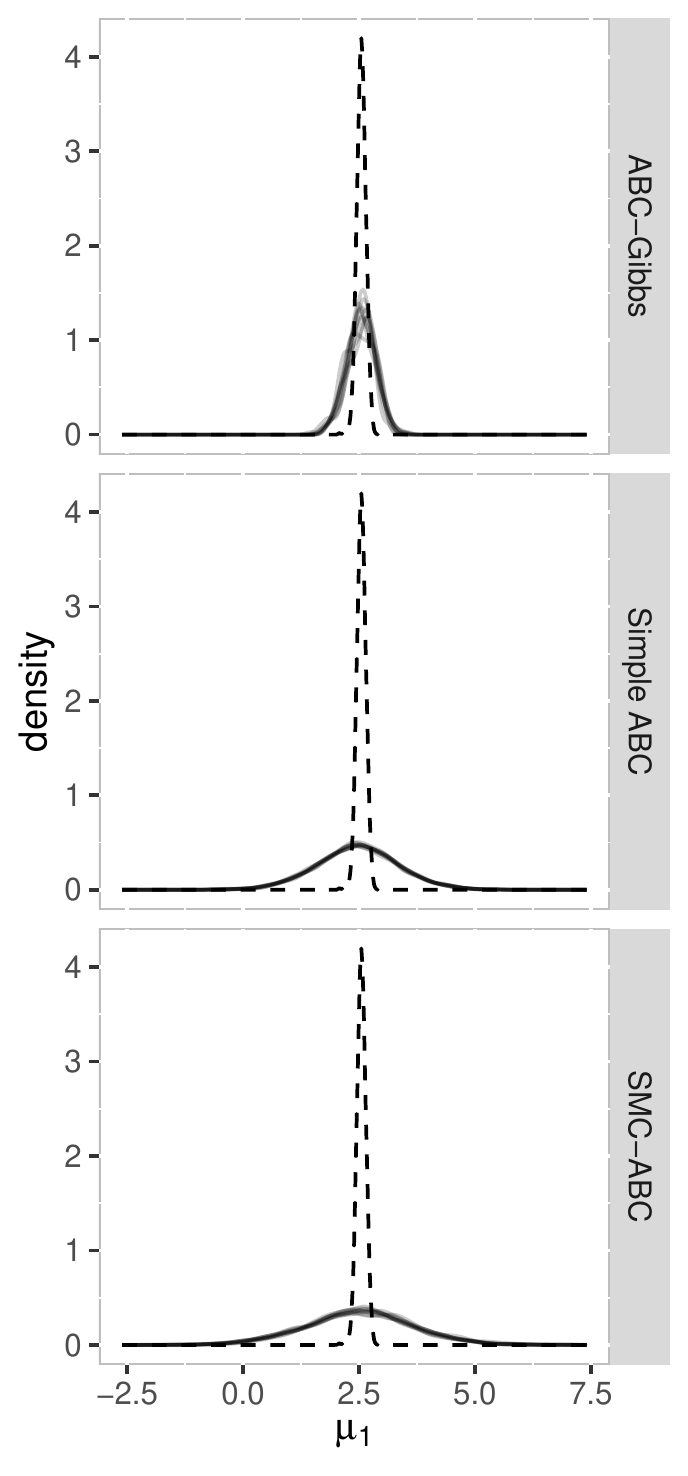} 
\includegraphics[scale=.7]{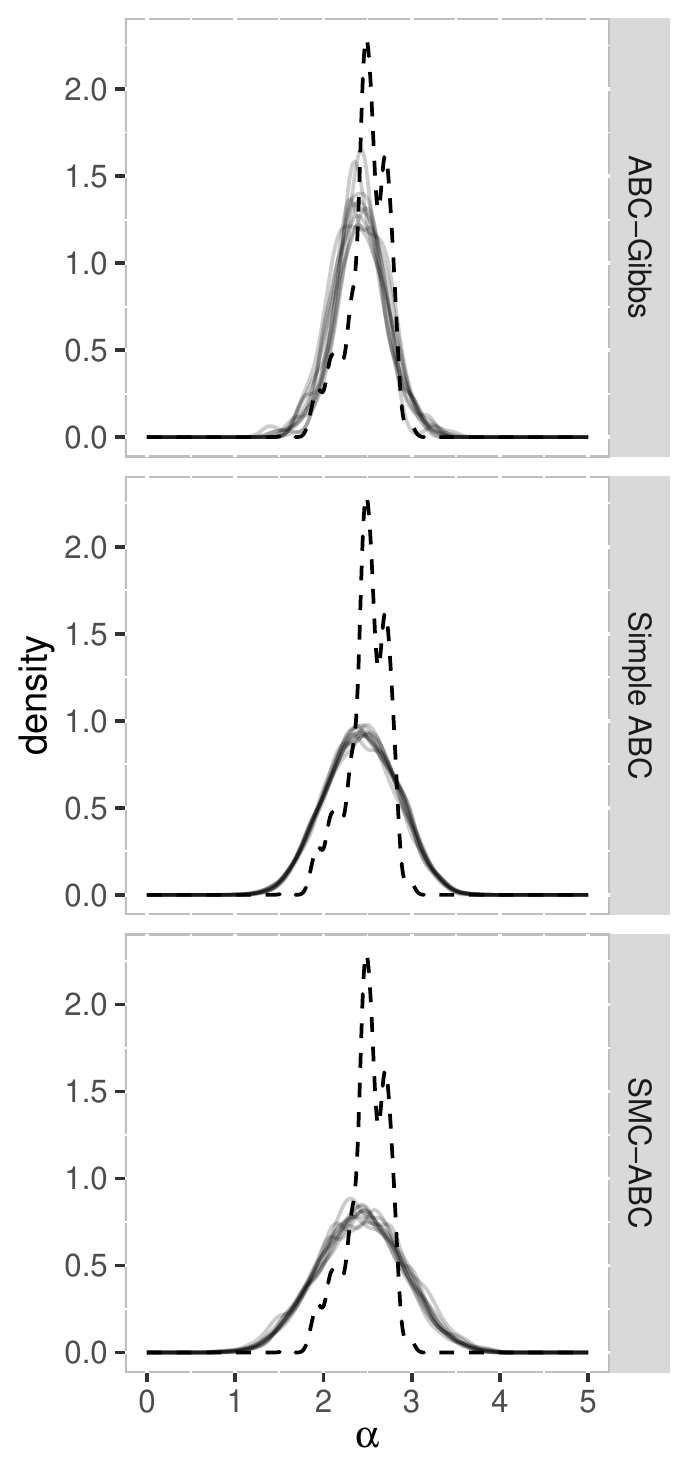} 
\caption{Posterior densities for 10 replicas of the algorithms compared to the exact posterior density. The true posterior is represented by the dashed line.}
\label{superp}
\end{figure}



\section{Application: hierarchical G \& K distribution}

The G \& K distribution is a notorious example \citep{prangle2017gk} of an intractable distribution.
It depends on parameters $(\mu, B, g, k)$ and is defined by its inverse
cumulative distribution function
\[
F^{-1}(x;\mu,B,g,k,c) = \mu+B\left(1+c\frac{1-e^{-gz(x)}}{1+e^{-gz(x)}}\right)\left(1+z(x)^2\right)^kz(x)
\]
where $z$ is the quantile function of the standard normal distribution, and $c$
is a constant typically set to $0.8$ \citep{prangle2017gk}. While the
likelihood function is intractable, it is straightforward to simulate
realisations of this distribution, making it a favourite benchmark for ABC
methods (see for instance \citealp{fearnhead2012}).

Here, we analyse two hierarchical versions of this model, both of the form: \begin{equation}
 \mu_i \sim \mathcal{N}(\alpha,1) \qquad x_i\sim gk(\mu_i,B,g,k)\qquad i=1,\ldots,n\,.
\label{eq:gk}
 \end{equation}

In a first experiment, we assume that the parameters $B$, $g$ and $k$ are known and  we infer the position parameters $(\mu_i)$. This leads to the graphical model represented on the left of Figure \ref{GK}. We refer to this model as the simple hierarchical G \& K model.

\begin{figure}
\center
\begin{tikzpicture}

\node (a) at (-2.4, 1.5) {$\alpha$};
\node (b1) at (-1.2, 3) {$\mu_{1}$};
\node (b2) at (-1.2, 2)  {$\mu_{2}$};
\node at (-1.2, 1) {$\vdots$};
\node (b3) at (-1.2, 0) {$\mu_{n}$};

\draw [->] (a)--(b1);
\draw [->] (a)--(b2);
\draw [->] (a)--(b3);

\node (x1) at (0, 3) {$x_{1}$};
\node (x2) at (0, 2) {$x_{2}$};
\node (x3) at (0, 0) {$x_{n}$};
\node at (0, 1) {$\vdots$};

\draw [->] (b1)--(x1);
\draw [->] (b2)--(x2);
\draw [->] (b3)--(x3);

\end{tikzpicture} \hspace{2cm}\begin{tikzpicture}

\node (a) at (-2.4, 1.5) {$\alpha$};
\node (b1) at (-1.2, 3) {$\mu_{1}$};
\node (b2) at (-1.2, 2)  {$\mu_{2}$};
\node at (-1.2, 1) {$\vdots$};
\node (b3) at (-1.2, 0) {$\mu_{n}$};

\draw [->] (a)--(b1);
\draw [->] (a)--(b2);
\draw [->] (a)--(b3);

\node (x1) at (0, 3) {$x_{1}$};
\node (x2) at (0, 2) {$x_{2}$};
\node (x3) at (0, 0) {$x_{n}$};
\node at (0, 1) {$\vdots$};

\draw [->] (b1)--(x1);
\draw [->] (b2)--(x2);
\draw [->] (b3)--(x3);

\node (s1) at (2, 2.4) {$B$};
\node (s2) at (2, 1.5) {$g$};
\node (s3) at (2, 0.6) {$k$};

\draw [->] (s1)--(x1);
\draw [->] (s2)--(x1);
\draw [->] (s3)--(x1);

\draw [->] (s1)--(x2);
\draw [->] (s2)--(x2);
\draw [->] (s3)--(x2);

\draw [->] (s1)--(x3);
\draw [->] (s2)--(x3);
\draw [->] (s3)--(x3);

\end{tikzpicture}

\caption{Left: Simple hierarchical G \& K model; Right: doubly hierarchical G \& K  model}
\label{GK}
\end{figure}
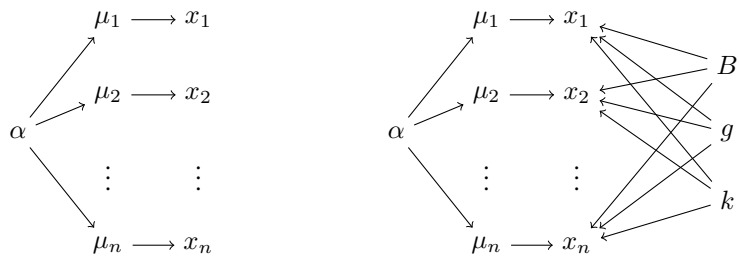

For a hyperprior $\alpha\sim\mathcal U[-10,10]$, the assumptions of  Theorem $\ref{thm:k1k2k3}$ are satisfied. Figure \ref{compsimpleGK} compares the results of our algorithm with those of \ABCS for a similar computational cost in dimension $n=50$, and ABC-SMC (same as before) for a higher computational cost, with $1000$ particles, $500$ iterations leading to a computational cost roughly $20$ times longer.  As in Section \ref{modeltoy}, \ABCG outperforms both other methods: Vanilla ABC is overdispersed, and carefully calibrated ABC-SMC is either highly peaked at the wrong location or producing results similar with \ABCS.

\begin{figure}
\center
\includegraphics[scale=1]{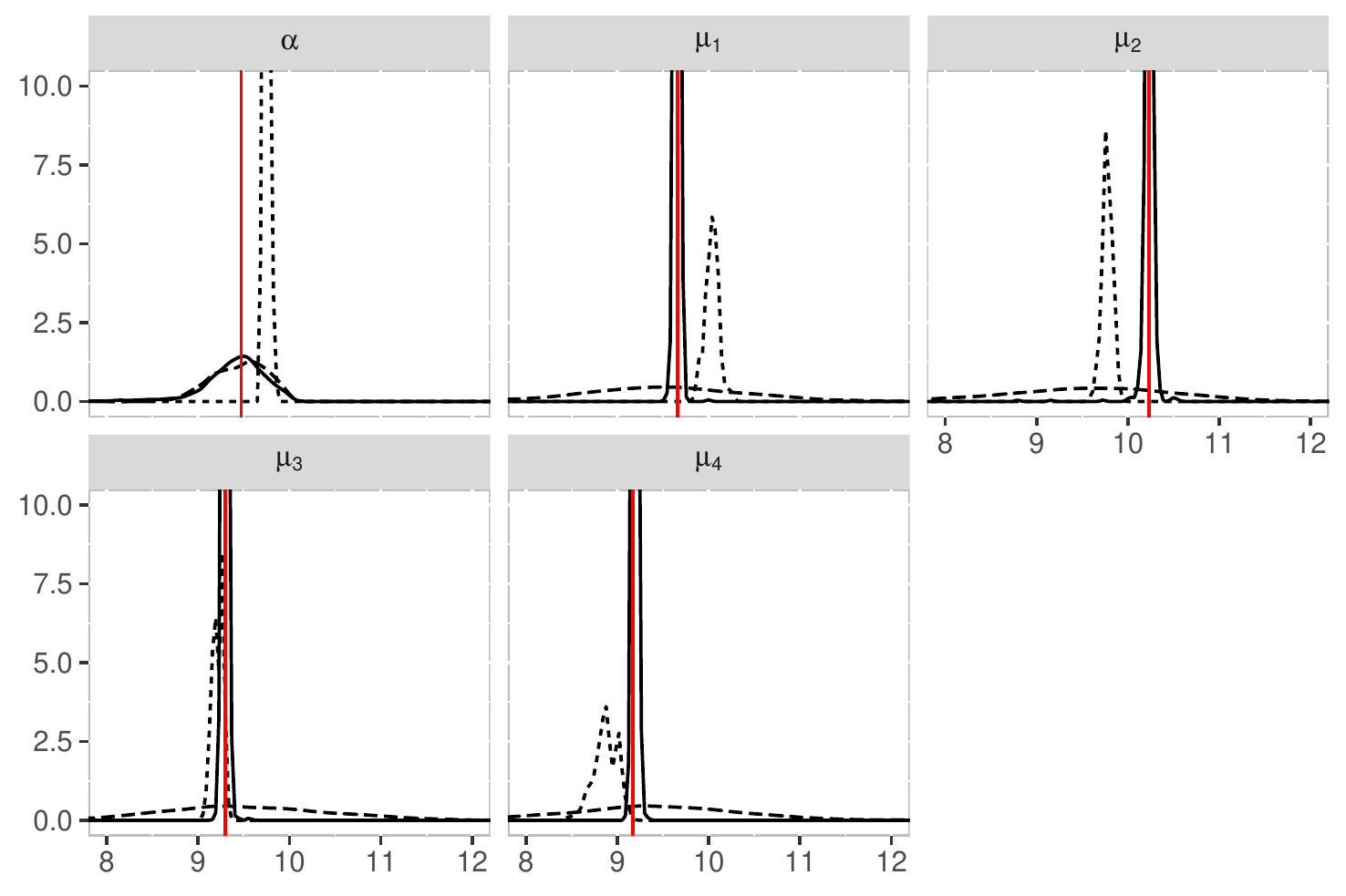}
\caption{Posterior approximations for the simple hierarchical G \& K model. The $y$ axis is truncated as the ABC-SMC pseudo-posterior is very peaked. The red vertical lines identify the value of the parameters used in the simulation.}
\label{compsimpleGK}
\end{figure}

As a second experiment, we infer 
all parameters ($B$, $g$, $k$, $\alpha$ and the $\mu_i$) in Equation \ref{eq:gk}, with independent hyperpriors $\alpha\sim\mathcal U(-10,10)$ and $B,g,k\sim\mathcal U(0,1)$. This corresponds to the graphical model  represented on the right of Figure \ref{GK}, which we refer to as a doubly hierarchical G \& K model. 
The same summary statistic is used at every step of the algorithm, namely the octiles of the observations. 
Let $q(x,p)$ be the $p$-th quantile of sample $x$ and take two observations $x_1$ and $x_2$; our distance function is
\[ d(x_1,x_2) = \sum_{i=0}^8 \left\vert q\left(x_1,i/8\right) - q\left(x_2,i/8\right) \right\vert. \]
It is straightforward to  that the assumptions of Theorem \ref{THgengen} are satisfied by this model, when considering the parameters in the order $\alpha, B, g, k, (\mu_i)$.

Figures \ref{compsimpleGK}, \ref{fig:compsimpleGKpar} and
\ref{fig:compsimpleGKhyp} compare the output of ABC-Gibbs with Vanilla ABC and
ABC-SMC in the same implementation as before, under a fixed budget of
$2\cdot 10^6$ model simulations for \ABCG and \ABCS; ABC-SMC is run with
$10^3$ particles, for $500$ steps, resulting in a a grand total computational cost 
larger than $2.5\cdot 10^7$ simulations. Note that there exist
analytical approximations of the G \& K posterior that give better results than
ABC methods in the non hierarchical case, however none of these methods can
be easily extended to the hierarchical case. 

\begin{figure}
\centering
\includegraphics[scale=1]{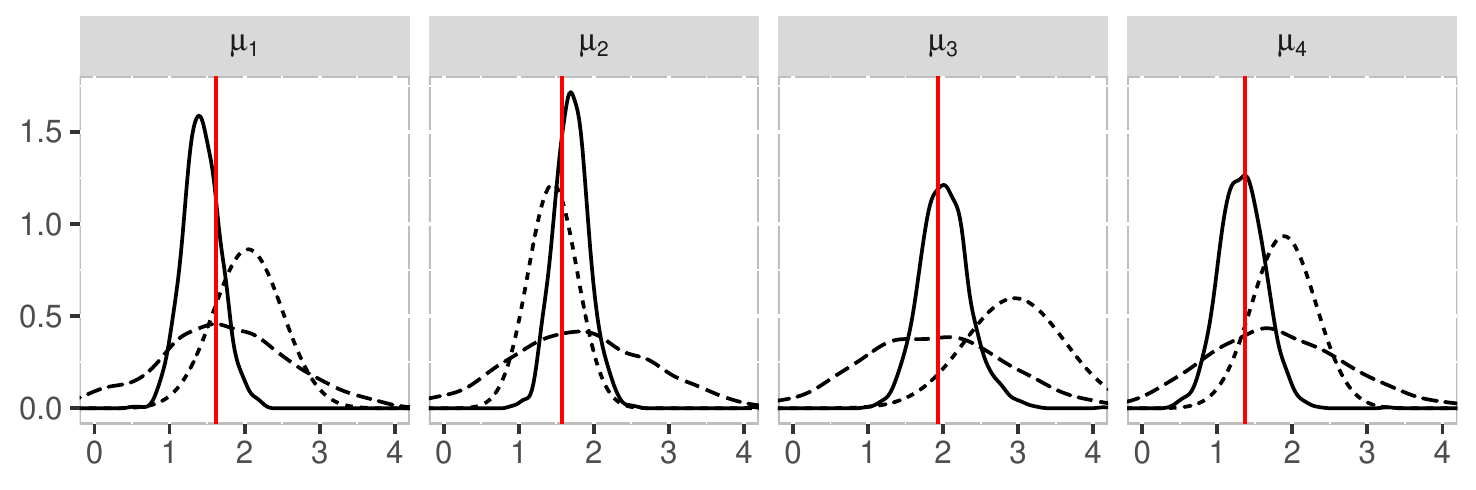} 
\caption{Posterior densities for the first four parameters, among 50, $\mu_1, \dots , \mu_4$ in the doubly hierarchical $g$ \& $k$ model.}
\label{fig:compsimpleGKpar}
\end{figure}

\begin{figure}
\centering
\includegraphics[scale=1]{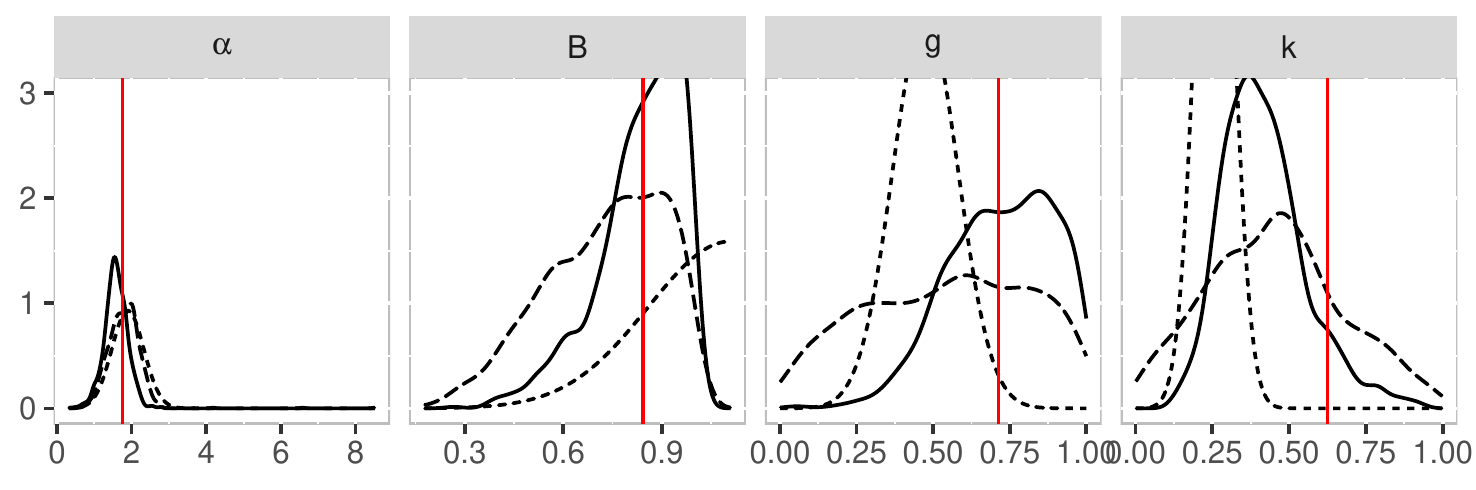} 
\caption{Posterior densities for the top-level parameters $\alpha$, $B$, $g$ and $k$ in the doubly hierarchical $g$ \& $k$ model}
\label{fig:compsimpleGKhyp}
\end{figure}

The simple and double hierarchical G \& K models lead to comparable results.
\ABCG provides consistently better results (that are more concentrated around the
true value), than \ABCS. The approximation provided by ABC-SMC  is less
peaked and occasionaly exhibits a bias, if less visible in the
hyperaparameter case. Sequential Monte Carlo is supposed to iteratively reduce the
threshold of the approximation; however, due to the difficulty of calibrating, the starting 
points, the reduction is quite slow. It is thus unlikely a further increase in
the computational time would lead to higher improvements.

In Section \ref{sec:MA} of the Supplementary material, we consider another example (a hierarchical Moving Average model), for which the results are similar.

%

\section{Example with full dependence}
\label{Nonhier}

The concept of \ABCG is by no means restricted to hierarchical settings. It applies in full generality to any decomposition or completion of the parameter $\btheta$ into $n$ terms, $(\theta_1,\ldots,\theta_n)$. For simplicity's sake, we only analyse below the case of $n=2$ parameters, and furthermore assume that $\theta_1$ and $\theta_2$ are a priori independent. The
extension to $n \geq 2$ parameters, or non-independent parameters, is straightforward though cumbersome.
The generic Algorithm \ref{AGgen} and Theorem \ref{THgen} can thus be adapted to non-hierarchical models where $\btheta=(\theta_1,\theta_2)$, 
such that the conditional posteriors $\pi(\theta_1\mid x^\star,\theta_2)$ and
$\pi(\theta_2\mid x^\star,\theta_1)$ depend on the entire dataset $x^\star$ rather than a significantly smaller
subset. This setting
implies that the approximation steps in \ABCG will mostly require the simulation
of objects of the same size as in \ABCS. 
%
%

When the statistics $s_1$ and $s_2$ are identical, a single distance can be
used, with $\varepsilon_1=\varepsilon_2$. The resulting stationary
distribution is then the same as for \ABCS, since it is proportional to 
$$ \int \pi(\theta_1) \pi(\theta_2) f(x \mid \theta_1,\theta_2) \mathbf 1_{\eta\{s_1(x),s_1(x^\star)\}\le \varepsilon_1}\, \mathrm{d}x.$$
Formally, these statistics should however be different, since more efficient and
smaller-dimension statistics can be calibrated to each parameter. 

As an illustration, consider an example inspired by inverse problems
\citep{Kaipio:2011}, in a simplified version. These problems, although deterministic, are difficult to tackle with traditional methods, as the likelihood function is typically extremely expensive to compute \citep{neal2012efficient}, requiring the use of surrogate models, and thus approximations. Let $y$ be the solution
of the heat equation on a circle defined for $(\tau, z) \in ]0,T]\times[0,1[$ by
\[  \partial_\tau y(z,\tau) = \partial_z \left\{ \theta(z) \partial_z y(z,\tau)
\right\},  \]
with $\theta(z) = \sum_{j=1}^n \theta_j \mathbf{1}_{\{(j-1)/n,j/n\}}(z)$ and with
boundary conditions  $ y(z,0)=y_0(z)$ and  $y(0,\tau)=y(1,\tau)$.
We assume $y_0$ known and the parameter is $ \theta
=(\theta_1, \dots, \theta_n)$. The equation is discretized towards its
numerical resolution. For this purpose, the first order finite elements method
relies on discretisation steps of size $1/n$ for $z$ and $\Delta$ for $\tau$.
A stepwise approximation of the solution is thus $\hat{y}(z,t) = \sum_{j=1}^n
y_{j,t} \phi_j(z)$, where, for $j<n$, $\phi_j(z) = (1 - \vert
nz-j \vert) \mathbf{1}_{\vert nz-j \vert < 1} $
and $\phi_n(z)=(1-nz)\mathbf{1}_{0<z<1/n}+(nz-n+1)\mathbf{1}_{1-1/n < z < 1}$, and with $y_{j,t}$ defined by
\begin{align*}\frac{y_{j,t+1}-y_{j,t}}{3\Delta}+\frac{y_{j+1,t+1}-y_{j+1,t}}{6\Delta}&+\frac{y_{j-1,t+1}-y_{j-1,t}}{6\Delta}\\
&\quad = y_{j,t+1}(\theta_{j+1}+\theta_{j})-y_{j-1,t+1}\theta_{j}-y_{j+1,t+1}\theta_{j+1}. \end{align*}
We then observe a noisy version of this process, chosen as $x_{j,t} = \mathcal{N}(\hat{y}_{j,t},\sigma^2)$. 

In \ABCG, each parameter $\theta_m$ is updated with summary statistics
the observations at locations $m-2,m-1,m,m+1$. \ABCS relies on the whole data as statistic. 
In the experiments, $n=20$ and $\Delta=0.1$, with a prior $\theta_j \sim\mathcal U [0,1]$, independently. 
Theorem \ref{THgengen} applies to this setting.

We compared both methods, using as above the same simulation budget and
several experiments with various values of
$N_\epsilon$, keeping the total number of simulations constant at
$\Ntot=N_\epsilon\cdot N=8\cdot 10^6$. As $N_\epsilon$ increases, the size $N$
of the posterior sample decreases. Figure \ref{resnonhier} illustrates
the estimations of $\theta_1$.  The \ABCG estimate is much closer to the
true value of the parameter $\theta_1=0.75$, with a smaller variance.  We
emphasize once more that the choice of the ABC table size is critical, as for a
fixed computational budget we must reach a balance between, on the one hand,
the quality of the approximations of the conditionals (improved by increasing
$N_\epsilon$), and on the other hand Monte-Carlo error and convergence of the
algorithm, (improved by increasing $N$). In our case, $N_\varepsilon = 10$ was
clearly the best choice (low bias and low variance). While we have no
systematic rule to choose this parameter, we however advise to choose it
so that the approximation of the conditional is significantly
different from the prior when run separately.


As  in previous instances, \ABCG is much more efficient than \ABCS. For
instance, Figure \ref{resnonhier2} shows that the posterior sample of
$\theta_1$ is more peaked around the true value for \ABCG.  We  repeated this
experiment for a wide range of values for $\theta$. In all, \ABCG gives
estimates close to the true value, and is never outperformed by \ABCS. This is
confirmed by evaluating the expectation of the posterior predictive distance to
the whole data, \ABCG achieves an average of $39.2 \pm 0.002$, and
\ABCS reaches an average of $103.8 \pm 0.002$, based on 100 replicates.

\begin{figure}
\center
\includegraphics[scale=.5]{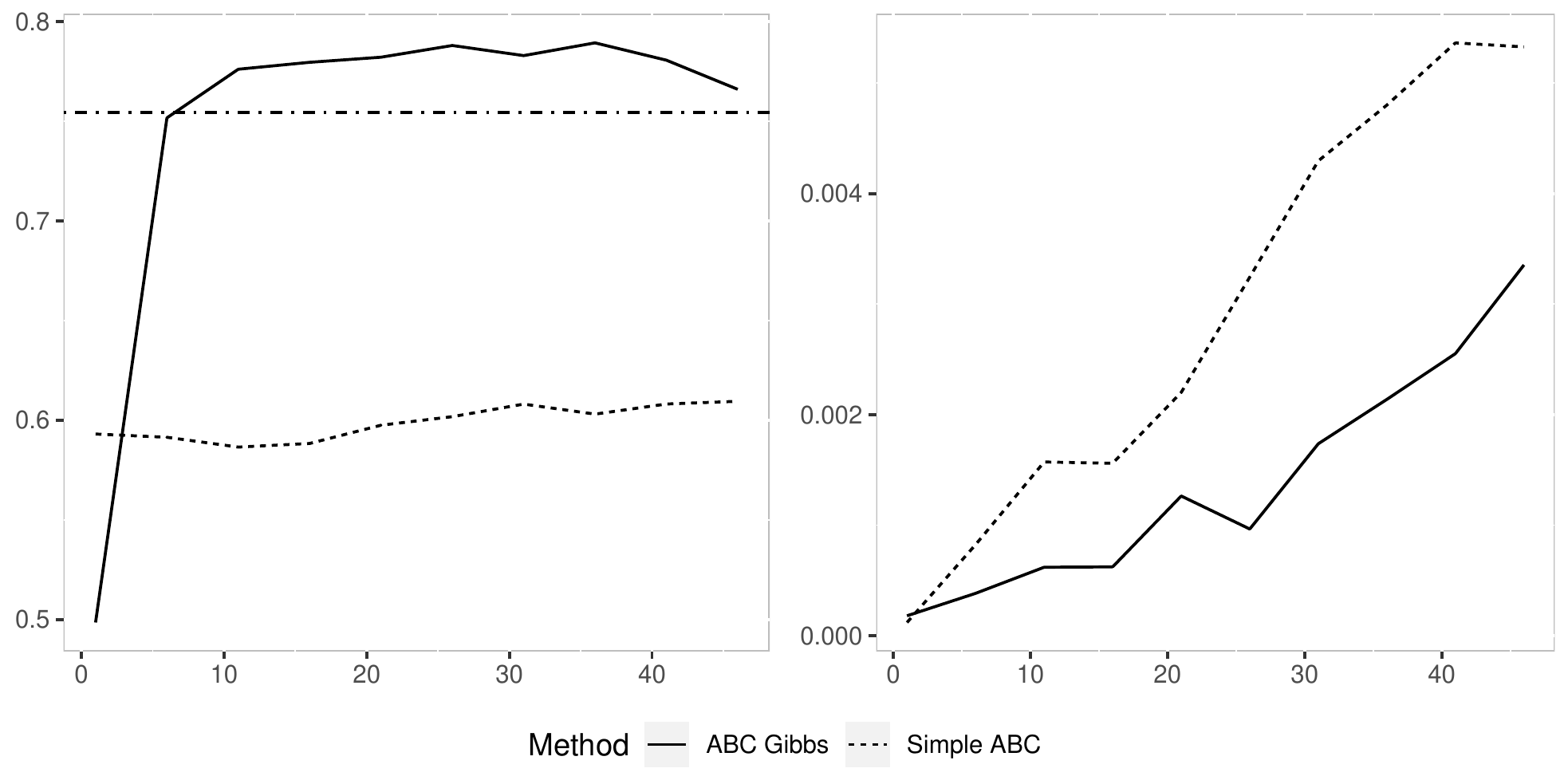} 
\caption{For the heat equation model, mean and variance of the ABC and \ABCG estimators of $\theta_1$ as $N_\epsilon$ increases, selected from among 20 parameters. The horizontal line shows the true value of $\theta_1$.}
\label{resnonhier}
\end{figure}

\begin{figure}
\center
\includegraphics[scale=.5]{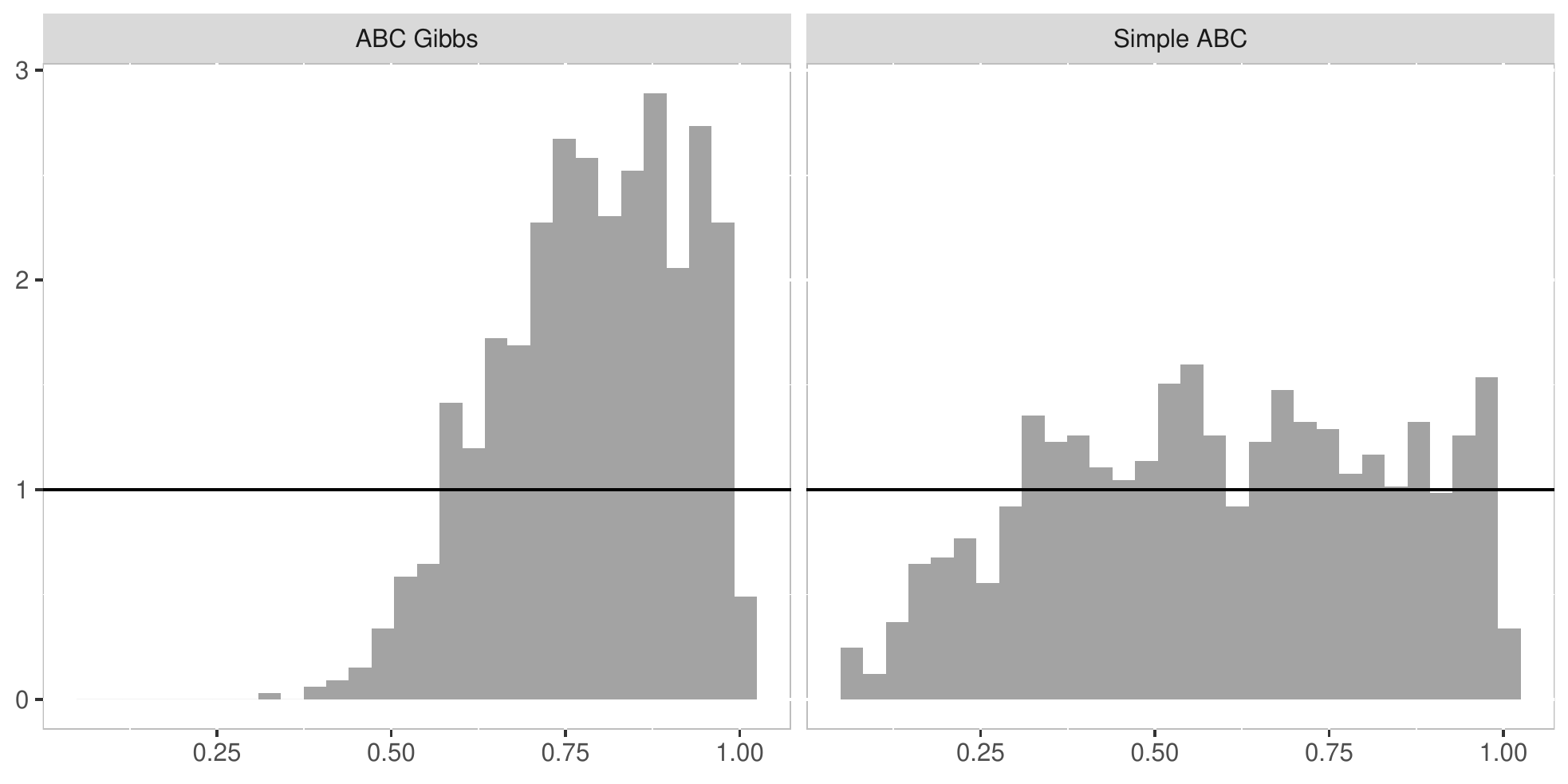} 
\caption{For the model of section \ref{Nonhier}, approximate posterior of $\theta_1$ compared with the uniform prior (black line) for \ABCG\ (right) and \ABCS\ (left)}
\label{resnonhier2}
\end{figure}

\section{Nature of the limiting distribution}
\label{sec:comp}
\label{incomp1}


When  addressing hierarchical models of the form of Equation \eqref{eq:modelun}, 
 we gave conditions in Theorem \ref{THgengen} for Algorithm \ref{AGgen} 
to have a limiting distribution $\nu_\epsilon$.  However, we did not specify the nature of this limiting distribution. We also showed that as the
tolerance parameter $\epsilon$ goes to 0, $\nu_\epsilon$ tends to the
stationary distribution $\nu_0$ of a Gibbs sampler with generators $\pi(\alpha)
\pi(s_\alpha(\mu) \mid \alpha)$ and $\pi(\mu \mid \alpha) \pi(s_\mu(x) \mid \mu)$. 
It is possible that these generators are incompatible, that is, that there is
no joint distribution associated with them. In such settings, the stationary
distribution $\nu_0$ does not enjoy these generators as conditionals.
The incompatibility of conditionals may seem contradictory with the fact that
our algorithm does converge to a distribution, but in the case of a compact parameter space there
always exists a limiting distribution, the main issue being rather that the
limiting distribution has no straightforward Bayesian interpretation.

There are however specific situations where there are theoretical guarantees that the limiting distribution $\nu_0$ is in fact the posterior distribution associated with the summary statistics.
According to
\cite{arnold1989compatible} a necessary and sufficient condition for the
conditionals to be compatible is the existence of  two measurable
functions $u(\alpha)$ and $v(\mu)$ such that 
\[\frac{\pi(\alpha)
\pi(s_\alpha(\mu) \mid \alpha)}{ \pi(\mu \mid \alpha) f(s_\mu(x) \mid
\mu)}=u(\alpha)v(\mu).
\]
In particular, this occurs if $s_\alpha$ is sufficient. (This condition is not necessary, as it is also true for example if $s_\alpha$ is ancillary, although this is of limited interest.)

We have thus proven the following Proposition,

\begin{proposition}

Under the assumptions of Theorems \ref{THgen} and \ref{THgenconv}, a limiting
distribution exists and converges, for both $\varepsilon_\mu$ and
$\varepsilon_\alpha$ decreasing to $0$, to the stationary distribution of a Gibbs
sampler with conditionals:
\[\pi(\alpha) \pi(s_\alpha(\mu) \mid \alpha) \text{ and } \pi(\mu \mid \alpha) f(s_\mu(x) \mid
mu).\]
If $s_\alpha$ is sufficient, this limiting distribution is merely $ \pi(\alpha,\mu) \pi(s_\mu(x) \mid \mu) $, that is the limiting distribution of \ABCS with summary statistic $s_\mu$ when the tolerance goes to $0$.

\end{proposition}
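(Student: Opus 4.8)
The plan is to separate the statement into a ``soft'' part, which merely assembles earlier results, and the ``hard'' part concerning sufficiency. For the soft part, existence of the stationary law $\nu_\varepsilon$ at each fixed tolerance is the hierarchical instance of Theorem~\ref{THgen}, and the same theorem applied at $\varepsilon_\alpha=\varepsilon_\mu=0$ (legitimate on a compact parameter space) produces $\nu_0$; the total-variation convergence $\nu_\varepsilon\to\nu_0$ as $\varepsilon\to0$ is then exactly Theorem~\ref{THgenconv}. To identify $\nu_0$ as the stationary distribution of the Gibbs sampler with the two stated generators, I would let $\varepsilon\to0$ in each ABC conditional: the indicator $\mathbf{1}_{d\{s(\cdot),s(x^\star)\}<\varepsilon}$ concentrates on the level set $\{s=s^\star\}$, so that $\pi_{\varepsilon_\alpha}\{\cdot\mid s_\alpha(\mu)\}$ converges, after renormalisation, to $\pi(\alpha)\pi(s_\alpha(\mu)\mid\alpha)$ and likewise the $\mu$-update converges to $\pi(\mu\mid\alpha)f(s_\mu(x^\star)\mid\mu)$ (normalised in $\mu$).

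For the sufficiency claim, the first step is to check the Arnold--Press compatibility criterion recalled above. Assuming $s_\alpha$ sufficient for $\alpha$ in the sampling model $\mu\mid\alpha\sim\pi(\mu\mid\alpha)$, the factorisation theorem makes the ratio $\pi(\mu\mid\alpha)/\pi(s_\alpha(\mu)\mid\alpha)$ free of $\alpha$; write it $\pi(\mu\mid s_\alpha(\mu))$. Substituting into the displayed ratio collapses it to
\[
\frac{\pi(\alpha)\pi(s_\alpha(\mu)\mid\alpha)}{\pi(\mu\mid\alpha)f(s_\mu(x)\mid\mu)}
=\frac{\pi(\alpha)}{\pi(\mu\mid s_\alpha(\mu))\,f(s_\mu(x)\mid\mu)},
\]
which is of the product form $u(\alpha)v(\mu)$ with $u(\alpha)=\pi(\alpha)$ and $v(\mu)=\{\pi(\mu\mid s_\alpha(\mu))f(s_\mu(x)\mid\mu)\}^{-1}$. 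Hence the two generators are compatible.

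The second step is to exhibit the joint law explicitly and match its conditionals. I would take the candidate $\nu_0(\alpha,\mu)\propto\pi(\alpha,\mu)f(s_\mu(x)\mid\mu)=\pi(\alpha)\pi(\mu\mid\alpha)f(s_\mu(x)\mid\mu)$ and verify both full conditionals: the $\mu$-conditional is immediately proportional to $\pi(\mu\mid\alpha)f(s_\mu(x)\mid\mu)$, i.e. the second generator, while the $\alpha$-conditional is proportional to $\pi(\alpha)\pi(\mu\mid\alpha)$, and sufficiency lets me replace $\pi(\mu\mid\alpha)$ by $\pi(s_\alpha(\mu)\mid\alpha)$ up to the $\alpha$-free factor $\pi(\mu\mid s_\alpha(\mu))$, recovering the first generator. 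Since compatible conditionals determine a unique joint under positivity, and that joint is invariant and, by the irreducibility underlying the convergence theorems, the unique stationary law of the Gibbs kernel, we obtain $\nu_0=\pi(\alpha,\mu)f(s_\mu(x)\mid\mu)$ up to normalisation. It remains to match this with Vanilla ABC: running Algorithm~\ref{ABCSimAlg} on $(\alpha,\mu)$ with joint prior $\pi(\alpha,\mu)$, likelihood $f(x\mid\mu)$ and statistic $s_\mu$ targets $\pi_\varepsilon\propto\int\pi(\alpha,\mu)f(x\mid\mu)\mathbf{1}_{d\{s_\mu(x),s_\mu(x^\star)\}<\varepsilon}\,\mathrm{d}x$, whose $\varepsilon\to0$ limit is $\pi(\alpha,\mu)f(s_\mu(x^\star)\mid\mu)$, exactly $\nu_0$.

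The main obstacle is the identification of $\nu_0$ with the compatible joint, rather than the convergence itself. Two points need care: the measure-theoretic meaning of the $\varepsilon\to0$ limits of the ABC conditionals, where the indicator concentrates on a lower-dimensional level set and one must argue, via a co-area or Lebesgue-differentiation argument, that the limiting normalised density is the density of the statistic, with the level-set Jacobian cancelling in the normalisation; and the guarantee that the abstract stationary law furnished by Theorem~\ref{THgen} genuinely coincides with the compatible joint, which rests on uniqueness of the invariant distribution under irreducibility. The algebra of the sufficiency reduction is routine once these are in place.
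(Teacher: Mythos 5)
Your proposal follows essentially the same route as the paper: existence and convergence of $\nu_\varepsilon$ to $\nu_0$ from Theorems \ref{THgen} and \ref{THgenconv}, then the Arnold--Press compatibility criterion combined with the factorisation theorem for the sufficient statistic $s_\alpha$ to identify $\nu_0$ with $\pi(\alpha,\mu)\pi(s_\mu(x)\mid\mu)$. You in fact supply more detail than the paper does (the explicit verification of both full conditionals of the candidate joint, and the caveats about the $\varepsilon\to0$ limit of the ABC conditionals), but the argument is the same one the paper sketches in the paragraph preceding the Proposition.
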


We can state similar results for non hierarchical models, although each model requires its own proof. For example, for the full dependency model \ref{Nonhier} with two parameters $\theta_1$ and $\theta_2$ we have the following Proposition:

\begin{proposition}

If $\pi(\theta_1,\theta_2)$ = $\pi(\theta_1)\pi(\theta_2)$ and $s_{\theta_1} = s_{\theta_2}$, as the tolerance $\varepsilon$ goes to zero and under the assumptions of Theorems \ref{THgengen} and \ref{THgenconv}, \ABCG and \ABCS have the same limiting distribution.

\end{proposition}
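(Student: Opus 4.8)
The plan is to show that, in this symmetric setting, the two \ABCG conditionals are exactly the full conditionals of the Vanilla ABC joint target, so that \ABCG is a genuine Gibbs sampler for that target and shares its stationary distribution at every positive tolerance, a fortiori in the limit.

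First I would write the Vanilla ABC joint density at tolerance $\varepsilon$ for the pair $(\theta_1,\theta_2)$, using the common statistic $s=s_{\theta_1}=s_{\theta_2}$ (a function of the data alone, as in the remark preceding this proposition) and the common tolerance $\varepsilon_1=\varepsilon_2=\varepsilon$:
\[
\pi_\varepsilon^{\mathrm{ABC}}\{\theta_1,\theta_2\mid s(x^\star)\} \propto \pi(\theta_1)\pi(\theta_2)\, g_\varepsilon(\theta_1,\theta_2), \qquad g_\varepsilon(\theta_1,\theta_2):=\int f(x\mid\theta_1,\theta_2)\,\mathbf 1_{d\{s(x),s(x^\star)\}<\varepsilon}\,\mathrm dx.
\]
I would then spell out the two conditionals used in Algorithm \ref{AGgen}. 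Because the prior factorises as $\pi(\theta_1)\pi(\theta_2)$ and the same statistic and tolerance enter both steps, the integrand, and hence $g_\varepsilon$, is literally the same function in both updates, giving
\[
\pi_{\varepsilon}\{\theta_1\mid s(x^\star,\theta_2)\} \propto \pi(\theta_1)\,g_\varepsilon(\theta_1,\theta_2), \qquad \pi_{\varepsilon}\{\theta_2\mid s(x^\star,\theta_1)\} \propto \pi(\theta_2)\,g_\varepsilon(\theta_1,\theta_2).
\]

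Next I would invoke the compatibility criterion of \cite{arnold1989compatible} already used in this section. The ratio of the two conditional densities is $\pi(\theta_1)/\pi(\theta_2)$, which factorises as $u(\theta_1)v(\theta_2)$ with $u(\theta_1)=\pi(\theta_1)$ and $v(\theta_2)=1/\pi(\theta_2)$; the common factor $g_\varepsilon$ cancels. Hence the conditionals are compatible, and the unique joint distribution admitting them as conditionals is proportional to $\pi(\theta_1)\pi(\theta_2)\,g_\varepsilon(\theta_1,\theta_2)$, that is, exactly $\pi_\varepsilon^{\mathrm{ABC}}$. A direct verification that $\pi_\varepsilon^{\mathrm{ABC}}$ is invariant under the composition of the two conditional updates would make this step self-contained without appealing to the compatibility theorem.

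Finally, by Theorem \ref{THgengen} the chain produced by Algorithm \ref{AGgen} converges geometrically to a stationary distribution $\nu_\varepsilon$, which is therefore unique; since $\pi_\varepsilon^{\mathrm{ABC}}$ is invariant, $\nu_\varepsilon=\pi_\varepsilon^{\mathrm{ABC}}$ for every $\varepsilon>0$. The identity thus holds already before passing to the limit, and letting $\varepsilon\to 0$ (the limit $\nu_0$ existing by Theorem \ref{THgenconv}) yields that \ABCG and \ABCS share the same limiting distribution. The one delicate point is the cancellation of $g_\varepsilon$ in the ratio of conditionals: it relies crucially on both the product prior and the identity $s_{\theta_1}=s_{\theta_2}$ with $\varepsilon_1=\varepsilon_2$. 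With distinct statistics the two integrals would carry different indicators, the ratio would no longer factorise, the conditionals would be generically incompatible, and both the argument and the conclusion would break down; this is the main obstacle that the hypotheses are designed to remove.
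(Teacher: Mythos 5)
Your proof is correct and follows essentially the same route as the paper, which establishes the result by observing (in Section \ref{Nonhier}) that with a product prior, a common statistic and a common tolerance, the two \ABCG conditionals are exactly the full conditionals of the vanilla \ABCS joint target proportional to $\pi(\theta_1)\pi(\theta_2)\int f(x\mid\theta_1,\theta_2)\mathbf 1_{d\{s(x),s(x^\star)\}<\varepsilon}\,\mathrm{d}x$, so that the unique stationary distribution guaranteed by Theorem \ref{THgengen} coincides with the \ABCS pseudo-posterior at every $\varepsilon>0$ and hence also in the limit $\varepsilon\to 0$ provided by Theorem \ref{THgenconv}. Your additional appeal to the compatibility criterion of Arnold and Press is not needed once the conditionals are identified as those of an explicitly given joint, as you yourself note, but it does no harm.
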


\section{Discussion}
\label{sec:dis}

The curse of dimensionality remains the major jamming block for the expansion
of ABC~methodology to more complex models as most of its avatars see their cost
rise with the dimensions of the parameter and of the data \citep{li2018}. This
is particularly the case for high-dimensional parameters, since they require
summary statistics that are at least of the same dimension and, unless the
model under study is amenable to easily computed estimates of these parameters,
a much larger collection of statistics is usually unavoidable. Breaking this
curse of dimensionality by Gibbs-like steps is thus as important for
ABC~methods as it was for Monte Carlo methods \citep{GelflandSmith}, as relying
on a small number of summary statistics facilitates the derivation of automated
or semi-automated approaches \citep{fearnhead2012,ABCRF} and offers the
potential for simulating pseudo-data of much smaller size. In appropriate
settings, \ABCG sampling provides a noticeable improvement of the efficiency of
approximate Bayesian computation methods. We have established some sufficient
conditions for the convergence of \ABCG~algorithms. Questions remain, from
checking such conditions in practice to a better understanding of the limiting
distributions from an inferential viewpoint. 
A Gibbs-like setting could also allow practitioners to embed their model in a higher-dimensional model with auxiliary variables,
with compatible conditionals and improved computational tractability.
In all cases, constructing or selecting a low-dimension
informative
summary statistic for the approximation of the conditionals might be an
unavoidable challenge to further improve the quality of the results. 

\section*{Acknowledgements}

This paper greatly benefited from early discussions with Anthony Ebert, Kerrie Mengersen, and Pierre Pudlo, as well as detailed and helpful suggestions from an anonymous reviewer, to whom we are most grateful. We also acknowledge the Jean Morlet Chair which partly supported meeting in the Centre International de Rencontres Math\'ematiques in Luminy. The second author is also affiliated with the Department of Statistics, University of Warwick.  He was supported in part the French government under management of Agence Nationale de la Recherche as part of the ``Investissements d’avenir'' program, reference ANR19-P3IA-0001 (PRAIRIE 3IA Institute).


\newpage
\setcounter{page}{1}
\setcounter{section}{7}
\setcounter{algocf}{4}
\setcounter{figure}{8}
\setcounter{theorem}{3}
\SetKwRepeat{Do}{do}{while}

\centerline{\bf Component-wise Approximate Bayesian Computation via Gibbs-like steps: Supplementary Material}

\section{Supplementary material for Section \ref{modeltoy}}
\label{SM}

\subsection{Checking the assumptions of Theorem \ref{THgenconv}}

\label{assumptioncheck}

In this section, we show that the assumptions of Theorem \ref{THgenconv} apply to the toy model of Equation \ref{eq:gelman} in Section \ref{modeltoy}.

We define $\mu_{-i} = (\mu_1,\dots , \mu_{i-1},\mu_{i+1}, \dots ,\mu_n) $. By conditional independence of the $\mu_i$ given $\alpha$, and choice of $s_\mu$, we have:

\[ \pi_{\varepsilon_\mu}\{ \cdot \mid s_\mu (x^\star, \alpha, \mu_{-i})\} = \pi_{\varepsilon_\mu}\{ \cdot \mid s_\mu(x^\star_i,\alpha)\}\]

The assumptions to check can be rewritten as:

\begin{enumerate}
\item[$(\mu_1)$]$\sup_{\alpha, \tilde{\alpha}} \Vert \pi_{\varepsilon_\mu}\{ \cdot \mid s_\mu(x^\star_1,\alpha) \} -  \pi_{\varepsilon_\mu}\{ \cdot \mid s_\mu(x^\star_1,\tilde{\alpha})\}\Vert_{TV} < 1/2$
\item[$\vdots$] 
\item[$(\mu_n)$]$\sup_{\alpha, \tilde{\alpha}} \Vert \pi_{\varepsilon_\mu}\{ \cdot \mid s_\mu(x^\star_n,\alpha) \} -  \pi_{\varepsilon_\mu}\{ \cdot \mid s_\mu(x^\star_n),\tilde{\alpha}\}\Vert_{TV} < 1/2$
\item[$(\alpha)$] $\sup_{\bmu}\Vert \pi_{\varepsilon_\alpha}\{ \cdot \mid s_\alpha(\bmu) \} -  \pi_{\varepsilon_\alpha}\{ \cdot \mid s_\alpha (\bmu)\}\Vert_{TV} < 1/2$
\end{enumerate}

To prove the assumption $(\mu_i)$, we underline the fact that it is sufficient to check that there exists some subset $K$ of the parameter space, with positive measure for all hyperparameter $\alpha$, such that $ \exists C >0 , \forall \alpha, \forall \mu \in K$,  $\pi_{\epsilon_\mu}( \mu \mid \alpha,s(x^*) )>C$. 

We can compute these densities:

\[ \pi_\varepsilon( \mu \mid \alpha,s(x^*) ) = \frac{\exp\{-(\mu - \alpha)^2/(2\tau) \int \exp(-(y-\mu)^2\sqrt{n}/(2\sigma)\} \mathbf{1}_{\vert y - \bar{x^\star} \vert < \varepsilon} \mathrm{d}y}{ \int\exp\{-(\mu - \alpha)^2/(2\tau) \exp(-(y-\mu)^2\sqrt{n}/(2\sigma) \} \mathbf{1}_{\vert y - \bar{x^\star} \vert < \varepsilon} \mathrm{d}y\mathrm{d}\mu}. \]

As $\alpha$ is compactly supported on $[-4,4]$, the conditions are verified: we can roughly bound the probabilities by continuity of the expression. 

The last condition $(\alpha)$ is always verified as we have by definition of the total variation distance:

\[  \sup_{\bmu} \Vert \pi_{\varepsilon_\alpha}(\cdot \mid \bmu) - \pi_{\varepsilon_\alpha} ( \cdot \mid \bmu) \Vert_{TV} = 0. \]


\subsection{Comparison in dimension 3}

\label{graphiques3D}

In addition to the results shown in Figure \ref{superp}, we show in Figure \ref{fig:toy1D} a comparison of \ABCG, Vanilla ABC and SMC-ABC for the toy model of Section \ref{modeltoy} in the low dimension case $n=2$.

As expected, in this low-dimension setting the results from SMC-ABC and Vanilla ABC are comparable to the approximate posterior provided by ABC-Gibbs for the parameter. \ABCG however seems to lead to a less stable approximation of the hyperparameter, this can be explained by the lower number of points (as we removed some of the first points as burn-in). This supports the idea that the behaviour of SMC-ABC in Figure \ref{superp} is caused by the high dimensionality. We believe that in higher dimension, SMC-ABC would require a very large number of particles, and a higher number of iteration each of which would cost much more in resampling, leading to a disastrous computational cost.

\begin{figure}
\centering
\includegraphics[scale=.7]{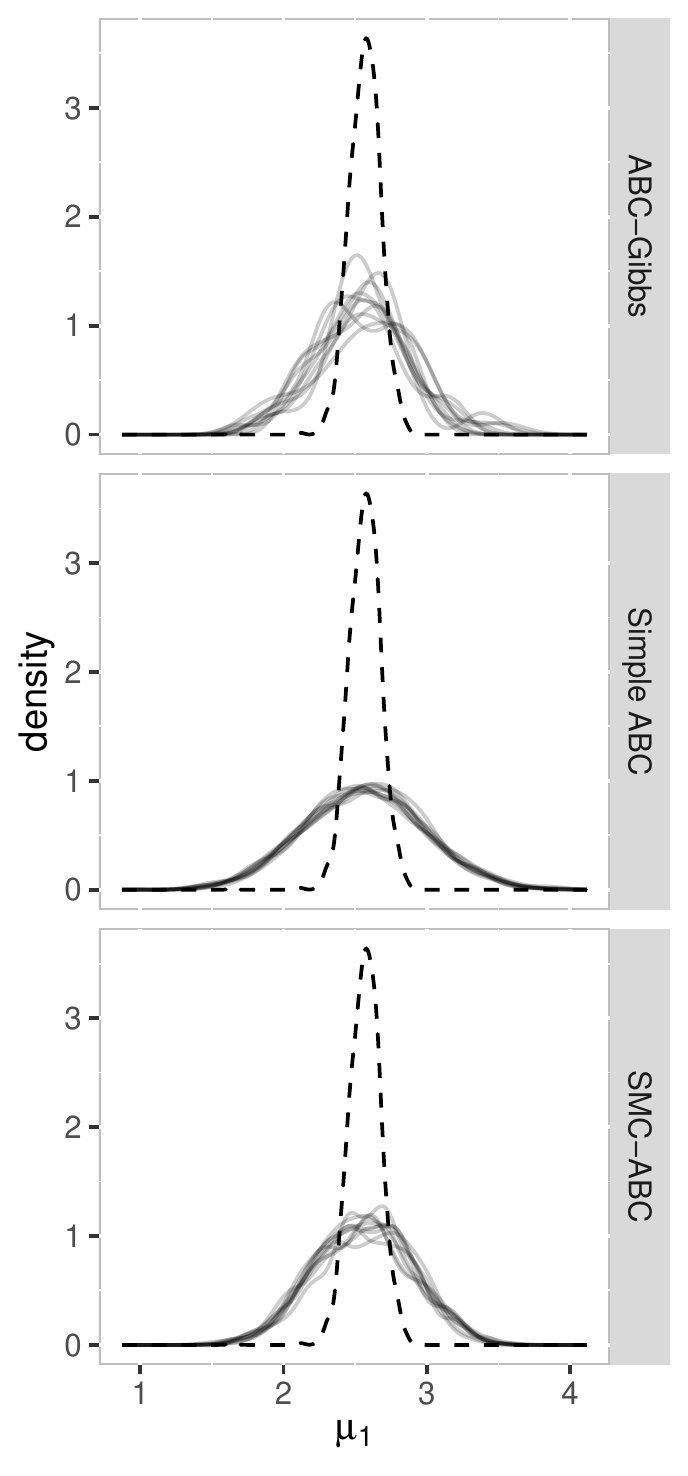} 
\includegraphics[scale=.7]{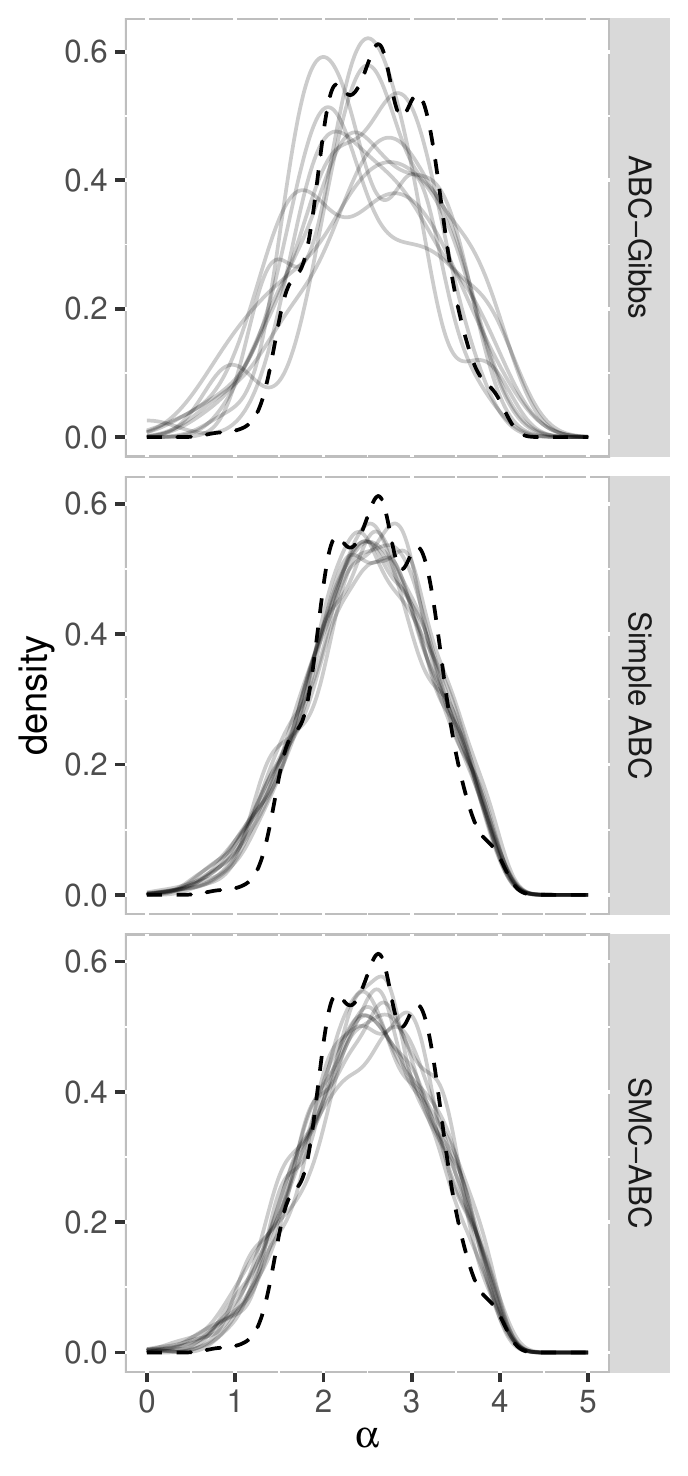} 

\caption{Parameter and hyperparameter for the Normal-Normal model with only two parameter and one hyperparameter.}
\label{fig:toy1D}
\end{figure}

\subsection{Code example}

\begin{lstlisting}[language=R,basicstyle=\ttfamily]


gibbsparam <- function(data, hyper, var, sigm, nbeps, qq) {
  #Gibbs step for the parameter, prior induced by the hyperparameter
  p = length(data)
  thetc = numeric(p)
  dists=0
  for (i in 1:p) { 
    thettest = rnorm(nbeps, hyper, var)
    test = rowMeans(matrix(rnorm(qq * nbeps, thettest, sigm), nrow = nbeps))
    dist = abs(test - data[i])
    thetc[i] = thettest[which.min(dist)]
    dists=dists+min(dist)
  }
  return(list(thetc,dists))
}



gibbshyper <- function(thet, nbeps2, var) {
  #Gibbs step for the hyperparameter, with uniform prior
  res = runif(nbeps2, -4, 4)
  test = rowMeans(matrix(rnorm(length(thet) *
  	nbeps2, res, var), ncol=length(thet)))
  dist = abs(test - mean(thet))
  return(list(res[which.min(dist)],min(dist)))
}

gibbstot <- function(data, thetini, hyperini, sigm, var, 
nbeps1, nbeps2, nbpts, qq) {
  #full function
  reshyper = rep(NA, nbpts + 1)
  resparam = matrix(NA, ncol = nbpts + 1, nrow = length(thetini))
  reshyper[1] = hyperini
  resparam[,1] = thetini
  resdist = rep(NA,nbpts)
  for (i in 2:(nbpts + 1)) {
    resdist[i-1]= 0
    VV=gibbsparam(data, reshyper[i - 1], var, sigm, nbeps1, qq)
    resparam[,i] = VV[[1]]
    resdist[i-1]= resdist[i-1]+VV[[2]]
    WW=gibbshyper(resparam[, i], nbeps2, var)
    reshyper[i] = WW[[1]]
  }
  return(list(reshyper,resparam,resdist))
}

\end{lstlisting}

This exemplar code presents a simple implementation of \ABCG for the
hierarchical normal model. The main function, \texttt{gibbstot} run iterations
of Gibbs, it starts with an initial point \texttt{thetini, hyperini} and sample
successively from the approximate conditionals for \texttt{nbpts} iterations.

Each of the approximate conditionals is sampled from by the functions
\texttt{gibbsparam} and \texttt{gibbshyper}, as their name indicate the first
one samples from $\pi_{\varepsilon_\mu}( \cdot \mid x,\alpha ) $ and the second
one $ \pi_{\varepsilon_\alpha}( \cdot \mid x,\bmu )$. Each one relies on the
use of a fixed sized reference table of size \texttt{nbeps1} et
\texttt{nbeps2}. After having simulated points from the prior, pseudo data is
simulated and compared (in the vectors \texttt{test} and \texttt{dist}) we
return the point with smallest distance.

The other parameters \texttt{var, sigm, qq} correspond to the
variance of the parameter given the hyperparameter, the variance of the
observation given the parameters and the number of observation for each
parameter, respectively.

\section{Implementation of SMC-ABC}\label{sec:SMC-ABC}

Our implementation of SMC-ABC merges the implementations of \cite{delmoral:doucet:jasra:2012} and \cite{toni2008}, in order to avoid degeneracy and arbitrary choice of the thresholds, as described in Algorithm \ref{SMC-ABC}.

\begin{algorithm}[!h]
\KwIn{number of iterations $T$, $M\geq 1$, $\varepsilon_0$, $N\geq 0$, $N_{min}\leq N$, $\alpha>0$.}
\KwOut{a sample $(\theta_1^0,\dots , \theta_N^0)$ from an initial distribution $\pi^0$.}
Compute for each of the value $\theta_i^0$ $M$ pseudo observations and the associated statistic stored in a vector $s^i_0$ of $M$ statistics\;
Set $\varepsilon_0 = \max_i(s_i^0)$\;
Set the weights $w_i^0=1/N$\;
\For{$j=1,\ldots,T$}{
	Compute $\varepsilon_j$ and $w_j^i$ by solving $ESS(\lbrace w^i_j \rbrace, \varepsilon_{j}) = \alpha ESS(\lbrace w^i_{j-1} \rbrace, \varepsilon_{j-1}) $,
	where,
	\[ w_j^i \propto w_{j-1}^i \frac{\sum_{k=1}^M \mathbf{1}_{<\varepsilon_j}(s^i_{j-1}[k])}{\sum_{k=1}^M \mathbf{1}_{<\varepsilon_{j-1}}(s^i_{j-1}[k])} \]
  \If{$ESS(\lbrace w^i_j \rbrace, \varepsilon_{j}) < N_{min}$}{
  Resample the value of the particles according to the weights $w_j^i$ \; We abusively use the same notation for the values after this step \;
  Set the new weights $w_j^i = 1/N$
  }
  \For {$i = 1, \dots , N$}{
  \Do {$\sum_{k=1}^M \mathbf{1}_{<\varepsilon_j}(s^{i*}_{j}[k]) =0$}{Sample $\theta_j^{i *}\sim K_j(\theta_j^i)$ and associated pseudo observations and statistics $s_j^{i*}$.}
  Set $\theta_j^i = \theta_j^{i*}$\;
  }
}
\caption{SMC-ABC}
\label{SMC-ABC}
\end{algorithm}

As underlined in \cite{delmoral:doucet:jasra:2012}, we can choose the kernel $K_j$ so that it depends on the value of the particles. Following custom, we choose $K_j$ to be a Gaussian kernel with covariance matrix $2*Corr(\theta_j^1 , \dots , \theta_j^N)$.

\section{Supplementary material: Moving average example}

\label{sec:MA}

\subsection{Model and implementation}
\label{sub:MA2}

\begin{figure}
\center
\begin{tikzpicture}

\node (a) at (-2.4, 1.2) {$\alpha$};
\node (b1) at (-1.2, 2.4) {$\mu_{1}$};
\node (b2) at (-1.2, 1.5)  {$\mu_{2}$};
\node at (-1.2, 0.75) {$\vdots$};
\node (b3) at (-1.2, 0) {$\mu_{n}$};

\draw [->] (a)--(b1);
\draw [->] (a)--(b2);
\draw [->] (a)--(b3);

\node (x1) at (0, 2.4) {$x_{1}$};
\node (x2) at (0, 1.5) {$x_{2}$};
\node (x3) at (0, 0) {$x_{n}$};
\node at (0, 0.75) {$\vdots$};

\draw [->] (b1)--(x1);
\draw [->] (b2)--(x2);
\draw [->] (b3)--(x3);

\node (s) at (2.4, 1.2) {$\varsigma$};
\node (s1) at (1.2, 2.4) {$\sigma_{1}$};
\node (s2) at (1.2, 1.5) {$\sigma_{2}$};
\node (s3) at (1.2, 0) {$\sigma_{n}$};
\node at (1.2, 0.75) {$\vdots$};

\draw [->] (s)--(s1);
\draw [->] (s)--(s2);
\draw [->] (s)--(s3);
\draw [->] (s1)--(x1);
\draw [->] (s2)--(x2);
\draw [->] (s3)--(x3);

\end{tikzpicture}

\caption{Hierarchical dependence structure used in the application of Section \ref{sec:stars}.}
\label{models}
\end{figure}
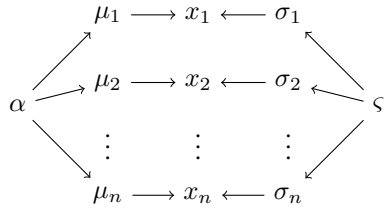

In this section, we study a hierarchical moving average model. A graphical representation of the hierarchy is shown in Figure \ref{models}. We denote $\mathcal{MA}_2(\mu,\sigma^2)$ the distribution of a second order moving average model with parameters $\mu=(\mu_1,\mu_2)$ and $\sigma^2$, that is:
\[
x(t) = y_t + \mu_1y_{t-1}+\mu_2y_{t-2}\,, \qquad \text{with }y_t \sim \mathcal{N}(0,\sigma^2) \text{ for integer }t\geq -1.
\]

We consider a hierarchical version of the $\mathcal{MA}_2$ model, consisting of $n$ parallel observed series and $3n+5$ parameters with the following dependencies and prior distributions: for $j = 1,\ldots, n$,
\[
x_j \sim \mathcal{MA}_2(\mu_j,\sigma^2_j)\,,\quad \sigma_j^2 \sim \mathcal{IG}(\varsigma_1,\varsigma_2)\,,\quad\mu_j = \left(\beta_{j,1} - \beta_{j,2},2(\beta_{j,1}+\beta_{j,2})-1\right) = (\mu_{j,1},\mu_{j,2})\,,
\]
where $(\beta_{j,1},\beta_{j,2},1-\beta_{j,1}-\beta_{j,2}) \sim \mathcal{D}ir(\alpha_1,\alpha_2,\alpha_3)$, and, if $\mathcal E$ denotes the exponential distribution and $\mathcal C_+$ the standard half-Cauchy distribution,
\[
\alpha = (\alpha_1,\alpha_2,\alpha_3) \sim\mathcal{E}(1)^{\otimes 3}\,,\quad
\varsigma  = (\varsigma_1,\varsigma_2) \sim\mathcal{C}_+^{\otimes 2}.
\]

We denote $w(x_j)$ the distance between the first two autocorrelations of $x_j$ and $x_j^{\star }$:
\[
w^2(x_j) =  \{\rho_1(x_j) - \rho_1(x^{\star}_j)\}^2 +  \{\rho_2(x_j) - \rho_2(x^{\star}_j)\}^2,
\]
and
\[
\overline{x_j} = \frac{1}{\lfloor T/3 \rfloor} \sum_{t=1}^{\lfloor T/3 \rfloor} x_j(3t)\,,\quad
v(x_j) = \frac{1}{\lfloor T/3 \rfloor} \left\vert \sum_{t=1}^{\lfloor T/3 \rfloor} (x_j(3t) - \overline{x_j} )^2 - \sum_{t=1}^{\lfloor T/3 \rfloor} (x_j^{\star}(3t) - \overline{x_j^{\star}})^2 \right\vert\,,
\]
where $T$ is the length of the time series. The rationale is that for a $\mathcal{MA}_2$ model $x(t)$ and $x(t+3)$ are independent. Vanilla \ABCS uses a related single pseudo-distance defined by 
\begin{equation}\label{eq:singledistance}
\delta(x)= \sum_{j=1}^n \left\{\frac{w(x_j)}{q_j}+\frac{v(x_j)}{q_j'}\right\}\,,
\end{equation}
where $q_j$ and $q_j'$ are the 0.1\% quantiles of $w(x_j)$ and $v(x_j)$, respectively. This choice is constrained by the fact that these quantities appear to have undefined mean and variance.

For the current model, we have the following implementation:
First, the $\mu_{j}$'s are updated using the pseudo-distance $d_{\mu_j}(x_j,x_j^*)=w(x_j)$.

 Second, the update of $\alpha$ relies on the sufficient statistic associated with Dirichlet distributions:
\[
\boldsymbol{\mu} \mapsto \left(\sum_j\log \{(\mu_{j,2} + 2\mu_{j,1} +1)/4\}, \sum_j \log\{(\mu_{j,2} + 2\mu_{j,1} +1)/4 \}-\mu_{j,1})\right)
\]
Third, the $\sigma_j$'s are updated using the pseudo-distance $d_{\sigma_j}(x_j,x_j^*)=v(x_j)$. And last, $\varsigma$ is updated using the standard sufficient statistic associated with gamma distributions.

The two algorithms output samples from the two pseudo-posteriors. To compare
the efficient of both samplers, we simulate new synthetic data from each
parameter set in the output, and compute the distance \eqref{eq:singledistance}
between observed and simulated samples, which is the distance used by \ABCS.
If \ABCG \ produces a smaller value than the \ABCS\ sampler
associated with this distance, this is an indicator of a better fit of the
\ABCG~distribution with the true posterior.  As in the previous experiment,
the total number of simulations of the time series is used 
as the default measure of the computational cost for the associated algorithm.

\subsection{Toy dataset}\label{sec:toy}

Consider a synthetic dataset of $n=5$ times series each with length $T=100$.
Both samplers return samples of size $N=1000$. The hyperparameters used to produce the
true parameters and the simulated observed series are $\alpha=(1,2,3)$ and $\varsigma = (1,1)$. In \ABCG,
the $\mu_{j}$'s are updated based on $N_\mu=1000$ time series, while the other parameters are updated based on $N_\alpha=N_\sigma=N_\varsigma=100$ replicas.
The overall computational cost for  \ABCG is $\Ntot=5.5 \cdot 10^6$, also used by \ABCS to run
$1.1 \cdot 10^6$ simulations of the whole hierarchy. The computational cost is
slightly superior for \ABCS, as we have to simulate many more Dirichlet and Gamma random variables.

When evaluating the mean of the posterior predictive distance
\eqref{eq:singledistance}, \ABCG achieves an average of $274.1 \pm 2.5$, and
\ABCS an average of $436.8 \pm 1.6$, based on 100 replicates. The sample output by
\ABCG thus offers a noticeably better quality than the one generated by \ABCS from this perspective.
The \ABCS output barely differs from a simulation from the prior, as shown in Figure \ref{toyMoving average}
 for the parameter $\mu_1$. 


\begin{figure}
\center
\includegraphics[scale=.45]{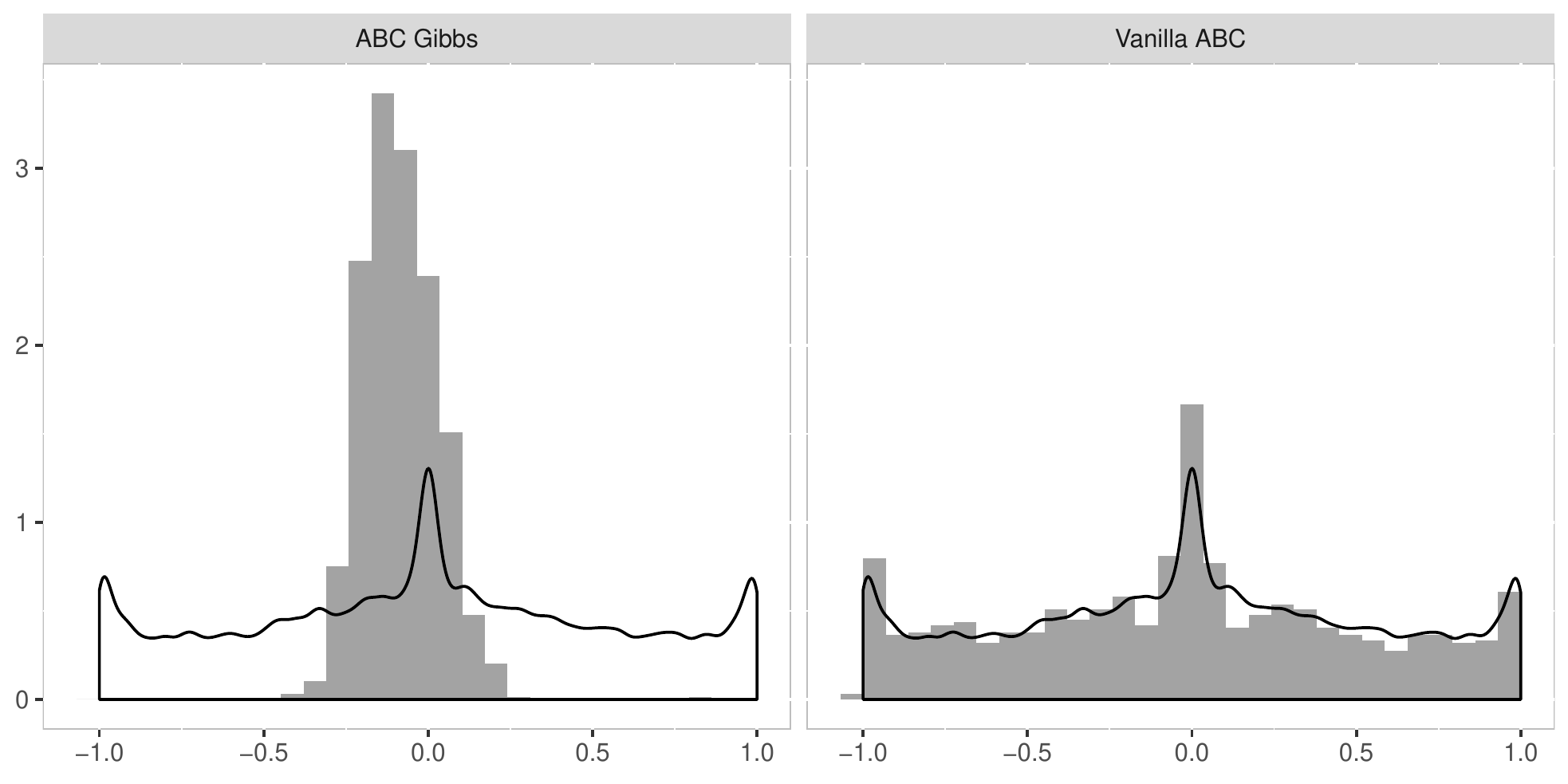} 
\caption{For the toy dataset of subsection \ref{sec:toy}, approximate posterior of $\mu_1$ compared with the prior for \ABCG\ (left) and \ABCS\ (right). The true value was $-0.06$.}
\label{toyMoving average}
\end{figure}

\subsection{Stellar flux}\label{sec:stars}

We now apply this model to stellar flux data. The 8GHz daily flux emitted by seven stellar objects is analysed in
\cite{Laziodualfrequency}, and the data were made public by the Naval Research Laboratory: \url{https://tinyurl.com/yxorvl4u}.
Once a few missing observations have been removed, \cite{Laziodualfrequency} suggest that the model 
described in Section \ref{sub:MA2} may be well suited to these data, with $T=208$.
In \ABCG, the $\mu_{j}$'s are updated based on $N_\mu=500$ time series,
while the other parameters require $N_\alpha = N_\varsigma = N_\sigma=100$ replicas. (The overall computing time is the same for the toy and the current datasets, that is, one hour on an Intel Xeon CPU E5-2630 v4 with rate 2.20GHz.)

The average posterior distance to the observed sample is  $232.8\pm1.25$ for
\ABCG and $535\pm0.95$ for \ABCS. The poor fit of the latter is confirmed in
Figure \ref{RealMoving average}, as it again stays quite close to the prior for the $\mu$'s. Since our
model differs from the one proposed in \cite{Laziodualfrequency}, estimators cannot be directly compared.

\begin{figure}
\center
\includegraphics[scale=.45]{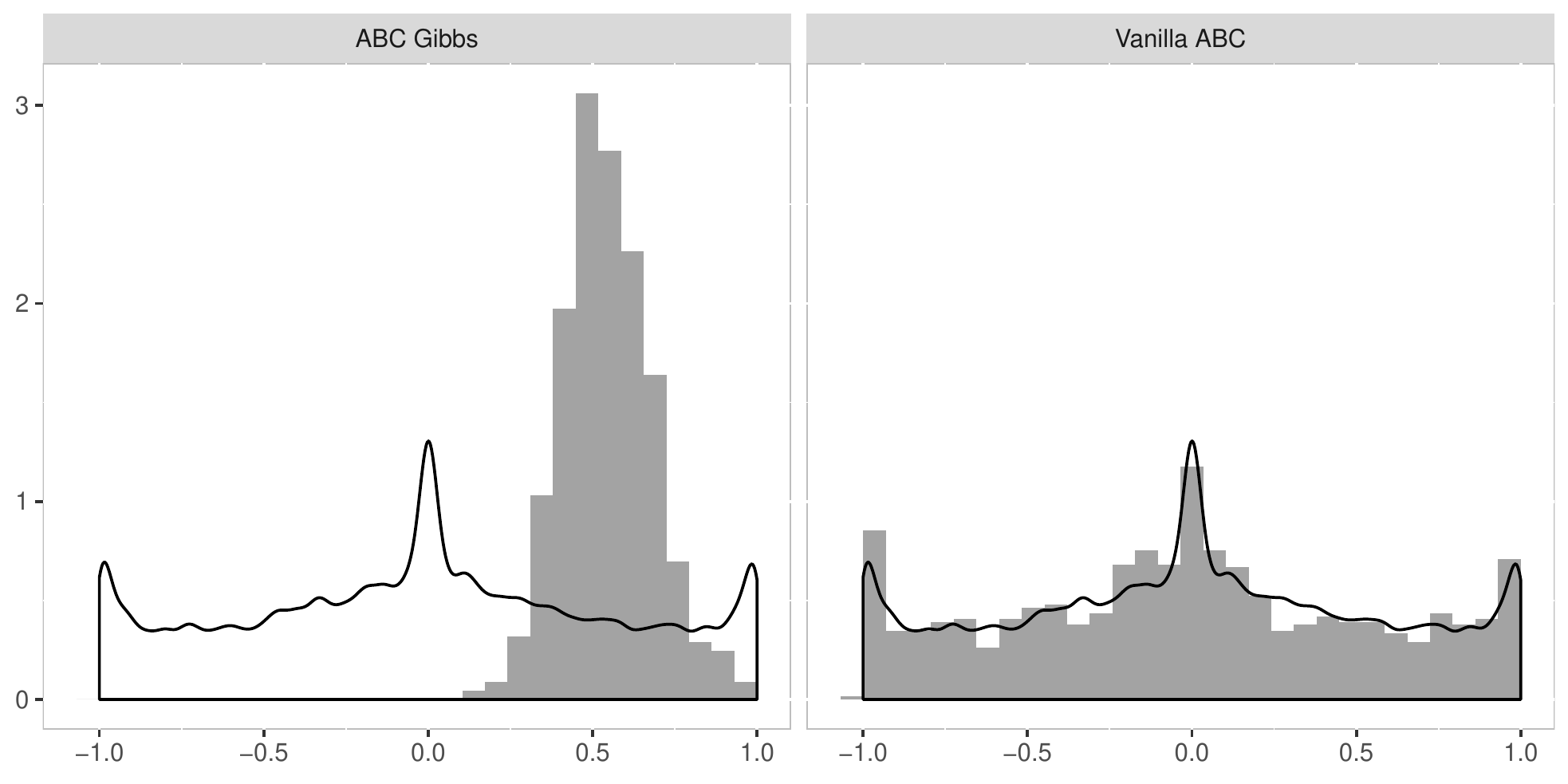} 
\caption{\label{RealMoving average} For the stellar dataset of subsection \ref{sec:stars}, approximate posterior of $\mu_1$ compared with the prior for \ABCG\ (left) and \ABCS\ (right)}
\end{figure}

\section{Supplementary material: proofs of theorems}

In this supplementary material, we define $\Theta_j$ as the domain of $\theta_j$. For the proofs that pertain to model \eqref{eq:modelun} we define $\mathcal{A}$ as the domain of $\alpha$  and $\mathcal{B}$ as the domain of $\mu$. For a space $E$, $\mathcal{P}(E)$ is the space of the probability distributions over $E$.

\subsection{Generalities on total variation distance}

The main tool in our proofs is the total variation distance used by \citet{nummelin:1978} and \citet{meyn:twee:1993}. Let $\nu$ and $\tilde{\nu}$ be two probability distributions over the same space $E$. A coupling $\gamma$ between $\nu$ and $\tilde{\nu}$  is a probability distribution on $E \times E$ such that $\int \gamma(x,y)\mathrm{d}x = \nu$ and $\int \gamma(x,y)\mathrm{d} y = \tilde{\nu}$. Let $\Gamma(\nu,\tilde{\nu})$ denote the set of all couplings between $\nu$ and $\tilde{\nu}$. Then the total variation distance is defined as
\[
\Vert \nu - \tilde{\nu} \Vert_{TV} =\frac{1}{2} \inf_{\gamma \in \Gamma(\nu,\tilde{\nu})}  \mathrm{pr}(x \neq y \mid (x,y) \sim \gamma).
\]
 To handle this distance, we build an explicit coupling between the distributions: this provides an upper bound on the total variation distance. Note that there always exists an optimal coupling between two distributions, that is a coupling $\gamma_0$ such that $\Vert \nu - \tilde{\nu} \Vert_{TV} = \frac{1}{2} \mathrm{pr}(x \neq y \mid (x,y) \sim \gamma_0) $.

\subsection{Proof of Theorem \ref{THgen}}\label{proof:THgen}

In this proof, we drop the conditionings on $x^\star$, $s_1$, and $s_2$, as they have no use in the computations and create a notational burden.

We only need to prove that the Markov chain $(\theta_1^{(i)})_{1\leq i\leq N}$ has a stationary distribution. We show that $Q \ : \mathcal{P}(\Theta_1) \rightarrow \mathcal{P}(\Theta_1)$, the mapping associated with the transition kernel, is a contraction; that is, we prove that there exists $L>1$ such that for all $\nu$ and $\tilde\nu$ in $\mathcal P(E)$
\[
\Vert Q\nu - Q\tilde{\nu} \Vert_{TV} \leq L \Vert \nu - \tilde{\nu} \Vert_{TV}.
\]
To build a coupling between $Q\nu$ and $Q\tilde{\nu}$ we construct a coupling kernel $\tilde{Q} : \mathcal{P}(\Theta_1 \times \Theta_1) \rightarrow \mathcal{P}(\Theta_1 \times \Theta_1) $, which takes a coupling $\xi_0$ as argument, such that $\int \tilde{Q}\xi_0(x,y) \mathrm{d}x = Q\nu(y) $ and $\int \tilde{Q}\xi_0(x,y) \mathrm{d}y = Q\tilde{\nu}(x)$. This coupling kernel is explicitly defined by the following procedure, which takes as input $(\theta_1,\tilde{\theta}_1)\sim \xi_0$ a coupling of $\nu$ and $\tilde{\nu}$, and returns $(\theta_1',\tilde{\theta}_1') \sim \tilde{Q}\xi_0$:
\begin{algorithm}[h!]
\KwIn{$(\theta_1,\tilde{\theta}_1) \sim \xi_0$, $\xi_1( \cdot \mid \theta_1,\tilde{\theta}_1)$ an optimal coupling between $\pi_{\varepsilon_2}(\cdot \mid \theta_1)$ and $\pi_{\varepsilon_2}(\cdot \mid \tilde{\theta}_1)$,  $\xi_2( \cdot \mid \theta_2,\tilde{\theta}_2)$ an optimal coupling between $\pi_{\epsilon_1}(\cdot \mid \theta_2)$ and $\pi_{\varepsilon_1}(\cdot \mid \tilde{\theta}_2)$.}
\KwOut{$(\theta_1',\tilde{\theta}_1') \sim \tilde{Q}\xi_0$.}
$(\theta_2,\tilde{\theta}_2) \sim \xi_1(\cdot \mid \theta_1,\tilde{\theta}_1)$;

$(\theta_1',\tilde{\theta}_1') \sim \xi_2(\cdot \mid \theta_2, \tilde{\theta}_2) $.
\caption{Coupling procedure for Theorem \ref{THgen}}
\end{algorithm}

This procedure satisfies the property that if $\theta_1=\tilde{\theta}_1$ then $\theta_1'=\tilde{\theta}_1'$, since for any distribution $\nu_0$, the optimal coupling between $\nu_0$ and itself is $(x,y) \mapsto \nu_0(x)\delta_{x=y} $.

The proofs choose $\xi_0$ as the optimal coupling between $\nu$ and $\tilde{\nu}$. In the following, $\tilde{\gamma} =\tilde{Q}\xi_0$, so that
\begin{align*}
\begin{split}
\Vert Q\nu - Q\tilde{\nu} \Vert_{TV}  &= \frac{1}{2} \inf_{\gamma \in \Gamma(Q\nu,Q\tilde{\nu})}\mathrm{pr}(\theta_1' \neq \tilde{\theta}_1' \mid (\theta_1,\tilde{\theta}_1) \sim \gamma) \\
&\leq \frac{1}{2} \mathrm{pr}(\theta_1' \neq \tilde{\theta}_1' \mid \theta_1 = \tilde{\theta}_1, (\theta_1',\tilde{\theta}_1') \sim \tilde{\gamma})\mathrm{pr}_{\xi_0}(\theta_1 = \tilde{\theta}_1) \\
& \hspace{1cm}+  \frac{1}{2} \mathrm{pr}(\theta_1' \neq \tilde{\theta}_1' \mid \theta_1 \neq \tilde{\theta}_1, (\theta_1',\tilde{\theta}_1') \sim \tilde{\gamma})\mathrm{pr}_{\xi_0}(\theta_1\neq \tilde{\theta}_1) \\
&\leq \frac{1}{2} \mathrm{pr}(\theta_1' \neq \tilde{\theta}_1' \mid \theta_1 \neq \tilde{\theta}_1, (\theta_1',\tilde{\theta}_1') \sim \tilde{\gamma})\mathrm{pr}_{\xi_0}(\theta_1 \neq \tilde{\theta}_1)\\
& \leq \Vert \nu - \tilde{\nu} \Vert_{TV}\mathrm{pr}(\theta_1' \neq \tilde{\theta}_1' \mid \theta_1 \neq \tilde{\theta}_1, (\theta_1' \neq\tilde{\theta}_1') \sim \tilde{\gamma}).\\
\end{split}
\end{align*}
It is now sufficient to bound $\mathrm{pr}(\theta_1' \neq
\tilde{\theta}_1' \mid \theta_1 \neq \tilde{\theta}_1,
(\theta_1',\tilde{\theta}_1') \sim \tilde{\gamma}) = 1 -
\mathrm{pr}(\theta_1' = \tilde{\theta}_1' \mid \theta_1 \neq
\tilde{\theta}_1, (\theta_1',\tilde{\theta}_1') \sim \tilde{\gamma}) $, that
is to find a lower bound on the probability that two different values $\theta_1$
and $\tilde{\theta}_1$ transition to the same value.

If $\theta_2=\tilde{\theta}_2$ then necessarily, $\theta_1'=\tilde{\theta}_1'$,
in other words, if the coupling is successful at the first step of the
procedure it is sufficient. This means that a lower bound on the coupling
probability is the coupling probability at the first step of the procedure.
Now,

\begin{align*}
\mathrm{pr}(\theta_1' = \tilde{\theta}_1' \mid \theta_1 \neq \tilde{\theta}_1, (\theta_1',\tilde{\theta}_1') \sim \tilde{\gamma}) \geq& 1-2 \Vert \pi_\varepsilon( \cdot \mid \theta_1) - \pi_\varepsilon( \cdot \mid \tilde{\theta}_1) \Vert_{TV}\\
\geq& 1-2\kappa >0.
\end{align*}

This proves that the map $ Q: \nu \mapsto Q\nu $ is a contraction. The space of
all measures on $\mathcal{A}$ is complete when endowed with the total variation
distance. Furthermore, the subspace of all probability distributions on
$\Theta_1$ is stable by $Q$. Hence, by the Banach fixed-point theorem,
it enjoys a fixed point and in particular the sequence $ (Q^n\pi) $, with $\pi$
an arbitrary prior distribution, converges to this fixed point with rate
$1-2\kappa$.

\subsection{Proof of Theorem \ref{THgenconv}}

The assumptions on $L_2$ and $L_0$ imply with the triangular inequality that the assumptions of Theorem \ref{THgen} are verified, and thus that $\mu_\varepsilon$ exists.

In this proof, we need a coupling between two chains with different transition
kernels. Let $\nu_\varepsilon$ be the target distribution of the approximate
Gibbs sampler and $\nu_0$ be the target distribution of the exact Gibbs
sampler. Let $(\theta_1,\tilde{\theta}_1)$ be a realisation of an optimal
coupling $\xi_0$ between $\nu_\varepsilon$ and $\nu_0$. As before we propose a coupling procedure:

\begin{algorithm}[h!]
\KwIn{$(\theta_1,\tilde{\theta}_1) \sim \xi_0$, $\xi_3(\cdot \mid \theta_1,\tilde{\theta}_1)$ an
optimal coupling between $ \pi_\varepsilon( \cdot \mid \theta_1) $  and $\pi(
\cdot \mid \tilde{\theta}_1)$, $\xi_4(\cdot \mid
\theta_2,\tilde{\theta}_2)$ an optimal coupling between $ \pi_\eta( \cdot \mid
\theta_2) $  and $\pi( \cdot \mid \tilde{\theta}_2)$.}
\KwOut{$(\theta_1',\tilde{\theta}_1') \sim \tilde{Q}\xi_0$.}
$(\theta_2,\tilde{\theta}_2) \sim \xi_3(\cdot \mid \theta_1,\tilde{\theta}_1)$;

$(\theta_1',\tilde{\theta}_1') \sim \xi_4(\cdot \mid \theta_2, \tilde{\theta}_2) $.
\caption{Coupling procedure for Theorem \ref{THgenconv}}
\end{algorithm}

As the distributions $\nu_\varepsilon$ and $\nu_0$ are stationary for the evolution process, we have
\begin{align*}
\mathrm{pr}(\theta_1' \neq \tilde{\theta}_1') =& \mathrm{pr}(\theta_1' \neq \tilde{\theta}_1' \mid \theta_1 \neq \tilde{\theta}_1 )\mathrm{pr}(\theta_1 \neq \tilde{\theta}_1) +\mathrm{pr}(\theta_1' \neq \tilde{\theta}_1'\mid \theta_1 = \tilde{\theta}_1)\mathrm{pr}(\theta_1' = \tilde{\theta}_1') \\
\leq& \frac{\mathrm{pr}(\theta_1' \neq \tilde{\theta}_1'\mid \theta_1 = \tilde{\theta}_1)}{\mathrm{pr}(\theta_1' = \tilde{\theta}_1'\mid \theta_1 \neq \tilde{\theta}_1)}.
\end{align*}
As before we use a rough bound on the denominator:
\begin{align*}
\mathrm{pr}(\theta_1' = \tilde{\theta}_1'\mid \theta_1 \neq \tilde{\theta}_1) \geq& (1 - 2 \sup_\varepsilon\sup_{\theta_1,\tilde{\theta}_1} \Vert \pi_\varepsilon( \cdot \mid \theta_1) - \pi( \cdot \mid \tilde{\theta}_1)\Vert_{TV}) \\
\geq& 1-2L_0.
\end{align*}
For the numerator, we have, with $\theta_2$ and $\tilde{\theta}_2$ the transitory values of the second parameter, 
\begin{align*}
\mathrm{pr}(\theta_1' \neq \tilde{\theta}_1'\mid \theta_1 = \tilde{\theta}_1) &\leq \mathrm{pr}(\theta_1' \neq \tilde{\theta}_1' \mid \theta_2 = \tilde{\theta}_2) \mathrm{pr}(\theta_2 \neq \tilde{\theta}_2 \mid \theta_1 = \tilde{\theta}_1) \\
& \hspace{2cm}+ \mathrm{pr}(\theta_1' \neq \tilde{\theta}_1' \mid \theta_2 \neq \tilde{\theta}_2) \mathrm{pr}(\theta_2 \neq \tilde{\theta}_2 \mid \theta_1 = \tilde{\theta}_1)\\
&\leq \sup_{\theta_2}\mathrm{pr}\{\theta_1 \neq \tilde{\theta}_1 \mid (\theta_1,\tilde{\theta}_1) \sim \xi_4(\cdot \mid \theta_2,\theta_2)\} + \sup_{\vartheta_1} \mathrm{pr}\{\theta_2 \neq \tilde{\theta}_2 \mid (\theta_2,\tilde{\theta}_2)\sim \xi_3(\cdot \mid \vartheta_1,\vartheta_1)\} \\
&\leq 2L_1(\varepsilon_1) + 2L_2(\varepsilon_2) 
\end{align*} 
Putting together both estimates gives the bound of the theorem.

\subsection{Proofs specific to the hierarchical case}
\label{specificproofs}

In addition to the general theorems presented in the main paper, we provide in this subsection 
convergence results which are specific to hierarchical models, with assumptions which may be more intuitive or 
easier to verify.
specific convergence results. 
They are based on a particular implementation of
ABC-Gibbs, presented for $n=1$ and in the case of an analytically available
conditional density $\pi(\mu \mid \alpha,x^\star)$, in Algorithm \ref{algohierar}. We will
gradually weaken the assumptions to finally prove Theorem \ref{thm:k1k2k3}:

\begin{theorem}
Assume there exists a non-empty convex set $C$ with positive prior measure such that
\begin{align*}
\kappa_1=&\inf_{s_\alpha(\mu) \in C} \pi(B_{s_\alpha(\mu),\epsilon_\alpha/4})>0\,, \\
\kappa_2=&\inf_{\alpha} \inf_{s_\alpha(\mu) \in C} \pi_{\varepsilon_\mu}\{B_{s_\alpha(\mu),3\epsilon_\alpha/2} \mid s_\mu(x^\star,\alpha),\alpha \}>0\,, \\
 \kappa_3 =& \inf_{\alpha} \pi_{\varepsilon_\mu}\{s_\alpha(\mu) \in C \mid s_\mu(x^\star,\alpha)\} > 0\,,
\end{align*}
where $B_{z,h}$ denotes the ball of centre $z$ and radius $h$. Then the Markov chain produced by Algorithm \ref{algohier} converges geometrically in
total variation distance to a stationary distribution $\nu_{\varepsilon}$, with
geometric rate  $1-\kappa_1\kappa_2\kappa_3^2 $.
\label{thm:k1k2k3}
\end{theorem}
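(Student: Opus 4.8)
The plan is to follow the coupling-and-contraction strategy already used for Theorem~\ref{THgen}. I would view Algorithm~\ref{algohier} (with $n=1$) as a Markov chain on the hyperparameter $\alpha$, whose one-step kernel $P$ sends $\alpha$ to $\alpha'$ by first drawing $\mu\sim\pi_{\varepsilon_\mu}\{\cdot\mid s_\mu(x^\star,\alpha),\alpha\}$ and then $\alpha'\sim\pi_{\varepsilon_\alpha}\{\cdot\mid s_\alpha(\mu)\}$, and I would prove that the induced map on $\mathcal{P}(\mathcal{A})$ is a contraction in total variation by exhibiting, for any two starting points $\alpha$ and $\tilde\alpha$, a coupling of the two full sweeps that coalesces in one step with probability at least $\kappa_1\kappa_2\kappa_3^2$. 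Geometric convergence with rate $1-\kappa_1\kappa_2\kappa_3^2$ then follows exactly as in Section~\ref{proof:THgen}: the space $\mathcal{P}(\mathcal{A})$ is complete and stable under $P$, so the Banach fixed-point theorem applies once one has $\Vert P\nu-P\tilde\nu\Vert_{TV}\leq(1-\kappa_1\kappa_2\kappa_3^2)\Vert\nu-\tilde\nu\Vert_{TV}$.

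For the coupling I would run the two $\mu$-draws independently. By the definition of $\kappa_3$, each draw places its summary in the convex anchor set $C$ with probability at least $\kappa_3$, so with probability at least $\kappa_3^2$ one has simultaneously $z:=s_\alpha(\mu)\in C$ and $\tilde z:=s_\alpha(\tilde\mu)\in C$; this is the origin of the squared factor. Conditionally on this event, the two $\alpha$-updates must be coupled so as to coincide, and this is where $\kappa_1$ and $\kappa_2$ enter. I would feed both chains a common prior proposal $\alpha^c\sim\pi$: by the definition of $\kappa_1$, this proposal lands in $B_{z,\epsilon_\alpha/4}$ with probability at least $\kappa_1$, which, using convexity of $C$ together with $\tilde z\in C$ and the triangle inequality, places $\alpha^c$ in a region compatible with both reference summaries at once. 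The role of $\kappa_2$ is then to control the approximate summary draw attached to this common proposal, guaranteeing uniformly that the associated $s_\alpha(\cdot)$ stays within the enlarged ball $B_{z,3\epsilon_\alpha/2}$, so that the acceptance indicator of the $\alpha$-step fires for both chains on the very same proposal and forces $\alpha'=\tilde\alpha'$.

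The arithmetic of the radii is what dictates the specific constants $\epsilon_\alpha/4$ and $3\epsilon_\alpha/2$: a proposal within $\epsilon_\alpha/4$ of $z$, a current summary in $C$, and a fresh summary within $3\epsilon_\alpha/2$ of $z$ must combine, through two or three triangle inequalities, to keep every relevant distance below the tolerance $\epsilon_\alpha$ simultaneously for both chains. Once the per-step coalescence probability is bounded below by $\kappa_1\kappa_2\kappa_3^2$, the contraction estimate and the conclusion follow verbatim from the computation in Section~\ref{proof:THgen}.

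The main obstacle I anticipate is precisely this geometric coupling lemma: checking that, on the event that both current summaries lie in the convex set $C$, a single shared prior proposal together with its approximate summary draw is accepted by \emph{both} chains, rather than only overlapping in distribution. One must track carefully which draws go through the prior conditional (entering via $\kappa_1$) and which go through the approximate conditional $\pi_{\varepsilon_\mu}$ (entering via $\kappa_2$ and $\kappa_3$), and verify that the convexity of $C$ and the chosen radii make the two acceptance events coincide. The strict positivity of $\kappa_1,\kappa_2,\kappa_3$, which implicitly constrains $C$ to be compatible with the scale $\epsilon_\alpha$, is exactly what keeps the resulting rate strictly below $1$.
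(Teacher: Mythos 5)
Your overall strategy --- viewing the sweep as a Markov chain on $\alpha$, coupling two copies through a shared prior proposal, and concluding by contraction and the Banach fixed-point theorem as in Section~\ref{proof:THgen} --- is exactly the paper's, and your explanation of the factor $\kappa_3^2$ (both chains' summaries landing in $C$) is correct. However, the geometric coupling lemma that you yourself flag as the main obstacle is where your argument has a genuine gap, because you have misassigned the roles of $\kappa_1$ and $\kappa_2$. In the paper, $\kappa_2$ does \emph{not} control the pseudo-summary attached to the common proposal: it lower-bounds the probability, under the approximate conditional $\pi_{\varepsilon_\mu}\{\cdot\mid s_\mu(x^\star,\tilde\alpha),\tilde\alpha\}$, that the \emph{second chain's} current summary $s_\alpha(\tilde\mu)$ falls within $3\epsilon_\alpha/2$ of the first chain's summary $s_\alpha(\mu)$. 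This step is indispensable: if $z=s_\alpha(\mu)$ and $\tilde z=s_\alpha(\tilde\mu)$ are further than $2\epsilon_\alpha$ apart, no single proposal can be accepted by both chains, and nothing in your argument rules this out --- both summaries lying in $C$ does not make them close to each other. Without a lower bound on $\mathrm{pr}\{\eta(z,\tilde z)\le 3\epsilon_\alpha/2\}$, the claimed coalescence probability $\kappa_1\kappa_2\kappa_3^2$ cannot be established.

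Once that event is secured, $\kappa_1$ enters differently from what you describe: it is the prior predictive probability that the proposal's pseudo-summary $s_\alpha(\mu^c)$ (with $\alpha^c\sim\pi$ and $\mu^c\sim\pi(\cdot\mid\alpha^c)$, so the relevant object is $s_\alpha(\mu^c)$, not $\alpha^c$ itself) lands in the ball of radius $\epsilon_\alpha/4$ centred at the \emph{midpoint} $(z+\tilde z)/2$. With $\eta(z,\tilde z)\le 3\epsilon_\alpha/2$, the triangle inequality gives $3\epsilon_\alpha/4+\epsilon_\alpha/4=\epsilon_\alpha$, so such a pseudo-summary is accepted by both chains simultaneously; and the convexity of $C$ is used precisely here, to guarantee that the midpoint lies in $C$ so that the infimum defining $\kappa_1$ applies to the midpoint ball, via $\mathbf{1}_{\{(z+\tilde z)/2\in C\}}\ge\mathbf{1}_{\{z\in C\}}\mathbf{1}_{\{\tilde z\in C\}}$. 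Your version, which centres the acceptance ball at $z$ alone, fails arithmetically: a point within $\epsilon_\alpha/4$ of $z$ can be as far as $\epsilon_\alpha/4+3\epsilon_\alpha/2=7\epsilon_\alpha/4>\epsilon_\alpha$ from $\tilde z$, so the second chain's acceptance indicator need not fire.
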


The rate in Theorem \ref{thm:k1k2k3} is uninformative, as it is specific to the selected implementation. 

\begin{algorithm}[h!]
\KwIn{$\alpha^{(0)} \sim \pi(\alpha)$, $\mu^{(0)} \sim \pi(\mu \mid \alpha^{(0)})$.}
\KwOut{A sample $(\alpha^{(i)},\mu^{(i)})_{1 \leq i\leq N}$.}
\For{$i = 1 , \dots, N$}{
 $\mu^c \sim \pi(\cdot \mid \alpha^{(i-1)}, x^\star)$
 
 $\alpha^c \sim  \pi$
 
 $\tilde{\mu} \sim \pi(\cdot \mid \alpha^c)$
 
 \eIf{$\eta\{s_\alpha({\mu}),s_\alpha(\mu^c)\} <  \epsilon_\alpha$}{
 $\mu^{(i+1)} \leftarrow \mu^c$
 
 $\alpha^{(i+1)} \leftarrow \alpha^c$
}{
 $\mu^{(i+1)} \leftarrow \mu^{(i)}$
 
 $\alpha^{(i+1)} \leftarrow \alpha^{(i)}$
}
}
\caption{Implementation of ABC-Gibbs used in the proofs.}
\label{algohierar}
\end{algorithm}

First we state the most restrictive result:
\begin{theorem}
Assume that the following conditions are both satisfied:
\begin{align*}
\kappa_1&=\inf_{\mu} \pi(B_{s_\alpha(\mu),\epsilon_\alpha/4})>0 \\
\kappa_2&=\inf_{\alpha} \inf_{\mu} \pi(B_{s_\alpha(\mu),3\epsilon_\alpha/2} \mid \alpha,x)>0.
\end{align*}
Then, the Markov chain associated with Algorithm \ref{algohierar} enjoys an
invariant distribution, and it converges geometrically to this invariant measure
with rate $1-\kappa_1\kappa_2$ for the total variation distance.
\end{theorem}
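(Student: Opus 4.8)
The plan is to establish uniform, and hence geometric, ergodicity of the chain $(\alpha^{(i)},\mu^{(i)})$ produced by Algorithm \ref{algohierar} by a coupling argument in the spirit of the proof of Theorem \ref{THgen}, exploiting two structural features of this particular implementation. First, the conditional $\pi(\mu\mid\alpha,x^\star)$ is sampled \emph{exactly}. Second, both the acceptance decision and the proposed value governing the update of $\alpha$ depend on the current state only through $\alpha$: on rejection the whole state is retained, while on acceptance $\alpha$ moves to $\alpha^c$ and $\mu$ is merely refreshed from $\pi(\cdot\mid\alpha,x^\star)$. In particular the marginal process $(\alpha^{(i)})$ is itself Markov, and the coordinate $\mu$ behaves as a passenger. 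It therefore suffices to couple two copies started from arbitrary states so that they coalesce with probability bounded below by $\kappa_1\kappa_2$ at each attempt.

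Concretely, I would run two copies $(\alpha,\mu)$ and $(\tilde\alpha,\tilde\mu)$ driven by \emph{common} randomness for the prior draw $\alpha^c\sim\pi$ and the prior-predictive draw $\tilde\mu\sim\pi(\cdot\mid\alpha^c)$, while the two exact draws $\mu^c\sim\pi(\cdot\mid\alpha,x^\star)$ and $\tilde\mu^c\sim\pi(\cdot\mid\tilde\alpha,x^\star)$ are generated separately. The $\alpha$ coordinates coalesce exactly when both chains accept this shared $\alpha^c$, that is when the common proposal summary $s_\alpha(\tilde\mu)$ lies within $\epsilon_\alpha$ of both $s_\alpha(\mu^c)$ and $s_\alpha(\tilde\mu^c)$. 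The two constants are used to force this event, uniformly over the starting states: $\kappa_2$, uniform in $\alpha$ thanks to its infimum over $\alpha$, lower-bounds the probability that each exactly sampled summary falls in the controlled ball of radius $3\epsilon_\alpha/2$ about a fixed reference $s_\alpha(\mu_0)$, while $\kappa_1$ lower-bounds the probability that the shared prior-predictive proposal $s_\alpha(\tilde\mu)$ falls in the small ball of radius $\epsilon_\alpha/4$ about that reference. The tolerances $\epsilon_\alpha/4$ and $3\epsilon_\alpha/2$ are tuned precisely so that, after a triangle inequality involving the ABC threshold $\epsilon_\alpha$, both acceptance tests succeed on the corresponding event, which thus has probability at least $\kappa_1\kappa_2$ whatever $\alpha$ and $\tilde\alpha$ are.

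Once the $\alpha$ coordinates agree, I would continue to drive both chains with identical randomness: the exact draws are then sampled from the same conditional $\pi(\cdot\mid\alpha^c,x^\star)$ and can be taken equal, the acceptance decisions coincide, and the chains agree on $\mu$ from the first subsequent accepted move onward, after which they remain coalesced forever. The standard coupling inequality $\Vert P^n(z,\cdot)-\nu_\varepsilon\Vert_{TV}\le \mathrm{pr}(\text{not yet coalesced})$ then delivers convergence at geometric rate $1-\kappa_1\kappa_2$, and existence and uniqueness of the invariant law $\nu_\varepsilon$ follow, either as the fixed point supplied by the contraction/Banach argument already used for Theorem \ref{THgen} or directly from the implied Doeblin-type minorisation.

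I expect the main obstacle to be the bookkeeping around the $\mu$ coordinate combined with the exact calibration of the radii. Because the two exact draws come from genuinely different conditionals, $\alpha$ and $\mu$ do not coalesce within a single step, so one must argue that sharing $\alpha$ propagates to full coalescence without degrading the geometric rate. In parallel, one has to verify that the specific constants $\epsilon_\alpha/4$ and $3\epsilon_\alpha/2$ are exactly what is required for the two acceptance tests to close under the triangle inequality, while keeping every lower bound \emph{uniform} over a possibly non-compact parameter space; this uniformity is precisely what the positivity of $\kappa_1$ and $\kappa_2$ buys, and replacing the single reference point $\mu_0$ by a convex set, as in Theorem \ref{thm:k1k2k3}, is the natural route to relaxing these requirements.
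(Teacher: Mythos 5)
Your overall architecture is the right one and is essentially the paper's: work with the marginal chain in $\alpha$ (which is indeed Markov here, since $\mu^c$ is redrawn from $\pi(\cdot\mid\alpha^{(i-1)},x^\star)$ at every step), couple two copies by sharing the proposal randomness $\alpha^c\sim\pi$ and $\mu^c\sim\pi(\cdot\mid\alpha^c)$ while drawing the exact conditional samples separately, lower-bound the probability that both acceptance tests pass simultaneously, and conclude by a contraction/Banach or Doeblin argument. The step where your proposal breaks down is exactly the one you flag as the main obstacle: the calibration of the radii. You place \emph{both} exact summaries in a ball of radius $3\epsilon_\alpha/2$ around a fixed reference $s_\alpha(\mu_0)$ and the shared proposal summary in a ball of radius $\epsilon_\alpha/4$ around the \emph{same} reference. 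The triangle inequality then only gives a separation of at most $3\epsilon_\alpha/2+\epsilon_\alpha/4=7\epsilon_\alpha/4>\epsilon_\alpha$ between the proposal summary and each exact summary, so the two acceptance tests ($<\epsilon_\alpha$) need not pass and the event you construct does not force coalescence. Moreover, invoking $\kappa_2$ once per chain would produce a rate $1-\kappa_1\kappa_2^2$, not the claimed $1-\kappa_1\kappa_2$, and the fixed reference $\mu_0$ appears nowhere in the hypotheses.

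The correct geometry, which the hypotheses are tailored to, uses no fixed reference: condition on the first exact draw $\mu$, use $\kappa_2$ \emph{once} to guarantee that the second exact summary $s_\alpha(\tilde\mu)$ lands within $3\epsilon_\alpha/2$ of $s_\alpha(\mu)$, and then use $\kappa_1$ for the ball of radius $\epsilon_\alpha/4$ centred at the \emph{midpoint} $\{s_\alpha(\mu)+s_\alpha(\tilde\mu)\}/2$. Any point of that small ball is within $\epsilon_\alpha/4+3\epsilon_\alpha/4=\epsilon_\alpha$ of each of $s_\alpha(\mu)$ and $s_\alpha(\tilde\mu)$, so the small ball is contained in the intersection of the two acceptance regions and the shared proposal is accepted by both chains with probability at least $\kappa_1\kappa_2$, uniformly in the starting states. (Strictly speaking $\kappa_1$ is an infimum over balls centred at points of the form $s_\alpha(\mu)$, and the midpoint need not be of that form; this is the imprecision that the convexity assumption on $C$ in Theorem \ref{thm:k1k2k3} is designed to repair.) The rest of your argument --- that once the $\alpha$ coordinates agree the two chains can be driven identically and $\mu$ coalesces at the next accepted move without degrading the rate --- is sound and matches the paper's observation that $\alpha=\tilde\alpha$ implies $\alpha'=\tilde\alpha'$ under the coupling.
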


\begin{proof}

The technique of the proof is essentially similar to that of Theorem \ref{THgen}. Let
$\nu$ and $\tilde{\nu}$ be two distributions over $\mathcal{A}$. We describe the
evolution of $\alpha,\tilde{\alpha}$ into $\alpha',\tilde{\alpha}'$, though the kernel $\tilde{Q}$. We denote
$\mu,\tilde{\mu}$ the transitory second parameter.

\begin{algorithm*}[h!]
\KwIn{$(\alpha,\tilde{\alpha})$.}
\KwOut{$(\alpha',\tilde{\alpha}')$.}
\eIf{$\alpha \neq \tilde{\alpha}$}{$(\mu,\tilde{\mu})\sim\pi(\cdot \mid \alpha,x)\otimes \pi(\cdot \mid \tilde{\alpha},x)$}{$\mu=\tilde{\mu} \sim \pi(\mu \mid \alpha, x)$}

$\alpha^c \sim \pi$

$\mu^c \sim \pi(\cdot \mid \alpha^c)$;

\eIf{$\eta\{s_\alpha(\mu),s_\alpha(\mu^c)\} \leq \varepsilon_\alpha $}{$\alpha' \leftarrow \alpha^c$ }{$\alpha'\leftarrow\alpha$}
\eIf{$\eta\{s_\alpha(\tilde{\mu}),s_\alpha(\mu^c)\} \leq \varepsilon_\alpha $}{$\tilde{\alpha}' \leftarrow \alpha^c$}{ $\tilde{\alpha}'\leftarrow\tilde{\alpha}$}
\caption{Coupling procedure}
\end{algorithm*}

This process defines a transition kernel $\tilde{Q}$ for two coupled chains. As in the previous proofs, if $\alpha=\tilde{\alpha}$ then $\alpha'=\tilde{\alpha}'$.

Let  $(\alpha,\tilde{\alpha}) \sim \xi$, an optimal coupling between $\nu$ and $\tilde{\nu}$. Then,
\begin{align*}
\begin{split}
\Vert Q\nu - Q\tilde{\nu} \Vert_{TV}  &= \frac{1}{2} \inf_{\gamma \in \Gamma(Q\nu,Q\tilde{\nu})}\mathrm{pr}(\alpha' \neq \tilde{\alpha}' \mid (\alpha',\tilde{\alpha}') \sim \gamma) \\
&\leq \frac{1}{2} \mathrm{pr}_{\xi}(\alpha' \neq \tilde{\alpha}' \mid \alpha = \tilde{\alpha}, (\alpha',\tilde{\alpha}') \sim \tilde{Q}\xi)\mathrm{pr}_{\xi}(\alpha = \tilde{\alpha}) \\
& \hspace{1cm}+ \frac{1}{2} \mathrm{pr}_{\xi}(\alpha' \neq \tilde{\alpha}' \mid \alpha \neq \tilde{\alpha}, (\alpha',\tilde{\alpha}') \sim \tilde{Q}\xi)\mathrm{pr}_{\xi}(\alpha\neq \tilde{\alpha}) \\
&\leq \frac{1}{2}\mathrm{pr}_{\xi}(\alpha' \neq \tilde{\alpha}' \mid \alpha \neq \tilde{\alpha}, (\alpha',\tilde{\alpha}') \sim \tilde{Q}\xi )\mathrm{pr}_{\xi}(\alpha\neq \tilde{\alpha})\\
& \leq \Vert \nu - \tilde{\nu} \Vert_{TV}\mathrm{pr}_{\xi} (\alpha' \neq \tilde{\alpha}' \mid \alpha \neq \tilde{\alpha}, (\alpha',\tilde{\alpha}') \sim \tilde{Q}\xi ).\\
\end{split}
\end{align*}
It is sufficient to find a uniform upper bound on $\mathrm{pr}_{\nu,\tilde{\nu}}(\alpha' \neq \tilde{\alpha}' \mid \alpha \neq \tilde{\alpha},(\alpha',\tilde{\alpha}') \sim \tilde{Q}\xi) = \int \pi\{s_\alpha(\mu^c) \notin B_{s_\alpha(\mu),\varepsilon_\alpha} \cap B_{s_\alpha(\tilde{\mu}),\varepsilon_\alpha} \} \pi(\mu \mid \alpha,x) \pi(\tilde{\mu} \mid \tilde{\alpha},x)\nu(\alpha)\tilde{\nu}(\tilde{\alpha})\,\mathrm{d} \tilde{\alpha}\mathrm{d}\mu \mathrm{d}\tilde{\mu}$. Notice that we can choose our coupling $\xi$ such that conditionally on $\alpha \neq \tilde{\alpha}$ the marginals are independent.
\begin{align*}
\mathrm{pr}_{\nu,\tilde{\nu}}(\alpha' \neq \tilde{\alpha}' \mid \alpha \neq \tilde{\alpha})=& \int \pi\left\lbrace(B_{s_\alpha(\mu),\varepsilon_\alpha} \cap B_{s_\alpha(\tilde{\mu}),\varepsilon_\alpha})^c\right\rbrace \pi(\mu \mid \alpha,x) \\
&\hspace{1cm} \times \pi(\tilde{\mu} \mid \tilde{\alpha},x)\nu(\alpha)\tilde{\nu}(\tilde{\alpha})\,\mathrm{d}\alpha \mathrm{d} \tilde{\alpha}\mathrm{d}\mu \mathrm{d}\tilde{\mu}\\
=& \int \pi\left\lbrace(B_{s_\alpha(\mu),\varepsilon_\alpha} \cap B_{s_\alpha(\tilde{\mu}),\varepsilon_\alpha})^c\right\rbrace \pi(\mu \mid \alpha,x) \\
&\hspace{1cm}\times \pi(\tilde{\mu} \mid \tilde{\alpha},x)\nu(\alpha)\tilde{\nu}(\tilde{\alpha})\mathbf{1}_{\{\eta\{s_\alpha(\mu),s_\alpha(\tilde{\mu})\}\leq 3\varepsilon_\alpha/2\} }\,\mathrm{d}\alpha \mathrm{d} \tilde{\alpha}\mathrm{d}\mu \mathrm{d}\tilde{\mu}\\
&+ \int \pi\left\lbrace(B_{s_\alpha(\mu),\varepsilon_\alpha} \cap B_{s_\alpha(\tilde{\mu}),\varepsilon_\alpha})^c\right\rbrace \pi(\mu \mid \alpha,x) \\
&\hspace{1cm}\times \pi(\tilde{\mu} \mid \tilde{\alpha},x)\nu(\alpha)\tilde{\nu}(\tilde{\alpha})\mathbf{1}_{\{\eta\{s_\alpha(\mu),s_\alpha(\tilde{\mu})\}> 3\varepsilon_\alpha/2\} }\,\mathrm{d}\alpha \mathrm{d} \tilde{\alpha}\mathrm{d}\mu \mathrm{d}\tilde{\mu} \\
=& I_1 + I_2.
\end{align*}
We now bound $I_1$ and $I_2$.
\begin{align*}
I_1 =&\int \pi\left\lbrace(B_{s_\alpha(\mu),\varepsilon_\alpha} \cap B_{s_\alpha(\tilde{\mu}),\varepsilon_\alpha})^c\right\rbrace \pi(\mu \mid \alpha,x)\pi(\tilde{\mu} \mid \tilde{\alpha},x)\nu(\alpha)\tilde{\nu}(\tilde{\alpha})\mathbf{1}_{\{\eta\{s_\alpha(\mu),s_\alpha(\tilde{\mu})\}\leq 3\varepsilon_\alpha/2\}}\mathrm{d} \tilde{\alpha} \mathrm{d}\alpha \mathrm{d}\mu \mathrm{d}\tilde{\mu} \\
\leq& \int \pi(B_{\frac{s_\alpha(\mu)+s_\alpha(\tilde{\mu})}{2},\varepsilon_\alpha/4}^c) \pi(\mu \mid \alpha,x) \pi(\tilde{\mu} \mid \tilde{\alpha},x)\nu(\alpha)\mathbf{1}_{\{\eta\{s_\alpha(\mu),s_\alpha(\tilde{\mu})\}\leq 3\varepsilon_\alpha/2\}}\nu(\tilde{\alpha})\mathrm{d}\alpha \mathrm{d} \tilde{\alpha}\mathrm{d}\mu \mathrm{d}\tilde{\mu}\\
\leq& \mathrm{pr}_{\nu,\tilde{\nu}}\{\eta\{s_\alpha(\mu),s_\alpha(\tilde{\mu})\}\leq 3\varepsilon_\alpha/2\}\nonumber\\
&\hspace{1cm} - \int \pi(B_{\frac{s_\alpha(\mu) +s_\alpha(\tilde{\mu})}{2},\varepsilon_\alpha/2}) \pi\{s_\alpha(\tilde{\mu}) \in B_{s_\alpha(\mu),3\varepsilon_\alpha/2} \mid \tilde{\alpha},x \}\pi(\mu \mid \alpha,x)\nu(\alpha)\tilde{\nu}(\tilde{\alpha})\mathrm{d}\alpha \mathrm{d} \tilde{\alpha}\mathrm{d}\mu \mathrm{d}\tilde{\mu} \\
\leq& \mathrm{pr}_{\nu,\tilde{\nu}}\{\eta\{s_\alpha(\mu),s_\alpha(\tilde{\mu})\}\leq 3\eta/2\} \\
& \hspace{1cm} -\kappa_1 \int \pi\{s_\alpha(\tilde{\mu}) \in B_{s_\alpha(\mu),3\varepsilon_\alpha/2} \mid \tilde{\alpha},x \}\pi(\mu \mid \alpha,x)\nu(\alpha)\tilde{\nu}(\tilde{\alpha})\mathrm{d}\alpha \mathrm{d} \tilde{\alpha}\mathrm{d}\mu\\
\leq& \mathrm{pr}_{\nu,\tilde{\nu}}\{\eta\{s_\alpha(\mu),s_\alpha(\tilde{\mu})\}\leq 3\varepsilon_\alpha/2\} - \kappa_1\kappa_2 \\
& \nonumber\\
I_2 =& \int \pi\left\lbrace(B_{s_\alpha(\mu),\varepsilon_\alpha} \cap B_{s_\alpha(\tilde{\mu}),\varepsilon_\alpha})^c\right\rbrace \pi(\mu \mid \alpha,x)\pi(\tilde{\mu} \mid \tilde{\alpha},x)\nu(\alpha)\tilde{\nu}(\tilde{\alpha})\mathbf{1}_{\{ \eta\{s_\alpha(\mu),s_\alpha(\tilde{\mu})\}>3\varepsilon_\alpha/2\} }\mathrm{d}\alpha \mathrm{d} \tilde{\alpha}\mathrm{d}\mu \mathrm{d}\tilde{\mu} \\
\leq& \pi(\mu \mid \alpha,x) \pi(\tilde{\mu} \mid \tilde{\alpha},x)\nu(\alpha)\mathbf{1}_{\eta\{s_\alpha(\mu),s_\alpha(\tilde{\mu})\}>3\varepsilon_\eta/2}\tilde{\nu}(\tilde{\alpha})\mathrm{d}\alpha \mathrm{d} \tilde{\alpha}\mathrm{d}\mu \mathrm{d}\tilde{\mu} \nonumber \\
\leq& \mathrm{pr}_{\nu,\tilde{\nu}}\{\eta\{s_\alpha(\mu),s_\alpha(\tilde{\mu})\}>3\varepsilon_\eta/2\}
\end{align*}
Finally, putting both inequalities together, we have $I_1 + I_2 \leq 1-\kappa_1\kappa_2 $, with $\kappa_1\kappa_2>0$ and
\begin{align*}
\Vert Q\nu - Q\tilde{\nu} \Vert_{TV} &\leq (1-\kappa_1\kappa_2) \Vert \nu - \tilde{\nu} \Vert_{TV}.
\end{align*}
The conclusion is the same as in the  proof of Theorem \ref{THgen}.

\end{proof}

\begin{remark}
In the proof, when we describe the coupling kernel, we generate $\mu$ and $\tilde{\mu}$ independently if $\alpha$ and $\tilde{\alpha}$ are different and as a single $\mu$ if they are equal. This is a particular coupling of the distributions $\pi(\cdot \mid \alpha,x) $ and $\pi( \cdot \mid \tilde{\alpha},x)$. Here, the link between Theorem \ref{THgen} and this one becomes clear, as we make the coupling explicit toward reaching a bound in total variation.\end{remark}
%
%
%
%

We now relax the assumptions. First, we remove the assumption that $\mathcal{A}$ is compact: the resulting theorem is Theorem \ref{thm:k1k2k3}.
\begin{proof}
With the same notations as before, we merely need to find a lower bound :
\[ I_3=\int \mathrm{pr}(\mu^c \in B_{\frac{s_\alpha(\mu)+s_\alpha(\tilde{\mu})}{2},\varepsilon_\alpha/2})\pi(\mu \mid \alpha,x)\pi(\tilde{\mu} \mid \tilde{\alpha},x)\mu(\alpha)\nu(\tilde{\alpha})\mathbf{1}_{\{\eta\{s_\alpha(\mu),s_\alpha(\tilde{\mu})\}>3\varepsilon_\alpha/2\}}\mathrm{d}\alpha \mathrm{d} \tilde{\alpha}\mathrm{d}\mu \mathrm{d}\tilde{\mu}   \]
\begin{align*}
I_3 \geq& \int \mathrm{pr}(B_{\frac{s_\alpha(\mu)+s_\alpha(\tilde{\mu})}{2},\varepsilon_\alpha/2})\pi(\mu \mid \alpha,x)\pi(\tilde{\mu} \mid \tilde{\alpha},x)\nu(\alpha)\tilde{\nu}(\tilde{\alpha})\\
&\hspace{1cm}\times\mathbf{1}_{\{s_\alpha(\mu)+s_\alpha(\tilde{\mu})\}/2 \in C\}}\mathbf{1}_{\{\eta\{s_\alpha(\mu),s_\alpha(\tilde{\mu})\}>3\varepsilon_\alpha/2\}}\mathrm{d}\alpha \mathrm{d} \tilde{\alpha}\mathrm{d}\mu \mathrm{d}\tilde{\mu}\\
&+\int \mathrm{pr}(\mu \in B_{\frac{s_\alpha(\mu)+s_\alpha(\tilde{\mu})}{2},\varepsilon_\alpha/2})\pi(\mu \mid \alpha,x)\pi(\tilde{\mu} \mid \tilde{\alpha},x)\nu(\alpha)\tilde{\nu}(\tilde{\alpha}) \\
&\hspace{1cm}\times\mathbf{1}_{\{s_\alpha(\mu)+s_\alpha(\tilde{\mu})\}/2 \notin C\}}\mathbf{1}_{\{\eta\{s_\alpha(\mu),s_\alpha(\tilde{\mu})\}>3\varepsilon_\alpha/2\}}\mathrm{d}\alpha \mathrm{d} \tilde{\alpha}\mathrm{d}\mu \mathrm{d}\tilde{\mu}\\
\geq& \int \mathrm{pr}(\mu \in B_{\frac{s_\alpha(\mu)+s_\alpha(\tilde{\mu})}{2},\varepsilon_\alpha/2})\pi(\mu \mid \alpha,x)\pi(\tilde{\mu} \mid \tilde{\alpha},x)\nu(\alpha)\tilde{\nu}(\tilde{\alpha}) \\
&\hspace{1cm}\times\mathbf{1}_{\{s_\alpha(\mu)\in C\}}\mathbf{1}_{\{s_\alpha(\tilde{\mu}) \in C\}}\mathbf{1}_{\{\eta\{s_\alpha(\mu),s_\alpha(\tilde{\mu})\{>3\varepsilon_\alpha/2\}}\mathrm{d}\alpha \mathrm{d} \tilde{\alpha}\mathrm{d}\mu \mathrm{d}\tilde{\mu}\\
\geq& \kappa_1\kappa_2\kappa_3^2
\end{align*}
as the convexity of $C$ ensures that $\mathbf{1}_{\{\{s_\alpha(\mu)+s_\alpha(\tilde{\mu})\}/2 \in C\} } \geq \mathbf{1}_{s_\alpha(\mu) \in C}\mathbf{1}_{s_\alpha(\tilde{\mu}) \in C}$.
\end{proof}

We can remove the assumption that $\mathcal{B}$ is compact, by imposing a different assumption:

\begin{theorem}
\label{TH4}
Assume that there exist $\mathcal{H} \subset \mathcal{P}(\mathcal{A})$ stable by $Q$ and $A \subset \mathcal{A}$ and $C \subset s_\alpha(\mathcal{B})$ with finite positive measure such that:
\begin{align*}
\kappa_1&=\inf_{s_\alpha(\mu) \in C} \pi(B_{s_\alpha(\mu),\varepsilon_\alpha/4})>0; \\
\kappa_2&=\inf_{\alpha \in A} \inf_{s_\alpha(\mu) \in C} \pi(B_{s_\alpha(\mu),3\varepsilon_\alpha/2} \mid \alpha,x)>0;\\
\kappa_3 &= \inf_{\alpha \in A} \pi\{s_\alpha(\mu) \in C \mid \alpha,x \} > 0;\\
\kappa_4&=\inf_{\nu \in \mathcal{H}} \nu(A)>0.
\end{align*}
Then, the Markov chain associated with Algorithm \ref{algohierar} enjoys an
invariant distribution and it converges geometrically to this invariant measure
with rate $1-\kappa_1\kappa_2\kappa_3^2\kappa_4^2 $.
\end{theorem}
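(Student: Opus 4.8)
The plan is to follow the contraction-by-coupling scheme of the two preceding theorems, now restricted to the stable family $\mathcal{H}$, and to let the new constant $\kappa_4$ absorb the fact that the coalescence bound is only available on the set $A$. First I would show that $Q$ is a contraction on $(\mathcal{H},\Vert\cdot\Vert_{TV})$. Given $\nu,\tilde{\nu}\in\mathcal{H}$, I take the maximal coupling $\xi$ of their $\alpha$-marginals, arranged as in the previous proof so that $\mu$ and $\tilde{\mu}$ are drawn independently when $\alpha\neq\tilde{\alpha}$ and identically when $\alpha=\tilde{\alpha}$, and I reuse the same proposal pair $(\alpha^c,\mu^c)$ in both chains. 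Since $\alpha=\tilde{\alpha}$ forces $\alpha'=\tilde{\alpha}'$, the same manipulation as before gives
\[
\Vert Q\nu - Q\tilde{\nu}\Vert_{TV}\leq \Vert\nu-\tilde{\nu}\Vert_{TV}\,\mathrm{pr}(\alpha'\neq\tilde{\alpha}'\mid\alpha\neq\tilde{\alpha}),
\]
so it suffices to bound the conditional failure probability away from $1$.

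Next I would lower bound the one-step coalescence probability exactly as in Theorem \ref{thm:k1k2k3}. Both chains collapse onto the common proposal $\alpha^c$ as soon as $s_\alpha(\mu^c)$ lands in $B_{s_\alpha(\mu),\varepsilon_\alpha}\cap B_{s_\alpha(\tilde{\mu}),\varepsilon_\alpha}$, and the midpoint inclusion shows this intersection contains $B_{(s_\alpha(\mu)+s_\alpha(\tilde{\mu}))/2,\varepsilon_\alpha/4}$ whenever the two centres are within $3\varepsilon_\alpha/2$. Inserting the indicators $\mathbf{1}_{s_\alpha(\mu)\in C}$ and $\mathbf{1}_{s_\alpha(\tilde{\mu})\in C}$ and using convexity of $C$ to keep the midpoint in $C$, the three-factor estimate reproduces a coalescence probability of at least $\kappa_1\kappa_2\kappa_3^2$ — but, crucially, only when the conditioning values $\alpha$ and $\tilde{\alpha}$ both lie in $A$, because here $\kappa_2$ and $\kappa_3$ are infima over $\alpha\in A$ rather than over the whole space.

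The new ingredient is therefore to control the probability that $\alpha\in A$ and $\tilde{\alpha}\in A$. Writing the failure probability as an integral against the two conditionally independent $\alpha$-marginals and carrying the indicators $\mathbf{1}_{\alpha\in A}$ and $\mathbf{1}_{\tilde{\alpha}\in A}$ through the $\kappa_3$ step, each chain now contributes $\int\pi\{s_\alpha(\mu)\in C\mid\alpha,x\}\,\mathbf{1}_{\alpha\in A}\,\nu(\alpha)\,\mathrm{d}\alpha\geq\kappa_3\,\nu(A)$. Since $\nu,\tilde{\nu}\in\mathcal{H}$ and $\mathcal{H}$ is stable by $Q$, every marginal arising along the run satisfies $\nu(A)\geq\kappa_4$, so the two copies jointly hit $A\times A$ with probability at least $\kappa_4^2$ and
\[
\mathrm{pr}(\alpha'\neq\tilde{\alpha}'\mid\alpha\neq\tilde{\alpha})\leq 1-\kappa_1\kappa_2\kappa_3^2\kappa_4^2.
\]
Hence $Q$ contracts on $\mathcal{H}$ with rate $1-\kappa_1\kappa_2\kappa_3^2\kappa_4^2$, the iterates $(Q^n\nu)$ are Cauchy in the complete space $\mathcal{P}(\mathcal{A})$, and they converge geometrically to an invariant distribution $\nu_\varepsilon$, exactly as concluded in the proof of Theorem \ref{THgen}.

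The step I expect to be delicate is precisely this $\kappa_4$ bookkeeping. In Theorem \ref{thm:k1k2k3} the bound $\kappa_1\kappa_2\kappa_3^2$ was uniform in $\alpha$, so the law of $(\alpha,\tilde{\alpha})$ entering the integral was irrelevant; here the estimate is genuinely local to $A$, so I must make sure that integrating against the actual conditional law of $(\alpha,\tilde{\alpha})$ given $\alpha\neq\tilde{\alpha}$ — the off-diagonal part of the maximal coupling, together with the conditional independence of its two marginals — really delivers the clean factor $\kappa_4^2$ and not a smaller excess-mass quantity. This is exactly the point at which stability of $\mathcal{H}$ under $Q$ is indispensable, since it is what guarantees that the lower bound $\nu(A)\geq\kappa_4$ persists at every iteration and for the limiting distribution.
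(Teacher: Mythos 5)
Your proposal follows the paper's proof of Theorem \ref{TH4} essentially step for step: the same coupling kernel as in the preceding hierarchical theorems, the same restriction of the coalescence integral to the event $\{s_\alpha(\mu)\in C\}\cap\{s_\alpha(\tilde{\mu})\in C\}\cap\{\alpha\in A\}\cap\{\tilde{\alpha}\in A\}$ yielding the factor $\kappa_1\kappa_2\kappa_3^2$, and the same appeal to stability of $\mathcal{H}$ under $Q$ so that $\nu(A)\geq\kappa_4$ persists along the iterates and contributes $\kappa_4^2$. The one point you flag as delicate --- whether integrating $\mathbf{1}_{\alpha\in A}\mathbf{1}_{\tilde{\alpha}\in A}$ against the conditional law of $(\alpha,\tilde{\alpha})$ given $\alpha\neq\tilde{\alpha}$ really yields $\kappa_4^2$ rather than a smaller excess-mass quantity --- is not resolved in the paper either: its displayed bound integrates directly against the product $\nu(\alpha)\tilde{\nu}(\tilde{\alpha})$ of the full marginals, which is the off-diagonal law of the independent coupling but not of the maximal coupling needed to extract the factor $\Vert\nu-\tilde{\nu}\Vert_{TV}$ in the first place; so your reservation applies equally to the original argument.
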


\begin{proof}

Similarly to previous proofs, we have
\begin{align*}
I_3 &\geq \int \mathrm{pr}(\mu \in B_{\frac{s_\alpha(\mu)+s_\alpha(\tilde{\mu})}{2},\varepsilon_\alpha/2})\pi(\mu \mid \alpha,x)\pi(\tilde{\mu} \mid \tilde{\alpha},x)\nu(\alpha)\tilde{\nu}(\tilde{\alpha})\\
&\hspace{1cm}\times\mathbf{1}_{s_\alpha(\mu)\in C}\mathbf{1}_{s_\alpha(\tilde{\mu}) \in C}\mathbf{1}_{\{\eta\{s_\alpha(\mu),s_\alpha(\tilde{\mu})\}>3\varepsilon_\alpha/2\}}\mathrm{d}\alpha \mathrm{d} \tilde{\alpha}\mathrm{d}\mu \mathrm{d}\tilde{\mu}\\
&\geq \int \kappa_1\kappa_2\kappa_3^2 \nu(\alpha)\tilde{\nu}(\tilde{\alpha})\mathbf{1}_{\alpha \in A} \mathbf{1}_{\tilde{\alpha} \in A} \\
&\geq \kappa_1\kappa_2\kappa_3^2\kappa_4^2\,. \\
\end{align*}
\end{proof}

\section{Supplementary material: Counter-example to Theorem \ref{THgen}}\label{sec:counter}

In this section, we give a simple example where the assumptions of Theorem \ref{THgen} are not verified and where \ABCG fails (whereas Vanilla ABC does not).

Take a single observation from a mixture of two uniforms, with parameterized by $(\theta_1, \theta_2)$:
\[x\sim \frac12 \U(\theta_1,\theta_1+1)+\frac12 \U(\theta_2,\theta_2+1).\]
For the numerical applications, we shall use the realization $x^\star=5$. Consider the prior distribution 
 \[(\theta_1,\theta_2)\sim\U(\mathcal A) 
 \qquad
  \mathcal A=\left\{ (\theta_1,\theta_2):0\leq\theta_1,\theta_2\leq10 \text{ and } |\theta_1-\theta_2| >2    \right\} \]
  
  The exact posterior is uniform over the set 
  \[ \left\{ \left([0,10]\times[x-1,x]\right)\cup \left([x-1,x]\times[0,10]\right) \right\}\cap\mathcal A. \]
  
  The prior and exact posterior are shown in Figure \ref{fig:counter}, as well as the outcome of Vanilla ABC and \ABCG with $\epsilon=\epsilon_1=\epsilon_2=0.5$.  Vanilla ABC leads to a reasonable approximation of the posterior, but \ABCG misses half of the posterior. Other realizations of \ABCG lead to the symmetric pseudo-posterior, with the roles of $\theta_1$ and $\theta_2$ swapped. This is a situation where the \ABCG does not converge to a unique stationary distribution $\nu_\epsilon$ (as soon as $\epsilon_1,\epsilon_2\leq\frac12$).
  
For Theorem \ref{THgen} to apply, we would need
\[
\sup_{\theta_{1},\tilde\theta_{1}} \Vert \pi_{\varepsilon_{2}}\{ \cdot \mid s_{2}(x^\star,  \theta_{1})\} - \pi_{\varepsilon_{2}}\{ \cdot \mid s_{2}(x^\star,  \tilde\theta_{1})\} \Vert_{TV} = \kappa < \frac12.
\]
  
  Consider $\theta_1=1$ and $\tilde\theta_1=5$. Then $\pi_{\varepsilon_{2}}\{ \cdot \mid s_{2}(x^\star,  \theta_{1})\}$ has support $[3.5, 5.5]$ and $\pi_{\varepsilon_{2}}\{ \cdot \mid s_{2}(x^\star,  \tilde\theta_{1})\}$ has support $[0,3]\cup[7,10]$. Since the two supports are disjoint, the distance in total variation between the two distributions is 1, and Theorem \ref{THgen} does not apply. Intuitively, the Markov chain does not converge because it is not irreducible.
  
  \begin{figure}
\includegraphics[scale=0.4]{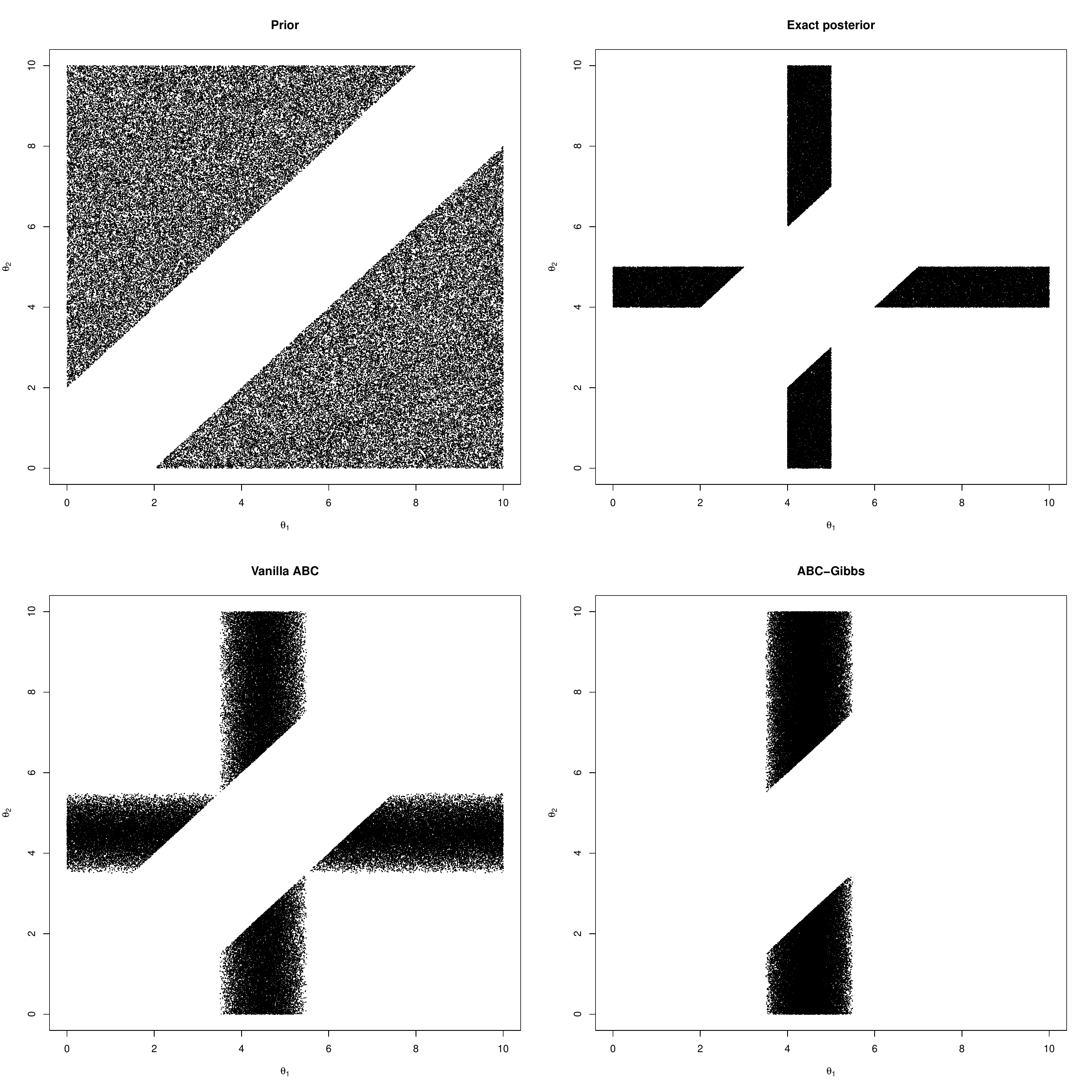}

\caption{\label{fig:counter} Illustration of the mixture of uniforms counter-example from Section \ref{sec:counter}, with $x^\star=5$ and $\epsilon=0.5$. Top left: prior distribution. Top right: Exact posterior. Bottom left: Vanilla ABC posterior. Bottom right: one possible outcome of \ABCG. The Vanilla ABC is a reasonable approximation of the exact posterior, but the \ABCG outcome only covers half of the support.}
\end{figure}


\begin{thebibliography}{31}
\expandafter\ifx\csname natexlab\endcsname\relax\def\natexlab#1{#1}\fi

\bibitem[{Arnold \& Press(1989)}]{arnold1989compatible}
\textsc{Arnold, B.~C.} \& \textsc{Press, S.~J.} (1989).
\newblock Compatible conditional distributions.
\newblock \textit{Journal of the American Statistical Association} \textbf{84},
  152--156.

\bibitem[{Beaumont et~al.(2009)Beaumont, Cornuet, Marin \&
  Robert}]{beaumont2009}
\textsc{Beaumont, M.~A.}, \textsc{Cornuet, J.-M.}, \textsc{Marin, J.-M.} \&
  \textsc{Robert, C.~P.} (2009).
\newblock Adaptive approximate {B}ayesian computation.
\newblock \textit{Biometrika} \textbf{96}, 983–990.

\bibitem[{Beaumont et~al.(2002)Beaumont, Zhang \& Balding}]{beaumont2002}
\textsc{Beaumont, M.~A.}, \textsc{Zhang, W.} \& \textsc{Balding, D.~J.} (2002).
\newblock Approximate {B}ayesian {C}omputation in {P}opulation {G}enetics.
\newblock \textit{Genetics} \textbf{162}, 2025--2035.

\bibitem[{Carlin \& Louis(1996)}]{carlin:louis:1996}
\textsc{Carlin, B.} \& \textsc{Louis, T.} (1996).
\newblock \textit{{B}ayes and Empirical {B}ayes Methods for Data Analysis}.
\newblock London: Chapman and Hall.

\bibitem[{Csill{\'e}ry et~al.(2010)Csill{\'e}ry, Blum, Gaggiotti \&
  Fran{\c{c}}ois}]{csillery2010}
\textsc{Csill{\'e}ry, K.}, \textsc{Blum, M. G.~B.}, \textsc{Gaggiotti, O.~E.}
  \& \textsc{Fran{\c{c}}ois, O.} (2010).
\newblock {Approximate {B}ayesian computation (ABC) in practice}.
\newblock \textit{Trends in Ecology \& Evolution} \textbf{25}, 410--418.

\bibitem[{Del~Moral et~al.(2012)Del~Moral, Doucet \&
  Jasra}]{delmoral:doucet:jasra:2012}
\textsc{Del~Moral, P.}, \textsc{Doucet, A.} \& \textsc{Jasra, A.} (2012).
\newblock An adaptive sequential {M}onte {C}arlo method for approximate
  {B}ayesian computation.
\newblock \textit{Statistics and Computing} \textbf{22}, 1009--1020.

\bibitem[{Fearnhead \& Prangle(2012)}]{fearnhead2012}
\textsc{Fearnhead, P.} \& \textsc{Prangle, D.} (2012).
\newblock Constructing summary statistics for approximate {B}ayesian
  computation: semi-automatic approximate {B}ayesian computation.
\newblock \textit{Journal of the Royal Statistical Society. Series B}
  \textbf{74}, 419--474.

\bibitem[{Frazier et~al.(2018)Frazier, Martin, Robert \&
  Rousseau}]{frazier2018}
\textsc{Frazier, D.}, \textsc{Martin, G.}, \textsc{Robert, C.} \&
  \textsc{Rousseau, J.} (2018).
\newblock Asymptotic properties of approximate {B}ayesian computation.
\newblock \textit{Biometrika} \textbf{105}, 593–607.

\bibitem[{Gelfand \& Smith(1990)}]{GelflandSmith}
\textsc{Gelfand, A.~E.} \& \textsc{Smith, A. F.~M.} (1990).
\newblock {Sampling-Based Approaches to Calculating Marginal Densities}.
\newblock \textit{Journal of the American Statistical Association} \textbf{85},
  398--409.

\bibitem[{Gelman et~al.(2013)Gelman, Carlin, Stern, Dunson, Vehtari \&
  Rubin}]{gelman2013bayesian}
\textsc{Gelman, A.}, \textsc{Carlin, J.~B.}, \textsc{Stern, H.~S.},
  \textsc{Dunson, D.~B.}, \textsc{Vehtari, A.} \& \textsc{Rubin, D.~B.} (2013).
\newblock \textit{{Bayesian Data Analysis, Third Edition}}.
\newblock Chapman \& Hall/CRC Texts in Statistical Science. Taylor \& Francis.

\bibitem[{Geman \& Geman(1984)}]{geman:1984}
\textsc{Geman, S.} \& \textsc{Geman, D.} (1984).
\newblock Stochastic relaxation, {G}ibbs distributions and the {B}ayesian
  restoration of images.
\newblock \textit{IEEE Trans. Pattern Anal. Mach. Intell.} \textbf{6},
  721--741.

\bibitem[{Kaipio \& Fox(2011)}]{Kaipio:2011}
\textsc{Kaipio, J.~P.} \& \textsc{Fox, C.} (2011).
\newblock {The {B}ayesian Framework for Inverse Problems in Heat Transfer}.
\newblock \textit{Heat Transfer Engineering} \textbf{32}, 718--753.

\bibitem[{Kousathanas et~al.(2016)Kousathanas, Leuenberger, Helfer, Quinodoz,
  Foll \& Wegmann}]{kousathanas:2016}
\textsc{Kousathanas, A.}, \textsc{Leuenberger, C.}, \textsc{Helfer, J.},
  \textsc{Quinodoz, M.}, \textsc{Foll, M.} \& \textsc{Wegmann, D.} (2016).
\newblock Likelihood-free inference in high-dimensional models.
\newblock \textit{Genetics} \textbf{203}, 893--904.

\bibitem[{Li \& Fearnhead(2018{\natexlab{a}})}]{li2018convergence}
\textsc{Li, W.} \& \textsc{Fearnhead, P.} (2018{\natexlab{a}}).
\newblock Convergence of regression-adjusted approximate {B}ayesian
  computation.
\newblock \textit{Biometrika} \textbf{105}, 301--318.

\bibitem[{Li \& Fearnhead(2018{\natexlab{b}})}]{li2018}
\textsc{Li, W.} \& \textsc{Fearnhead, P.} (2018{\natexlab{b}}).
\newblock On the asymptotic efficiency of approximate {B}ayesian computation
  estimators.
\newblock \textit{Biometrika} \textbf{105}, 285--299.

\bibitem[{Lindley \& Smith(1972)}]{lindley:smith:1972}
\textsc{Lindley, D.} \& \textsc{Smith, A.} (1972).
\newblock {B}ayes estimates for the linear model.
\newblock \textit{Journal of the Royal Statistical Society. Series B}
  \textbf{34}, 1--41.

\bibitem[{Lunn et~al.(2010)Lunn, Thomas, Best \&
  Spiegelhalter}]{lunn:bugs:2012}
\textsc{Lunn, D.}, \textsc{Thomas, A.}, \textsc{Best, N.} \&
  \textsc{Spiegelhalter, D.} (2010).
\newblock \textit{The {BUGS} Book: A Practical Introduction to {B}ayesian
  Analysis}.
\newblock New York: Chapman \& Hall/CRC Press.

\bibitem[{Marin et~al.(2012)Marin, Pudlo, Robert \& Ryder}]{Marin2012}
\textsc{Marin, J.-M.}, \textsc{Pudlo, P.}, \textsc{Robert, C.~P.} \&
  \textsc{Ryder, R.~J.} (2012).
\newblock {Approximate {B}ayesian computational methods}.
\newblock \textit{Statistics and Computing} \textbf{22}, 1167--1180.

\bibitem[{Marjoram et~al.(2003)Marjoram, Molitor, Plagnol \&
  Tavar{\'e}}]{marjoram2003}
\textsc{Marjoram, P.}, \textsc{Molitor, J.}, \textsc{Plagnol, V.} \&
  \textsc{Tavar{\'e}, S.} (2003).
\newblock Markov chain {M}onte {C}arlo without likelihoods.
\newblock \textit{Proceedings of the National Academy of Sciences}
  \textbf{100}, 15324--15328.

\bibitem[{Moores et~al.(2015)Moores, Drovandi, Mengersen \&
  Robert}]{moores2015}
\textsc{Moores, M.~T.}, \textsc{Drovandi, C.~C.}, \textsc{Mengersen, K.} \&
  \textsc{Robert, C.~P.} (2015).
\newblock Pre-processing for approximate {B}ayesian computation in image
  analysis.
\newblock \textit{Statistics and Computing} \textbf{25}, 23--33.

\bibitem[{Neal(2012)}]{neal2012efficient}
\textsc{Neal, P.} (2012).
\newblock {Efficient likelihood-free {B}ayesian Computation for household
  epidemics}.
\newblock \textit{Statistics and Computing} \textbf{22}, 1239--1256.

\bibitem[{Nott et~al.(2014)Nott, Fan, Marshall \& Sisson}]{nott2014approximate}
\textsc{Nott, D.~J.}, \textsc{Fan, Y.}, \textsc{Marshall, L.} \&
  \textsc{Sisson, S.} (2014).
\newblock Approximate {B}ayesian computation and {B}ayes’ linear analysis:
  toward high-dimensional {ABC}.
\newblock \textit{Journal of Computational and Graphical Statistics}
  \textbf{23}, 65--86.

\bibitem[{Prangle(2017)}]{prangle2017gk}
\textsc{Prangle, D.} (2017).
\newblock gk: An {R} package for the $g$-and-$k$ and generalised $g$-and-$h$
  distributions.
\newblock \textit{arXiv preprint arXiv:1706.06889} .

\bibitem[{Raynal et~al.(2019)Raynal, Marin, Pudlo, Ribatet, Robert \&
  Estoup}]{ABCRF}
\textsc{Raynal, L.}, \textsc{Marin, J.-M.}, \textsc{Pudlo, P.},
  \textsc{Ribatet, M.}, \textsc{Robert, C.~P.} \& \textsc{Estoup, A.} (2019).
\newblock {ABC random forests for {B}ayesian parameter inference}.
\newblock \textit{Bioinformatics} \textbf{35}, 1720–1728.

\bibitem[{Robert \& Casella(2004)}]{robert:casella:2004}
\textsc{Robert, C.} \& \textsc{Casella, G.} (2004).
\newblock \textit{{M}onte {C}arlo Statistical Methods}.
\newblock New York: Springer Verlag, 2nd ed.

\bibitem[{Rodrigues et~al.(2020)Rodrigues, Nott \&
  Sisson}]{rodrigues2019likelihood}
\textsc{Rodrigues, G.}, \textsc{Nott, D.~J.} \& \textsc{Sisson, S.} (2020).
\newblock Likelihood-free approximate {G}ibbs sampling.
\newblock \textit{Statistics and Computing} \textbf{30}, 1057--1073.

\bibitem[{Sisson et~al.(2018)Sisson, Fan \& Beaumont}]{handbook}
\textsc{Sisson, S.}, \textsc{Fan, Y.} \& \textsc{Beaumont, M.}, eds. (2018).
\newblock \textit{Handbook of {A}pproximate {B}ayesian {C}omputation}.
\newblock New York: Chapman and Hall/CRC.

\bibitem[{Tavar{\'e} et~al.(1997)Tavar{\'e}, Balding, Griffiths \&
  Donnelly}]{tavare1997}
\textsc{Tavar{\'e}, S.}, \textsc{Balding, D.~J.}, \textsc{Griffiths, R.~C.} \&
  \textsc{Donnelly, P.} (1997).
\newblock Inferring {C}oalescence {T}imes {F}rom {DNA} {S}equence {D}ata.
\newblock \textit{Genetics} \textbf{145}, 505--518.

\bibitem[{Toni et~al.(2008)Toni, Welch, Strelkowa, Ipsen \& Stumpf}]{toni2008}
\textsc{Toni, T.}, \textsc{Welch, D.}, \textsc{Strelkowa, N.}, \textsc{Ipsen,
  A.} \& \textsc{Stumpf, M. P.~H.} (2008).
\newblock {Approximate {B}ayesian computation scheme for parameter inference
  and model selection in dynamical systems}.
\newblock \textit{Journal of the Royal Society Interface} \textbf{6}, 187--202.

\bibitem[{Turner \& Van~Zandt(2013)}]{turner2013hierarchical}
\textsc{Turner, B.} \& \textsc{Van~Zandt, T.} (2013).
\newblock Hierarchical approximate {B}ayesian computation.
\newblock \textit{Psychometrika} \textbf{79}.

\bibitem[{Wilkinson et~al.(2011)Wilkinson, Steiper, Soligo, Martin, Yang \&
  Tavar{\'e}}]{wilkinson:2011}
\textsc{Wilkinson, R.}, \textsc{Steiper, M.}, \textsc{Soligo, C.},
  \textsc{Martin, R.}, \textsc{Yang, Z.} \& \textsc{Tavar{\'e}, S.} (2011).
\newblock Dating primate divergences through an integrated analysis of
  palaeontological and molecular data.
\newblock \textit{Systematic Biology} \textbf{60}, 16–31.

\end{thebibliography}

\begin{thebibliography}{5}
\expandafter\ifx\csname natexlab\endcsname\relax\def\natexlab#1{#1}\fi

\bibitem[{Del~Moral et~al.(2012)Del~Moral, Doucet \&
  Jasra}]{delmoral:doucet:jasra:2012}
\textsc{Del~Moral, P.}, \textsc{Doucet, A.} \& \textsc{Jasra, A.} (2012).
\newblock An adaptive sequential {M}onte {C}arlo method for approximate
  {B}ayesian computation.
\newblock \textit{Statistics and Computing} \textbf{22}, 1009--1020.

\bibitem[{Lazio et~al.(2008)Lazio, B.~Waltman, D.~Ghigo, Fiedler, S.~Foster \&
  K.~J.~Johnston}]{Laziodualfrequency}
\textsc{Lazio, T. J.~W.}, \textsc{B.~Waltman, E.}, \textsc{D.~Ghigo, F.},
  \textsc{Fiedler, R.}, \textsc{S.~Foster, R.} \& \textsc{K.~J.~Johnston, a.}
  (2008).
\newblock {A Dual-Frequency, Multiyear Monitoring Program of Compact Radio
  Sources}.
\newblock \textit{The Astrophysical Journal Supplement Series} \textbf{136},
  265.

\bibitem[{Meyn \& Tweedie(1993)}]{meyn:twee:1993}
\textsc{Meyn, S.~P.} \& \textsc{Tweedie, R.~L.} (1993).
\newblock \textit{Markov Chains and Stochastic Stability}.
\newblock London: Springer-Verlag.

\bibitem[{Nummelin(1978)}]{nummelin:1978}
\textsc{Nummelin, E.} (1978).
\newblock A splitting technique for {H}arris recurrent chains.
\newblock \textit{Zeit. Warsch. Verv. Gebiete} \textbf{43}, 309--318.

\bibitem[{Toni et~al.(2008)Toni, Welch, Strelkowa, Ipsen \& Stumpf}]{toni2008}
\textsc{Toni, T.}, \textsc{Welch, D.}, \textsc{Strelkowa, N.}, \textsc{Ipsen,
  A.} \& \textsc{Stumpf, M. P.~H.} (2008).
\newblock {Approximate {B}ayesian computation scheme for parameter inference
  and model selection in dynamical systems}.
\newblock \textit{Journal of the Royal Society Interface} \textbf{6}, 187--202.

\end{thebibliography}

\end{document}